\newcommand*\circled[2][1.6]{\tikz[baseline=(char.base)]{
    \node[shape=circle, draw, inner sep=1pt, 
        minimum height={\f@size*#1},] (char) {\vphantom{WAH1g}#2};}}
\newlist{legal}{enumerate}{10}
\setlist[legal]{label*=\arabic*.}
\newtheoremstyle{mystyle1}
  {.5pt}
  {.5pt}
  {}
  {}
  {\bfseries}
  {.}
  { }
  {\thmname{#1}\thmnumber{ #2}\thmnote{ (#3)}}
\theoremstyle{mystyle1}
\newtheorem{defn}{Definition}
\newtheorem{rem}{Remark}
\newtheorem{exmp}{Example}
\newtheoremstyle{mystyle}
  {.5pt}
  {.5pt}
  {\itshape}
  {}
  {\bfseries}
  {.}
  { }
  {\thmname{#1}\thmnumber{ #2}\thmnote{ (#3)}}
 \theoremstyle{mystyle}
\newtheorem{thm}{Theorem}
\newtheorem{lem}{Lemma}
\newtheorem{prop}{Proposition}
\setlist[legal]{label*=\arabic*.}
\theoremstyle{mystyle1}
\begin{document}

\title{Generalization of the Fano and Non-Fano Index Coding Instances}

\author{\IEEEauthorblockN{Arman Sharififar, Parastoo Sadeghi, Neda Aboutorab} 
}

\maketitle

\begin{abstract}
Matroid theory is fundamentally connected with index coding and network coding problems. In fact, the reliance of linear index coding and network coding rates on the characteristic of a field has been demonstrated by using the two well-known matroid instances, namely the Fano and non-Fano matroids. This established the insufficiency of linear coding, one of the fundamental theorems in both index coding and network coding. While the Fano matroid is linearly representable only over fields with characteristic two, the non-Fano instance is linearly representable only over fields with odd characteristic. 
For fields with arbitrary characteristic $p$,  the Fano and non-Fano matroids were extended to new classes of matroid instances whose linear representations are dependent on fields with characteristic $p$. However, these matroids have not been well appreciated nor cited in the fields of network coding and index coding. In this paper, we first reintroduce these matroids in a more structured way. Then, we provide a completely independent alternative proof with the main advantage of using only matrix manipulation rather than complex concepts in number theory and matroid theory. In this new proof, it is shown that while the class $p$-Fano matroid instances are linearly representable only over fields with characteristic $p$, the class $p$-non-Fano instances are representable over fields with any characteristic other than characteristic $p$.
Finally, following the properties of the class $p$-Fano and $p$-non-Fano matroid instances, we characterize two new classes of index coding
instances, respectively, referred to as the class $p$-Fano and $p$-non-Fano index coding, each with a size of $p^2 + 4p + 3$. It is proved that the
broadcast rate of the class $p$-Fano index coding instances
is achievable by a linear code over only fields with characteristic $p$. In contrast, it is shown that for the class $p$-non-Fano index coding instances, the broadcast rate is achievable by a linear code over fields with any characteristic other than characteristic $p$. 

\end{abstract}

\section{Introduction}
Founded by Whitney \cite{Whitney}, matroid theory is a discipline of mathematics in which the notions of dependence and independence relations are generalized. These concepts in matroid theory are closely related to the concepts in information theory such as entropy and the notions in linear algebraic coding theory including the dependent and independent linear vector spaces. Among the fields in coding theory, network coding and index coding are proven to be fundamentally connected to the matroid theory \cite{Dougherty2007,Rouayheb-matroid,Dougherty-matroid,Yin-network-matroid,Bassoli-survey,Sun-matroid,Rouayheb2010,Thomas,Arman-4-3}.

Network coding, started by the work of Ahlswede \textit{et al.} \cite{Ahlswede}, studies the communication problem in which a set of messages must be communicated from source nodes to destination nodes through intermediate nodes. Encoding messages at the intermediate nodes can increase the overall network throughput compared to simply routing the messages \cite{Ahlswede}. The basic connection between matroid theory and network coding was established in \cite{Dougherty2007}, where a mapping technique from any matroid instance to a network coding instance was presented. It was shown that the constraints on the column space of the matrix which is a linear representation of the matroid instance can equivalently be mapped to the column space of the global and local encoding matrices solving the corresponding network coding instance. 

Index coding, introduced by the work of Birk and Kol \cite{Birk1998,Birk2006}, models a broadcast communication system in which a single server must communicate a set of messages to a number of users. Each user demands specific messages and may have prior knowledge about other messages. Encoding the messages at the server can improve the overall broadcast rate rather than sending uncoded messages \cite{Birk2006}. In \cite{Rouayheb2010}, matroid theory was shown to be closely bonded with index coding by providing a systematic approach to reduce a matroid instance into an index coding instance. This method maps the constraints on the column space of the matrix which linearly represents a matroid instance to the column space of the encoding matrix solving the corresponding index coding instance. 

Due to the strong connection of matroid theory with network coding and index coding problems, a number of well-known matroid instances were used to prove some key limitations in both network and index coding. As an example, by mapping the non-Pappus matroid instance which does not have a scalar linear representation into its corresponding network and index coding instances, it can be shown that vector linear codes can outperform the optimal scalar linear coding in both network and index coding \cite{Rouayheb-matroid, Rouayheb2010}. Moreover, through the Vamos matroid instance which is not linearly representable, the necessity of non-Shannon inequalities was proved for both network coding and index coding \cite{Dougherty2007, Sun-non-shannon-exp-1} to establish a tighter performance bound. 
In addition, the insufficiency of linear coding was established in \cite{Dougherty2005, Rouayheb2010} by providing two network and index coding instances which are essentially related to the Fano and non-Fano matroid instances. The dependency of linear representation of the Fano and non-Fano matroids on the characteristic of a field is the key reason for the insufficiency of linear coding. 

In \cite{Bernt-matroid}, for fields with arbitrary characteristic $p$, the Fano and non-Fano matroid instances were extended to the new classes of matroids, respectively, referred to as the class $p$-Fano and the class $p$-non-Fano matroids. Using the concept of matroid theory and number theory, it was proved that while the class $p$-Fano instances are linearly representable only over fields with characteristic $p$, the class $p$-non-Fano instances have linear representation over fields with any characteristic other than characteristic $p$ \cite{Bernt-matroid, pena-matroid}. These matroids are of significant importance to the fields of network coding and index coding as they can be mapped into new classes of network and index coding instances whose optimal linear coding rates are dependent on fields with characteristic $p$. However, to the best of our knowledge, these classes of matroid instances have not been studied in the community of network coding and index coding.


In this paper, we first reintroduce the class $p$-Fano and the class $p$-non-Fano matroid instances in a more structured way. Then we provide a completely independent alternative proof for their linear representations' dependency on fields with characteristic $p$. The main advantage of our proof (proof of Theorems \ref{thm:p-Fano-matroid} and \ref{thm:p-non-Fano-matroid}) is that it uses only matrix manipulation in linear algebra rather than complex concepts in matroid theory and number theory. 
Finally, following the class $p$-Fano and $p$-non-Fano matroid instances, we characterize two new classes of index coding
instances, respectively, referred to as the class $p$-Fano and $p$-non-Fano index coding, each with a size of $2p^2 + 4p + 3$. For the class $p$-Fano index coding instances, it is proved that linear codes are optimal only over fields with characteristic $p$. However, for the class $p$-non-Fano index coding instances, we prove that linear codes are optimal over fields with any characteristic other than characteristic $p$. It is important to note that using the mapping methods discussed in \cite{Rouayheb2010} and \cite{Maleki2014} for the class $p$-Fano and $p$-non-Fano matroids results in index coding instances of size $\mathcal{O}(2^p)$, while the instances of $p$-Fano and $p$-non-Fano matroids presented in this paper have a size of $\mathcal{O}(p^2)$.

The organization and contributions of this paper are summarized as follows.
\begin{itemize}
    \item Section \ref{sec:introduction} provides a brief overview of the system model, relevant background and definitions in matroid theory, index coding.
    \item In Section \ref{sec:p-Fano-non-Fano}, for the two recognized categories of matroids, namely the class $p$-Fano and the class $p$-non-Fano matroid instances,
    we introduce a fully distinct alternative proof that exclusively employs matrix manipulation techniques. This proof demonstrates that instances of the class $p$-Fano matroid can only be linearly represented over fields with characteristic p, whereas instances of the class $p$-non-Fano matroid can be linearly represented over fields with any characteristic, except for characteristic $p$.
    \item In Subsection \ref{sub:index-coding-fano}, we characterize a new class of index coding, referred to as the class $p$-Fano index coding instances with a size of $2p^{2}+4p+3$. It is proved that the broadcast rate of the class $p$-Fano index coding instances is achievable by linear code over only fields with characteristic $p$.
    \item In Subsection \ref{sub:index-coding-non-fano}, we characterize a new class of index coding, referred to as the class $p$-non-Fano index coding instances with a size of $2p^{2}+4p+3$. It is proved that the broadcast rate of the class $p$-non-Fano index coding instances is achievable by linear code over fields with any characteristic other than characteristic $p$.
\end{itemize}


   

\section{SYSTEM MODEL AND BACKGROUND} \label{sec:introduction}

\subsection{Notation}
Small letters such as $n$  denote an integer where  $[n]\triangleq\{1,...,n\}$ and $[n:m]\triangleq\{n,n+1,\dots m\}$ for $n\leq m$. Capital letters such as $L$ denote a set, with $|L|$ denoting its cardinality. Symbols in boldface such as $\boldsymbol{l}$ and $\boldsymbol{L}$, respectively, denote a vector and a matrix, with $\mathrm{rank}(\boldsymbol{L})$ and $\mathrm{col}(\boldsymbol{L})$ denoting the rank and column space of matrix $\boldsymbol{L}$, respectively. A calligraphic symbol such as $\mathcal{L}$ denotes a set whose elements are sets.\\
We use $\mathbb{F}_q$ to denote a finite field of size $q$ and write $\mathbb{F}_{q}^{n\times m}$ to denote the vector space of all $n\times m$ matrices over the field $\mathbb{F}_{q}$.
$\boldsymbol{I}_{n}$ denotes the identity matrix of size $n\times n$, and $\boldsymbol{0}_{n}$ represents an $n\times n$ matrix whose elements are all zero.

\subsection{System Model} \label{sub:background-system-model}
Consider a broadcast communication system in which a server transmits a set of $mt$ messages $X=\{x_{i}^{j},\ i\in[m],\ j\in [t]\},\ x_{i}^{j}\in \mathcal{X}$, to a number of users $U=\{u_i,\ i\in[m]\}$ through a noiseless broadcast channel. Each user $u_i$ wishes to receive a message of length $t$, $X_i=\{x_{i}^{j},\  j\in[t]\}$ and may have a priori knowledge of a subset of the messages $S_i:=\{x_{l}^{j},\ l\in A_{i},\ j\in[t]\},\ A_{i}\subseteq[m]\backslash \{i\}$, which is referred to as its side information set. The main objective is to minimize the number of coded messages which is required to be broadcast so as to enable each user to decode its requested message.
An instance of index coding problem $\mathcal{I}$ can be either characterized by the side information set of its users as $\mathcal{I}=\{A_{i}, i\in[m]\}$, or by their interfering message set $B_{i}=[m]\backslash (A_i \cup \{i\})$ as $\mathcal{I}=\{B_{i}, i\in[m]\}$. 

\begin{defn}[$\mathcal{C}_{\mathcal{I}}$: Index Code for $\mathcal{I}$]
Given an instance of index coding problem $\mathcal{I}=\{A_{i}, i\in[m]\}$, a $(t,r)$ index code is defined as $\mathcal{C}_{\mathcal{I}}=(\phi_{\mathcal{I}},\{\psi_{\mathcal{I}}^{i}\})$, where
 \begin{itemize}
     \item $\phi_{\mathcal{I}}: \mathcal{X}^{mt}\rightarrow \mathcal{X}^{r}$ is the encoding function which maps the $mt$ message symbol $x_{i}^{j}\in \mathcal{X}$ to the $r$ coded messages as $Y=\{y_{1},\dots,y_{r}\}$, where $y_k\in \mathcal{X}, \forall k\in [r]$.
     \item $\psi_{\mathcal{I}}^{i}:$ represents the decoding function, where for each user $u_i, i\in[m]$, the decoder $\psi_{\mathcal{I}}^{i}: \mathcal{X}^{r}\times \mathcal{X}^{|A_i|t}\rightarrow \mathcal{X}^{t}$ maps the received $r$ coded messages $y_k\in Y, k\in[r]$ and the $|A_i|t$ messages $x_{l}^{j}\in S_i$ in the side information to the $t$ messages $\psi_{\mathcal{I}}^{i}(Y,S_i)=\{\hat{x}_{i}^{j}, j\in [t]\}$, where $\hat{x}_{i}^{j}$ is an estimate of $x_{i}^{j}$.
 \end{itemize}
\end{defn}

\begin{defn}[$\beta(\mathcal{C}_{\mathcal{I}})$: Broadcast Rate of $\mathcal{C}_{\mathcal{I}}$]
Given an instance of the index coding problem $\mathcal{I}$, the broadcast rate of a $(t,r)$ index code $\mathcal{C}_{\mathcal{I}}$ is defined as $\beta(\mathcal{C}_{\mathcal{I}})=\frac{r}{t}$.
\end{defn}

\begin{defn}[$\beta(\mathcal{I})$: Broadcast Rate of $\mathcal{I}$]
Given an instance of the index coding problem $\mathcal{I}$, the broadcast rate $\beta(\mathcal{I})$ is defined as
\begin{equation} \label{eq:opt-rate}
    \beta(\mathcal{I})=\inf_{t} \inf_{\mathcal{C}_{\mathcal{I}}} \beta(\mathcal{C}_{\mathcal{I}}).
\end{equation}
Thus, the broadcast rate of any index code $\mathcal{C}_{\mathcal{I}}$ provides an upper bound on the broadcast rate of $\mathcal{I}$, i.e., $\beta(\mathcal{I}) \leq \beta(\mathcal{C}_{\mathcal{I}})$.
\end{defn}

\subsection{Linear Index Code} \label{sub:background-linear-index-code}
Let $\boldsymbol{x}=[\boldsymbol{x}_{1},\dots,\boldsymbol{x}_{m}]^{T}\in \mathbb{F}_{q}^{mt\times 1}$ denote the message vector.

\begin{defn}[Linear Index Code]
Given an instance of the index coding problem $\mathcal{I}=\{B_{i}, i\in[m]\}$, a $(t,r)$ linear index code is defined as $\mathcal{L}_{\mathcal{I}}=(\boldsymbol{H},\{\psi_{\mathcal{I}}^{i}\})$, where
  \begin{itemize}
      \item $\boldsymbol{H}: \mathbb{F}_{q}^{mt\times 1}\rightarrow \mathbb{F}_{q}^{r\times 1}$ is the $r\times mt$ encoding matrix which maps the message vector $\boldsymbol{x}\in \mathbb{F}_{q}^{mt\times 1}$  to a coded message vector $\boldsymbol{\Bar{y}}=[y_{1},\dots,y_{r}]^{T}\in \mathbb{F}_{q}^{{r}\times 1}$ as follows
      \begin{equation} \nonumber
      \boldsymbol{y}=\boldsymbol{H}\boldsymbol{x}=\sum_{i\in [m]} \boldsymbol{H}^{\{i\}}\boldsymbol{x}_i.
      \end{equation}
      Here $\boldsymbol{H}^{\{i\}}\in \mathbb{F}_{q}^{r\times t}$ is the local encoding matrix of the $i$-th message $\boldsymbol{x}_i$ such that
      $\boldsymbol{H}=
      \left [\begin{array}{c|c|c}
        \boldsymbol{H}^{\{1\}} & \dots & \boldsymbol{H}^{\{m\}}
      \end{array}
     \right ]\in \mathbb{F}_{q}^{r\times mt}$.
     \item $\psi_{\mathcal{I}}^{i}$ represents the linear decoding function for user $u_{i}, i\in[m]$, where $\psi_{\mathcal{I}}^{i}(\boldsymbol{y}, S_i)$ maps the received coded message $\boldsymbol{y}$ and its side information messages $S_i$ to $\hat{\boldsymbol{x}}_{i}$, which is an estimate of the requested message vector $\boldsymbol{x}_i$.
  \end{itemize}
\end{defn}

\begin{prop} [\cite{Arman-3-2}]\label{prop-lineardecoding-condition}
The necessary and sufficient condition for linear decoder $\psi_{\mathcal{I}}^{i}, \forall i\in[m]$ to correctly decode the requested message vector $\boldsymbol{x}_i$ is
   \begin{equation} \label{eq:dec-cond}
       \mathrm{rank} (\boldsymbol{H}^{\{i\}\cup B_{i}})= \mathrm{rank} (\boldsymbol{H}^{B_{i}}) + t,
   \end{equation}
where $\boldsymbol{H}^L$ denotes the matrix $\left [\begin{array}{c|c|c}
       \boldsymbol{H}^{\{l_1\}} & \dots & \boldsymbol{H}^{\{l_{|L|}\}}
     \end{array}
    \right ]$ for the given set $L=\{l_1,\dots,l_{|L|}\}$. 
\end{prop}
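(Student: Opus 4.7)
The plan is to translate the decoding requirement into a question about column spaces of the local encoding matrices and then invoke the standard rank characterization of direct sums. First I would observe that, upon receiving $\boldsymbol{y}=\sum_{j\in[m]}\boldsymbol{H}^{\{j\}}\boldsymbol{x}_j$, user $u_i$ can cancel the contribution of every $j\in A_i$ using its side information $S_i$, leaving the effective observation
\[
\boldsymbol{z}_i \;:=\; \boldsymbol{y} - \sum_{j\in A_i}\boldsymbol{H}^{\{j\}}\boldsymbol{x}_j \;=\; \boldsymbol{H}^{\{i\}}\boldsymbol{x}_i + \boldsymbol{H}^{B_i}\boldsymbol{x}_{B_i},
\]
where $\boldsymbol{x}_{B_i}$ is the unknown stacked vector of interfering messages. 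Linear correct decoding amounts to the existence of a linear map applied to $\boldsymbol{z}_i$ (and, without loss of generality, to the known side information, already subtracted) that reproduces $\boldsymbol{x}_i$ for every choice of $(\boldsymbol{x}_i,\boldsymbol{x}_{B_i})\in\mathbb{F}_q^{t}\times\mathbb{F}_q^{|B_i|t}$.

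The next step is to recast this as a uniqueness condition. Unique recoverability of $\boldsymbol{x}_i$ is equivalent to the implication
\[
\boldsymbol{H}^{\{i\}}\boldsymbol{x}_i + \boldsymbol{H}^{B_i}\boldsymbol{x}_{B_i} = \boldsymbol{H}^{\{i\}}\boldsymbol{x}'_i + \boldsymbol{H}^{B_i}\boldsymbol{x}'_{B_i} \;\Longrightarrow\; \boldsymbol{x}_i=\boldsymbol{x}'_i.
\]
Rewriting, this says that for all $\boldsymbol{u}\in\mathbb{F}_q^t$ and $\boldsymbol{v}\in\mathbb{F}_q^{|B_i|t}$, $\boldsymbol{H}^{\{i\}}\boldsymbol{u}=\boldsymbol{H}^{B_i}\boldsymbol{v}$ forces $\boldsymbol{u}=\boldsymbol{0}$. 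Equivalently, $\boldsymbol{H}^{\{i\}}$ must have trivial kernel (hence full column rank $t$) \emph{and} $\mathrm{col}(\boldsymbol{H}^{\{i\}})\cap\mathrm{col}(\boldsymbol{H}^{B_i})=\{\boldsymbol{0}\}$.

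Finally, I would invoke the elementary fact that for any two matrices $\boldsymbol{A},\boldsymbol{B}$ over a field, $\mathrm{rank}([\boldsymbol{A}\,|\,\boldsymbol{B}])=\mathrm{rank}(\boldsymbol{A})+\mathrm{rank}(\boldsymbol{B})$ holds if and only if $\mathrm{col}(\boldsymbol{A})\cap\mathrm{col}(\boldsymbol{B})=\{\boldsymbol{0}\}$. Applying this with $\boldsymbol{A}=\boldsymbol{H}^{\{i\}}$ and $\boldsymbol{B}=\boldsymbol{H}^{B_i}$, and using $\mathrm{rank}(\boldsymbol{H}^{\{i\}})=t$ from the previous step, yields
\[
\mathrm{rank}(\boldsymbol{H}^{\{i\}\cup B_i}) \;=\; \mathrm{rank}(\boldsymbol{H}^{B_i}) + t,
\]
which is exactly \eqref{eq:dec-cond}. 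Both directions are covered: sufficiency follows by constructing the inverse map on the direct sum, and necessity by noting that any nontrivial intersection or kernel produces two distinct $\boldsymbol{x}_i,\boldsymbol{x}'_i$ yielding the same $\boldsymbol{z}_i$, making the decoder fail.

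The main obstacle I anticipate is the necessity direction: one must be careful to exhibit an explicit pair of message realizations that the decoder cannot distinguish, i.e., to show that violating the rank identity is not merely a structural blemish but actually obstructs \emph{every} linear decoder, not just a particular decoding rule. The cleanest route is to fix a nonzero vector in $\mathrm{col}(\boldsymbol{H}^{\{i\}})\cap\mathrm{col}(\boldsymbol{H}^{B_i})$ (or in $\ker(\boldsymbol{H}^{\{i\}})$) and exploit linearity to produce the confusing pair, after which the conclusion is immediate.
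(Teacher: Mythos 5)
Your proof is correct. Note that the paper itself gives no proof of Proposition~\ref{prop-lineardecoding-condition}: it is imported from \cite{Arman-3-2} as a known result, so there is no internal argument to compare against. Your route is the standard one: cancel the side-information contribution to reduce the observation to $\boldsymbol{z}_i=\boldsymbol{H}^{\{i\}}\boldsymbol{x}_i+\boldsymbol{H}^{B_i}\boldsymbol{x}_{B_i}$, characterize decodability as injectivity of $\boldsymbol{x}_i\mapsto\boldsymbol{z}_i$ modulo the interference (equivalently, $\mathrm{rank}(\boldsymbol{H}^{\{i\}})=t$ together with $\mathrm{col}(\boldsymbol{H}^{\{i\}})\cap\mathrm{col}(\boldsymbol{H}^{B_i})=\{\boldsymbol{0}\}$), and convert this to the rank identity via $\mathrm{rank}([\boldsymbol{A}\,|\,\boldsymbol{B}])=\mathrm{rank}(\boldsymbol{A})+\mathrm{rank}(\boldsymbol{B})-\dim\bigl(\mathrm{col}(\boldsymbol{A})\cap\mathrm{col}(\boldsymbol{B})\bigr)$. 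All steps, including the ``without loss of generality'' reduction of an arbitrary linear decoder on $(\boldsymbol{y},S_i)$ to one acting on $\boldsymbol{z}_i$ and the explicit confusable pair in the necessity direction, are sound.
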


\begin{defn}[$\lambda_{q}(\mathcal{L}_{\mathcal{I}})$: Linear Broadcast Rate of $\mathcal{L}_{\mathcal{I}}$ over $\mathbb{F}_q$]
Given an instance of index coding problem $\mathcal{I}$, the linear broadcast rate of a $(t,r)$ linear index code $\mathcal{L}_{\mathcal{I}}$ over field $\mathbb{F}_q$ is defined as $\lambda_{q}(\mathcal{L}_{\mathcal{I}})=\frac{r}{t}$.
\end{defn}

\begin{defn}[$\lambda_{q}(\mathcal{I})$: Linear Broadcast Rate of $\mathcal{I}$ over $\mathbb{F}_q$]
Given an instance of index coding problem $\mathcal{I}$, the linear broadcast rate $\lambda_{q}(\mathcal{I})$ over field $\mathbb{F}_q$ is defined as
    \begin{equation} \nonumber
        \lambda_{q}(\mathcal{I})=\inf_{t} \inf_{\mathcal{L}_{\mathcal{I}}} \lambda_{q}(\mathcal{L}_{\mathcal{I}}).
    \end{equation}
\end{defn}

\begin{defn}[$\lambda(\mathcal{I})$: Linear Broadcast Rate for $\mathcal{I}$]
Given an instance of index coding problem $\mathcal{I}$, the linear broadcast rate is defined as 
\begin{equation} \label{eq:opt-lin-rate}
    \lambda(\mathcal{I})=\min_{q} \lambda_{q}(\mathcal{I}).
\end{equation}
\end{defn}

\begin{defn}[Scalar and Vector Linear Index Code]
The linear index code $\mathcal{C}_{\mathcal{I}}$ is said to be scalar if $t=1$. Otherwise, it is called a vector (or fractional) code. For scalar codes, we use $x_{i}=x_{i}^{1}, i\in [m]$, for simplicity.
\end{defn}

\subsection{Graph Definitions} \label{sub:background-graph}
Given an index coding instance $\mathcal{I}$, the following concepts are defined based on its interfering message sets, which are, in fact, related to its graph representation \cite{Arman-4-1, Arman-5, Liu-secure-7, Liu-secure-6}.

\begin{defn}[Independent Set of $\mathcal{I}$]
We say that set $M\subseteq[m]$ is an independent set of $\mathcal{I}$ if $B_{i}\cap M=M\backslash \{i\}$ for all $i\in M$.
\end{defn}

\begin{defn}[Minimal Cyclic Set of $\mathcal{I}$] \label{def:minimal-cyclic-set}
Let $M=\{i_{j}, j\in [|M|]\}\subseteq[m]$. Now, $M$ is referred to as a minimal cyclic set of $\mathcal{I}$ if
 \begin{equation} \label{eq:def-minimal-cyclic-set}
 B_{i_{j}}\cap M=
    \left\{
      \begin{array}{cc}
         M\backslash \{i_{j}, i_{j+1}\},\ \ \ \ \ \ \ \ j\in [|M|-1],
         \\ \\
        M\backslash \{i_{|M|}, i_{1}\},\ \ \ \ \ \ \ \ \ \ \ \  j=i_{|M|}.\ \ \
     \end{array}
     \right.
\end{equation}
\end{defn}

\begin{defn}[Acyclic Set of $\mathcal{I}$] \label{def:acyclic-set}
We say that $M\subseteq[m]$ is an acyclic set of $\mathcal{I}$, if none of its subsets $M^{\prime}\subseteq M$ forms a minimal cyclic set of $\mathcal{I}$. We note that each independent set is an acyclic set as well.
\end{defn}

\begin{prop}[\cite{Arbabjolfaei2018}] \label{prop:rate-acyclic-minimal}
Let $\mathcal{I}=\{B_{i}, i\in [m]\}$. It can be shown that
\begin{itemize}[leftmargin=*]
    \item if set $[m]$ is an acyclic set of $\mathcal{I}$, then $\lambda_{q}(\mathcal{I})=\beta(\mathcal{I})=m$.
    \item if set $[m]$ is a minimal cyclic set of $\mathcal{I}$, then $\lambda_{q}(\mathcal{I})=\beta(\mathcal{I})=m-1$.
\end{itemize}

\end{prop}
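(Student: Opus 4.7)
The plan is to prove each statement by matching a lower bound on $\beta(\mathcal{I})$ with an upper bound on $\lambda_q(\mathcal{I})$, using the universal inequality $\beta(\mathcal{I}) \le \lambda_q(\mathcal{I})$ that follows because every linear index code is an index code.

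For the acyclic case, the upper bound $\lambda_q(\mathcal{I}) \le m$ is immediate from broadcasting every message uncoded (take $\boldsymbol{H}=\boldsymbol{I}_{mt}$), which trivially satisfies Proposition \ref{prop-lineardecoding-condition}. For the matching lower bound $\beta(\mathcal{I}) \ge m$, I would first extract from Definition \ref{def:acyclic-set} a topological ordering $i_1,\dots,i_m$ of the users: since no subset of $[m]$ forms a directed cycle in the side-information digraph (with an arc $i \to j$ whenever $j \in A_i$), the digraph is a DAG and therefore admits a linear order along which every arc points forward, i.e.\ $A_{i_k} \subseteq \{i_1,\dots,i_{k-1}\}$ after re-labeling. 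The chain-rule expansion
\[
H(X_{[m]} \mid Y) \;=\; \sum_{k=1}^{m} H(X_{i_k} \mid Y, X_{i_1},\dots,X_{i_{k-1}})
\]
then has every summand equal to zero, because $X_{A_{i_k}} \subseteq \{X_{i_1},\dots,X_{i_{k-1}}\}$ and the decoder $\psi^{i_k}_{\mathcal{I}}$ recovers $X_{i_k}$ from $(Y, X_{A_{i_k}})$. Hence $H(Y) \ge H(X_{[m]}) = mt\log|\mathcal{X}|$, which forces $r \ge mt$ and thus $\beta(\mathcal{I}) \ge m$.

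For the minimal cyclic case, the upper bound $\lambda_q(\mathcal{I}) \le m-1$ is exhibited by the scalar linear code sending $y_j = x_{i_j} + x_{i_{j+1}}$ for $j \in [m-1]$. Definition \ref{def:minimal-cyclic-set} pins down $A_{i_j} \cap [m] = \{i_{j+1}\}$ for $j < m$ and $A_{i_m} \cap [m] = \{i_1\}$, so each user $i_j$ with $j < m$ recovers $x_{i_j} = y_j - x_{i_{j+1}}$, while user $i_m$ computes the alternating sum $\sum_{j=1}^{m-1}(-1)^{j-1}y_j$; the opposing signs on adjacent summands cancel every interior term, leaving $x_{i_1} + (-1)^{m}x_{i_m}$, from which $x_{i_m}$ is extracted using the known $x_{i_1}$. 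This decoding succeeds in every $\mathbb{F}_q$. For the matching lower bound $\beta(\mathcal{I}) \ge m-1$, I would delete an arbitrary vertex from the cycle to obtain an acyclic sub-instance of size $m-1$ and apply the first part of the proposition.

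The main obstacle is the combinatorial step converting Definition \ref{def:acyclic-set}, which is phrased purely in terms of the absence of minimal cyclic subsets, into the standard graph-theoretic notion of an acyclic digraph that admits a topological order; this requires arguing that any directed cycle in the side-information digraph necessarily contains a minimal cyclic subset in the sense of Definition \ref{def:minimal-cyclic-set}. A secondary subtlety is the sign pattern in the cyclic upper bound: a naive telescoping sum $\sum_j y_j$ would leave coefficient $2$ on every interior $x_{i_k}$ and collapse in characteristic two, so the alternating signs must be built into the scheme from the start to guarantee decodability over an arbitrary field.
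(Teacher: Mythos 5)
The paper gives no proof of this proposition---it is imported verbatim from \cite{Arbabjolfaei2018}---so there is nothing internal to compare against; your argument is the standard one (uncoded transmission plus the MAIS/chain-rule converse for the acyclic case, the cyclic XOR-type code plus restriction to an acyclic sub-instance for the cyclic case) and it is correct. The two subtleties you flag are real but close exactly as you suggest: a minimum-length directed cycle in the side-information digraph is chordless, hence satisfies $A_{i_j}\cap M=\{i_{j+1}\}$ exactly and is a minimal cyclic set in the sense of Definition~\ref{def:minimal-cyclic-set}, so an acyclic set induces a DAG with a valid ordering; and the alternating-sign decoder for user $i_m$ is needed precisely because the unsigned telescoping sum leaves coefficient $2$ on interior terms outside characteristic two. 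The only step left implicit is the monotonicity $\beta(\mathcal{I})\geq\beta(\mathcal{I}|_{M})$ used when you delete a vertex; this is standard, or can be avoided by running your chain-rule expansion conditioned additionally on $X_{i_m}$ to get $H(Y)\geq (m-1)t\log|\mathcal{X}|$ directly.
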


\begin{defn}[Maximum Acyclic Induced Subgraph (MAIS) of $\mathcal{I}$]
Let $\mathcal{M}$ be the set of all sets $M\subseteq[m]$ which are acyclic sets of $\mathcal{I}$. Then, set $M\in \mathcal{M}$ with the maximum size $|M|$ is referred to as the MAIS set of $\mathcal{I}$, and $\beta_{\text{MAIS}}(\mathcal{I})=|M|$ is called the MAIS bound for $\lambda_{q}(\mathcal{I})$, as we always have \cite{Bar-Yossef2011}
\begin{align}
    \lambda_{q}(\mathcal{I})\geq \beta_{\text{MAIS}}(\mathcal{I}).
    \label{eq:MAIS-linear-coding}
\end{align}
\end{defn}

\begin{rem}
Equation \eqref{eq:MAIS-linear-coding} establishes a sufficient condition for optimality of the linear coding rate as follows.
Given an index coding instance $\mathcal{I}$, if $\lambda_{q}(\mathcal{I})= \beta_{\text{MAIS}}(\mathcal{I})$, then the linear coding rate is optimal for $\mathcal{I}$. In this paper, the encoding matrix which achieves this optimal rate is denoted by $\boldsymbol{H}_{\ast}$.
\end{rem}


\subsection{Overview of Matroid Theory} \label{sub:background-matroid}

\begin{defn}[$\mathcal{N}$: Matroid Instance \cite{Rouayheb2010,Thomas}] \label{def:matroid-general}
A matroid instance $\mathcal{N}=\{f(N), N\subseteq [n]\}$ is a set of functions $f: 2^{[n]}\rightarrow \{0,1,2,\dots \}$ that satisfy the following three conditions:
\begin{align}
    &f(N) \leq |N|,\ \ \ \ \ \ \ \ \ \ \ \ \ \ \ \ \ \ \ \ \ \ \ \ \ \ \ \ \ \ \ \ \ \ \ \ \forall N\subseteq[n],
    \nonumber
    \\
    &f(N_{1}) \leq f(N_{2}),\ \ \ \ \ \ \ \ \ \ \ \ \ \ \ \ \ \ \ \ \ \ \ \ \ \ \ \ \ \ \ \ \forall N_{1}\subseteq N_{2}\subseteq[n],
    \nonumber
    \\
    &f(N_{1}\cup N_{2})+f(N_{1}\cap N_{2}) \leq f(N_{1})+f(N_{2}),\forall N_{1}, N_{2}\subseteq[n].
    \nonumber
\end{align}
Here, set $[n]$ and function $f(\cdot)$, respectively, are called the ground set and the rank function of $\mathcal{N}$. 
The rank of matroid $\mathcal{N}$ is defined as $f(\mathcal{N})=f([n])$.
\end{defn}

\begin{defn}[Basis and Circuit Sets of $\mathcal{N}$] \label{def:Basis-Circuit-Matroid}
Consider a matroid $\mathcal{N}$ of rank $f(\mathcal{N})$. We say that $N\subseteq[n]$ is an independent set of $\mathcal{N}$ if $f(N)=|N|$. Otherwise, $N$ is said to be a dependent set. A maximal independent set $N$ is referred to as a basis set. A minimal dependent set $N$ is referred to as a circuit set. 
\end{defn}
Let sets $\mathcal{B}$ and $\mathcal{C}$, respectively, denote the set of all basis and circuit sets of $\mathcal{N}$. Then, it can be shown that
\begin{align}
    &f(\mathcal{N})=f(N)=|N|,\ \ \ \ \ \ \ \ \ \ \ \ \forall N\in \mathcal{B},
    \nonumber
    \\
    &f(N\backslash \{i\})=|N|-1, \ \ \ \ \ \ \ \ \ \ \ \ \  \forall i\in N,\ \forall N\in \mathcal{C}. \label{eq:cicuitset}
\end{align}

\begin{defn}[$(t)$-linear Representation of $\mathcal{N}$ over $\mathbb{F}_{q}$] \label{def:linear-rep}
We say that matroid $\mathcal{N}=\{f(N), N\subseteq [n]\}$ of rank $f(\mathcal{N})$ has a $(t)$-linear representation over $\mathbb{F}_{q}$ if there exists a matrix 
\begin{align}
    \boldsymbol{H}=
      \left [\begin{array}{c|c|c}
        \boldsymbol{H}^{\{1\}} & \dots & \boldsymbol{H}^{\{n\}}
      \end{array}
     \right ],\ \  \boldsymbol{H}^{\{i\}}\in \mathbb{F}_{q}^{f(\mathcal{N})t\times t}, \forall i\in [n],
     \nonumber
\end{align}
such that
\begin{equation} \label{matroid-linear-representation}
    \mathrm{rank}(\boldsymbol{H}^{N})=f(N)t,\ \ \  \ \forall N\subseteq [n],
\end{equation}
where $\boldsymbol{H}^{N}$ denotes the matrix $\left [\begin{array}{c|c|c}
       \boldsymbol{H}^{\{n_1\}} & \dots & \boldsymbol{H}^{\{n_{|N|}\}}
     \end{array}
    \right ]$ for the given set $N=\{n_1,\dots,n_{|N|}\}\subseteq[n]$. 
\end{defn}
\begin{rem} \label{rem:1}
Since the matroid matrix $\boldsymbol{H}$ is related to the encoding matrix in index coding and network coding, we assume that each submatrix $\boldsymbol{H}^{\{i\}}, i\in [n]$ is invertible (which is a necessary condition for encoding matrix \cite{Arman-3-1,Arman-3-2,Arman-4-1}). 
\end{rem}
Now, based on Definitions \ref{def:Basis-Circuit-Matroid} and \ref{def:linear-rep}, the concepts of basis and circuit sets can also be defined for matrix $\boldsymbol{H}$.

\begin{defn}[Basis and Circuit Sets of $\boldsymbol{H}$] \label{def:Basis-Circuit-Matrix}
Let $N\subseteq [n]$. We say that $N$ is an independent set of $\boldsymbol{H}$, if $\mathrm{rank}(\boldsymbol{H}^{N})=|N|t$, otherwise $N$ is a dependent set of $\boldsymbol{H}$. The independent set $N$ is a basis set of $\boldsymbol{H}$ if $\mathrm{rank}(\boldsymbol{H})=\mathrm{rank}(\boldsymbol{H}^{N})=|N|t$. The dependent set $N$ is a circuit set of $\boldsymbol{H}$ if
\begin{align}
    \mathrm{rank}(\boldsymbol{H}^{N\backslash \{j\}})=\mathrm{rank}(\boldsymbol{H}^{N})=(|N|-1)t, \ \ \forall j\in N,
    \label{eq: def:circuit-set-H}
\end{align}
which requires that
\begin{align} 
\boldsymbol{H}^{\{j\}}=\sum_{i\in N\backslash \{j\}} \boldsymbol{H}^{\{i\}}\boldsymbol{M}_{j,i}, 
\label{eq:circuit-set-M-invertible}
\end{align}
where each $\boldsymbol{M}_{j,i}$ is invertible.
\end{defn}

\begin{defn}[Scalar and Vector Linear Representation]
If matroid $\mathcal{N}$ has a linear representation with $t=1$, it is said that $\mathcal{N}$ has a scalar linear representation. Otherwise, the linear representation is called a vector representation.
\end{defn}



\subsection{The Fano and non-Fano Matroid Instances $\mathcal{N}_{\text{F}}$ and $ \mathcal{N}_{\text{nF}}$} \label{sub:sub:Fano-nonFano}
\begin{defn}[Fano Matroid Instance $\mathcal{N}_{\text{F}}$ \cite{Dougherty2007}] \label{exm:Fano}
Consider the matroid instance $\mathcal{N}_{\text{F}}=\{(N,f(N)), N\subseteq[n]\}$ with $n=7$ and $f(\mathcal{N}_{\text{F}})=3$. Now, matroid $\mathcal{N}_{\text{F}}$ is referred to as the Fano matroid instance if set $N_{0}=[3]$ is a basis set, and the following sets $N_{i}, i\in [7]$ are circuit sets.
\begin{align}
    N_{1}&=\{1,2,4\},  
    N_{2} =\{1,3,5\}, 
    N_{3} =\{2,3,6\}, 
    N_{4} =\{1,6,7\},
    \nonumber
    \\
    N_{5}&=\{2,5,7\}, 
    N_{6} =\{3,4,7\}, 
    N_{7} =\{4,5,6\}.
    \label{eq:sets-Fano}
\end{align}
\end{defn}
\begin{prop}[\hspace{-0.05 ex}\cite{Dougherty2007}]  \label{prop-first-matroid-N_1}
The Fano matroid instance $\mathcal{N}_{\text{F}}$ is linearly representable over field $\mathbb{F}_{q}$ iff (if and only if) field $\mathbb{F}_{q}$ does have characteristic two.
\end{prop}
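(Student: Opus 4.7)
The approach is to work directly with the $(t)$-linear representation matrix $\boldsymbol{H}$ and extract, from the seven circuit conditions, an identity of the form $2M = \boldsymbol{0}$ with $M$ invertible, which forces $\mathrm{char}(\mathbb{F}_q) = 2$.

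First I would normalize. Since $N_0=\{1,2,3\}$ is a basis of $\mathcal{N}_{\text{F}}$ with $f(N_0)=3$, the $3t \times 3t$ matrix $\boldsymbol{H}^{N_0}$ is invertible, so after a change of basis (left multiplication by an invertible $3t \times 3t$ matrix, which preserves all rank relations) I may assume
\begin{equation} \nonumber
\boldsymbol{H}^{\{1\}}=\begin{pmatrix}\boldsymbol{I}_t\\ \boldsymbol{0}_t\\ \boldsymbol{0}_t\end{pmatrix},\ \boldsymbol{H}^{\{2\}}=\begin{pmatrix}\boldsymbol{0}_t\\ \boldsymbol{I}_t\\ \boldsymbol{0}_t\end{pmatrix},\ \boldsymbol{H}^{\{3\}}=\begin{pmatrix}\boldsymbol{0}_t\\ \boldsymbol{0}_t\\ \boldsymbol{I}_t\end{pmatrix}.
\end{equation}
Applying \eqref{eq:circuit-set-M-invertible} to the three circuits $N_1,N_2,N_3$, there exist invertible $t\times t$ matrices $\boldsymbol{A}_1,\dots,\boldsymbol{A}_6$ such that $\boldsymbol{H}^{\{4\}}=(\boldsymbol{A}_1;\boldsymbol{A}_2;\boldsymbol{0})$, $\boldsymbol{H}^{\{5\}}=(\boldsymbol{A}_3;\boldsymbol{0};\boldsymbol{A}_4)$, $\boldsymbol{H}^{\{6\}}=(\boldsymbol{0};\boldsymbol{A}_5;\boldsymbol{A}_6)$ in block-column form.

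Second, I would use the remaining four circuits to derive two expressions for the same product of the $\boldsymbol{A}_i$ with opposite signs. Circuit $N_7=\{4,5,6\}$ gives $\boldsymbol{H}^{\{4\}}=\boldsymbol{H}^{\{5\}}\boldsymbol{X}+\boldsymbol{H}^{\{6\}}\boldsymbol{Y}$ for invertible $\boldsymbol{X},\boldsymbol{Y}$; matching the three block rows forces $\boldsymbol{X}=\boldsymbol{A}_3^{-1}\boldsymbol{A}_1$, $\boldsymbol{Y}=\boldsymbol{A}_5^{-1}\boldsymbol{A}_2$, and $\boldsymbol{A}_4\boldsymbol{X}+\boldsymbol{A}_6\boldsymbol{Y}=\boldsymbol{0}$, i.e.
\begin{equation} \nonumber
\boldsymbol{A}_2 \;=\; -\,\boldsymbol{A}_5\boldsymbol{A}_6^{-1}\boldsymbol{A}_4\boldsymbol{A}_3^{-1}\boldsymbol{A}_1.
\end{equation}
Now writing $\boldsymbol{H}^{\{7\}}$ in the three ways allowed by the circuits $N_4=\{1,6,7\}$, $N_5=\{2,5,7\}$, and $N_6=\{3,4,7\}$, each with invertible matrix coefficients, and equating block rows, I expect a short elimination chain (four invertible $t\times t$ equations) to yield the \emph{opposite-sign} identity
\begin{equation} \nonumber
\boldsymbol{A}_2 \;=\; +\,\boldsymbol{A}_5\boldsymbol{A}_6^{-1}\boldsymbol{A}_4\boldsymbol{A}_3^{-1}\boldsymbol{A}_1.
\end{equation}
Adding the two gives $2\boldsymbol{A}_2=\boldsymbol{0}$; since $\boldsymbol{A}_2$ is invertible, $2\cdot\boldsymbol{I}_t=\boldsymbol{0}$, so $\mathrm{char}(\mathbb{F}_q)=2$.

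For the converse direction it suffices to exhibit a scalar ($t=1$) representation over $\mathbb{F}_2$: take the columns of the standard Fano matrix $\boldsymbol{H}^{\{i\}}$ for $i\in[7]$ as $(1,0,0)^T,(0,1,0)^T,(0,0,1)^T,(1,1,0)^T,(1,0,1)^T,(0,1,1)^T,(1,1,1)^T$; then $\{1,2,3\}$ has rank $3$ and each listed circuit sums to zero in $\mathbb{F}_2$, hence has rank $2$, verifying \eqref{matroid-linear-representation}. The main obstacle I anticipate is bookkeeping in the non-commutative setting: the order of the matrix factors in the elimination from $N_4, N_5, N_6$ must be tracked carefully so that the two expressions for $\boldsymbol{A}_2$ genuinely have identical (not merely conjugate) right-hand sides, so that the subtraction collapses to $2\boldsymbol{A}_2=\boldsymbol{0}$ rather than to a weaker commutator-type identity.
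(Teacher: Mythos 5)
Your proof is correct and takes essentially the same route as the paper, which proves the general class $p$-Fano result (Theorem~\ref{thm:p-Fano-matroid}, via Propositions~\ref{prop:Fano-necessary} and~\ref{prop:Fano-sufficient}) purely by matrix manipulation of the circuit relations \eqref{eq:circuit-set-M-invertible}; your argument is exactly the $p=2$ specialization, terminating in the same identity ``$p$ copies of an invertible matrix sum to zero,'' here $2\boldsymbol{A}_2=\boldsymbol{0}$. The elimination you flag as the anticipated obstacle does go through: equating the three block rows of $\boldsymbol{H}^{\{7\}}$ across $N_4,N_5,N_6$ yields $\boldsymbol{A}_2\boldsymbol{D}_2=\boldsymbol{A}_5\boldsymbol{A}_6^{-1}\boldsymbol{A}_4\boldsymbol{A}_3^{-1}\boldsymbol{A}_1\boldsymbol{D}_2$ with $\boldsymbol{D}_2$ invertible, i.e.\ exactly the same ordered product as from $N_7$ with the opposite sign, so the subtraction collapses as you claim rather than leaving a commutator-type residue.
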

\begin{defn}[Non-Fano Matroid Instance $\mathcal{N}_{\text{nF}}$ \cite{Dougherty2007}] \label{exm:nonFano}
Consider the matroid instance $\mathcal{N}_{\text{nF}}=\{(N,f(N)), N\subseteq[n]\}$ with $n=7$ and $f(\mathcal{N}_{\text{nF}})=3$. Now, matroid $\mathcal{N}_{\text{nF}}$ is referred to as the non-Fano matroid instance if each set $N_{0}=[3]$ and $N_{7}=\{4,5,6\}$ is a basis set, and sets $N_{i}, i\in [6]$ in \eqref{eq:sets-Fano} are all circuit sets.
\end{defn}
\begin{prop}[\hspace{-0.05 ex}\cite{Dougherty2007}]  \label{prop-first-matroid-N_2}
The non-Fano matroid instance $\mathcal{N}_{\text{nF}}$ is linearly representable over $\mathbb{F}_{q}$ iff field $\mathbb{F}_{q}$ has odd characteristic (i.e., any characteristic other than characteristic two).
\end{prop}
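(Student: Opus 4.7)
The strategy parallels the analysis that yields Proposition \ref{prop-first-matroid-N_1} for the Fano instance but with the roles of basis and circuit for $\{4,5,6\}$ swapped; the key difference is that the characteristic condition flips from $p=2$ to $p\neq 2$. I will handle necessity and sufficiency separately.

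For the necessity direction, suppose a $(t)$-linear representation $\boldsymbol{H}$ exists over $\mathbb{F}_{q}$. Because $N_{0}=[3]$ is a basis, I first left-multiply $\boldsymbol{H}$ by an invertible $3t\times 3t$ matrix so that $\boldsymbol{H}^{\{i\}}$ for $i\in[3]$ take standard block form (the identity in the $i$-th block row and zeros elsewhere); this preserves all ranks. Applying the circuit condition \eqref{eq:circuit-set-M-invertible} to $N_{1}, N_{2}, N_{3}$ and absorbing the freedom to right-multiply $\boldsymbol{H}^{\{4\}}, \boldsymbol{H}^{\{5\}}, \boldsymbol{H}^{\{6\}}$ by invertible $t\times t$ matrices lets me write
\begin{equation}
\boldsymbol{H}^{\{4\}} = \begin{pmatrix} \boldsymbol{A}_{4} \\ \boldsymbol{I}_{t} \\ \boldsymbol{0}_{t} \end{pmatrix},\ \ \boldsymbol{H}^{\{5\}} = \begin{pmatrix} \boldsymbol{A}_{5} \\ \boldsymbol{0}_{t} \\ \boldsymbol{I}_{t} \end{pmatrix},\ \ \boldsymbol{H}^{\{6\}} = \begin{pmatrix} \boldsymbol{0}_{t} \\ \boldsymbol{B}_{6} \\ \boldsymbol{I}_{t} \end{pmatrix}
\nonumber
\end{equation}
with $\boldsymbol{A}_{4}, \boldsymbol{A}_{5}, \boldsymbol{B}_{6}$ all invertible. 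The three circuits $N_{4}, N_{5}, N_{6}$ then each express $\boldsymbol{H}^{\{7\}}$ as a combination of two of the previous columns with invertible coefficient matrices; equating the three resulting block vectors row by row and eliminating the unknown coefficients produces the single identity $\boldsymbol{A}_{5} = \boldsymbol{A}_{4}\boldsymbol{B}_{6}$.

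The decisive step is the basis condition on $N_{7}=\{4,5,6\}$: the $3t\times 3t$ block matrix $[\boldsymbol{H}^{\{4\}}\mid\boldsymbol{H}^{\{5\}}\mid\boldsymbol{H}^{\{6\}}]$ must be invertible. Writing a potential kernel vector as $(\boldsymbol{x},\boldsymbol{y},\boldsymbol{z})^{T}$, the last two block rows force $\boldsymbol{y}=-\boldsymbol{z}$ and $\boldsymbol{x}=-\boldsymbol{B}_{6}\boldsymbol{z}$, and substituting into the top block row together with $\boldsymbol{A}_{5}=\boldsymbol{A}_{4}\boldsymbol{B}_{6}$ collapses the condition to $2\boldsymbol{A}_{5}\boldsymbol{z}=\boldsymbol{0}$. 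Since $\boldsymbol{A}_{5}$ is invertible, the kernel is trivial iff $2\neq 0$ in $\mathbb{F}_{q}$, i.e., iff $\mathbb{F}_{q}$ has odd characteristic.

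For sufficiency I will exhibit an explicit scalar representation over any field of odd characteristic by taking $\boldsymbol{h}_{i}=\boldsymbol{e}_{i}$ for $i\in[3]$, $\boldsymbol{h}_{4}=\boldsymbol{e}_{1}+\boldsymbol{e}_{2}$, $\boldsymbol{h}_{5}=\boldsymbol{e}_{1}+\boldsymbol{e}_{3}$, $\boldsymbol{h}_{6}=\boldsymbol{e}_{2}+\boldsymbol{e}_{3}$, $\boldsymbol{h}_{7}=\boldsymbol{e}_{1}+\boldsymbol{e}_{2}+\boldsymbol{e}_{3}$; all six circuit conditions are immediate and the basis condition on $\{4,5,6\}$ reduces to $\det[\boldsymbol{h}_{4}\mid\boldsymbol{h}_{5}\mid\boldsymbol{h}_{6}]=-2\neq 0$. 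I expect the hardest step to be the careful bookkeeping of non-commuting $t\times t$ coefficient matrices in the three-way elimination for $\boldsymbol{H}^{\{7\}}$; the fortunate feature of this argument is that the resulting identity $\boldsymbol{A}_{5}=\boldsymbol{A}_{4}\boldsymbol{B}_{6}$ plugs into the basis equation in exactly the right multiplicative order, so no commutativity is actually required.
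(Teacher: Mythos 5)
Your proposal is correct. A few checks: after normalizing $\boldsymbol{H}^{[3]}$ to identity blocks, the circuits $N_{1}=\{1,2,4\}$, $N_{2}=\{1,3,5\}$, $N_{3}=\{2,3,6\}$ do give the block forms you state with $\boldsymbol{A}_{4},\boldsymbol{A}_{5},\boldsymbol{B}_{6}$ invertible (using \eqref{eq:circuit-set-M-invertible} and absorbing one invertible factor on the right of each column); the three circuits through element $7$ give, block row by block row, $\boldsymbol{P}_{6}=\boldsymbol{Q}_{5}=\boldsymbol{R}_{3}$, $\boldsymbol{B}_{6}\boldsymbol{P}_{6}=\boldsymbol{Q}_{2}=\boldsymbol{R}_{4}$ and $\boldsymbol{A}_{5}\boldsymbol{Q}_{5}=\boldsymbol{A}_{4}\boldsymbol{R}_{4}=\boldsymbol{A}_{4}\boldsymbol{B}_{6}\boldsymbol{P}_{6}$, and cancelling the invertible $\boldsymbol{Q}_{5}=\boldsymbol{P}_{6}$ on the right yields exactly $\boldsymbol{A}_{5}=\boldsymbol{A}_{4}\boldsymbol{B}_{6}$ with the orders matching; the kernel computation then correctly collapses to $2\boldsymbol{A}_{5}\boldsymbol{z}=\boldsymbol{0}$, and your explicit $0$--$1$ matrix for sufficiency is precisely the $p=2$ instance of the paper's $\boldsymbol{H}_{p}$ with $\det[\boldsymbol{h}_{4}\,|\,\boldsymbol{h}_{5}\,|\,\boldsymbol{h}_{6}]=-2$.

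On the comparison: the paper does not prove this proposition at all --- it is cited from the literature --- and the nearest in-paper argument is the general class $p$-non-Fano case (Propositions \ref{prop:non-Fano-necessary} and \ref{prop:non-Fano-sufficient}), of which the non-Fano matroid is the specialization $p=2$. Your proof is the same in spirit (pure matrix manipulation, no number theory), but it differs in presentation in a way that buys clarity: by normalizing the basis columns to identity blocks you replace the paper's longer chain of abstract identities among the coefficient matrices $\boldsymbol{M}_{j,i}$ (equations \eqref{eq:Msq=i-j-1} through \eqref{eq:condition-1-i,l} and their non-Fano counterparts) with a single compatibility identity $\boldsymbol{A}_{5}=\boldsymbol{A}_{4}\boldsymbol{B}_{6}$ and one explicit kernel computation, which makes the appearance of the scalar $2$ transparent. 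The trade-off is that your normalization is tailored to $n=7$ and does not scale as cleanly to general $p$, which is why the paper keeps the coefficient matrices abstract. One point worth making explicit if you write this up: like the paper's own sufficiency proofs, your sufficiency argument verifies only the listed basis and circuit sets rather than $\mathrm{rank}(\boldsymbol{H}^{N})=f(N)$ for every $N\subseteq[7]$; this matches the paper's level of rigor but is strictly speaking an omission in both.
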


It is worth noting that the Fano and non-Fano matroid instances are almost exactly the same, only differing in the role of set $N_{7}=\{4,5,6\}$. While set $N_{7}$ is a circuit set for the Fano matroid, it is a basis set for the non-Fano matroid.

\section{The Class $p$-Fano and $p$-non-Fano Matroid Instances}\label{sec:p-Fano-non-Fano}

\subsection{The Class $p$-Fano Matroid Instances $\mathcal{N}_{F}(p)$} \label{sub:p-Fano}


\begin{defn}[Class $p$-Fano Matroid Instances $\mathcal{N}_{F}(p)$ \cite{Bernt-matroid}]
Matroid instance $\mathcal{N}_{F}(p)$ is referred to as the class $p$-Fano matroid instance if (i) $n=2p+3$, $f(\mathcal{N}_{F}(p))=p+1$, (ii) $N_{0}=[p+1]\in \mathcal{B}$ and (iii) $N_{i}\in \mathcal{C}, \forall i\in [n]$, where
\begin{equation} \label{eq:Fano-circuit-N_i}
N_{i}=
  \left \{
    \begin{array}{ccc}
     &([p+1]\backslash \{i\})\cup \{n-i\},\ \ \ \ \ \ \ i\in [p+1], \ \ \ \ \ \ \ \ \ \ \ \ \ \ \ \ \ \ \ \ \ \ \
     \\
     \\
     &\{i, n-i, n\},\ \ \ \ \ \ \ \ \ \ \ \ \ \ \ \ \ \ \ \ i\in [p+2:2p+2], \ \ \ \ \ \ \ \ \ \ \ \ \ \ \
     \\
     \\
     &[p+2:2p+2], \ \ \ \ \ \ \ \ \ \ \ \ \ \ \ \ i=n=2p+3. \ \ \ \ \ \ \ \ \ \ \ \ \ \ \ \ \
    \end{array}
  \right.
\end{equation}
\end{defn}

\begin{exmp}[$\mathcal{N}_{F}(2)$]
It can be verified that $\mathcal{N}_{F}(2)=\mathcal{N}_{F}$. In other words, the Fano matroid instance is a special case of the class $p$-Fano matroid for $p=2$. 
\end{exmp}

\begin{exmp}[$\mathcal{N}_{F}(3)$ \cite{Arman-4-3}]
Consider the class $p$-Fano matroid for $p=3$. It can be seen that for matroid $\mathcal{N}_{F}(3)$, $n=9$, $f(\mathcal{N}_{F}(3))=4$, set $N_{0}=[4]$ is a basis set, and the following sets $N_{i}, i\in [9]$ are circuit sets:
\begin{align}
    N_{1}&=\{1,2,3,5\},\ \ N_{2}=\{1,2,4,6\},\ \ N_{3}=\{1,3,4,7\},
    \nonumber
    \\
    N_{4}&=\{2,3,4,8\},\ \ N_{5}=\{1,8,9\},\ \ \ \ \ N_{6}=\{2,7,9\},
    \nonumber
    \\
    N_{7}&=\{3,6,9\},\ \ \ \ \ N_{8}=\{4,5,9\},\ \ \ \ \ N_{9}=\{5,6,7,8\}.
    \label{eq:-N1-matroid}
\end{align}
\end{exmp}

\begin{thm}[\hspace{-0.05 ex}\cite{Bernt-matroid, pena-matroid}] \label{thm:p-Fano-matroid}
The class $p$-Fano Matroid $\mathcal{N}_{F}(p)$ is linearly representable over $\mathbb{F}_{q}$ iff field $\mathbb{F}_{q}$ has characteristic $p$. 
\end{thm}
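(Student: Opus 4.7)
\medskip
\noindent\textbf{Proof plan for Theorem \ref{thm:p-Fano-matroid}.}

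The plan is to exploit the basis and circuit structure of $\mathcal{N}_{F}(p)$ in sequence to narrow down any candidate $t$-linear representation until the last circuit $N_{n}=[p+2:2p+2]$ forces the field relation $p\boldsymbol{s}=\boldsymbol{0}$ to admit a nontrivial $t$-dimensional family of solutions, which happens iff $\mathrm{char}(\mathbb{F}_{q})=p$. The whole argument will live in the $t$-linear setting and will use only block-matrix manipulations, in line with the matrix-only approach promised by the paper.

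First, I normalize via left multiplication by $(\boldsymbol{H}^{N_{0}})^{-1}$ (valid since $N_{0}\in\mathcal{B}$) so that $\boldsymbol{H}^{\{i\}}=\boldsymbol{E}_{i}$ for $i\in[p+1]$, where $\boldsymbol{E}_{i}$ is the block column containing $\boldsymbol{I}_{t}$ in block position $i$ and zero elsewhere. Next, for each $i\in[p+2:2p+2]$ with $k:=n-i\in[p+1]$, the circuit $N_{k}\in\mathcal{C}$ forces the columns of $\boldsymbol{H}^{\{i\}}$ into the span of $\{\boldsymbol{E}_{j}:j\neq k\}$ (so its block in position $k$ is $\boldsymbol{0}$), and the minimality of $N_{k}$ (removing any $\boldsymbol{E}_{j_{0}}$ with $j_{0}\neq k$ must still leave rank $pt$) forces every remaining $t\times t$ block of $\boldsymbol{H}^{\{i\}}$ to be invertible. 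Then I use the $3$-element circuit $N_{i}=\{i,n-i,n\}$ to write $\boldsymbol{H}^{\{n\}}=\boldsymbol{H}^{\{i\}}\boldsymbol{P}_{i}+\boldsymbol{E}_{n-i}\boldsymbol{Q}_{i}$ with invertible $\boldsymbol{P}_{i},\boldsymbol{Q}_{i}$; right-multiplying each $\boldsymbol{H}^{\{i\}}$ by $\boldsymbol{P}_{i}$ (which preserves every column space and hence every rank in \eqref{matroid-linear-representation}) normalizes $\boldsymbol{P}_{i}=\boldsymbol{I}_{t}$. Writing $\boldsymbol{H}^{\{n\}}=\sum_{j\in[p+1]}\boldsymbol{E}_{j}\boldsymbol{B}_{j}$ and matching blocks then yields $\boldsymbol{H}^{\{i\}}=\boldsymbol{H}^{\{n\}}-\boldsymbol{E}_{n-i}\boldsymbol{B}_{n-i}$, with every $\boldsymbol{B}_{j}$ invertible.

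Finally, the last circuit $N_{n}$ demands that $\boldsymbol{H}^{N_{n}}$ has nullity exactly $t$. Writing an arbitrary nullspace element as $(\boldsymbol{y}_{1};\ldots;\boldsymbol{y}_{p+1})$ (indexed by $j=n-i$) and substituting the formula for $\boldsymbol{H}^{\{i\}}$, then projecting onto the $k$-th block and using invertibility of $\boldsymbol{B}_{k}$, I obtain $\boldsymbol{y}_{k}=\boldsymbol{s}$ for every $k$, where $\boldsymbol{s}:=\sum_{j}\boldsymbol{y}_{j}$. Consistency reduces to $\boldsymbol{s}=(p+1)\boldsymbol{s}$, i.e., $p\boldsymbol{s}=\boldsymbol{0}$. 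A $t$-dimensional solution space exists iff this identity holds for every $\boldsymbol{s}\in\mathbb{F}_{q}^{t}$; since $p$ is prime, this is equivalent to $\mathrm{char}(\mathbb{F}_{q})=p$. For the converse, taking $\boldsymbol{B}_{j}=\boldsymbol{I}_{t}$ for all $j$ when $\mathrm{char}(\mathbb{F}_{q})=p$ supplies an explicit representation: all rank conditions in \eqref{matroid-linear-representation} collapse to routine block-identity computations, and the required dependence for $N_{n}$ is $\sum_{i\in[p+2:2p+2]}\boldsymbol{H}^{\{i\}}=p\,\boldsymbol{H}^{\{n\}}=\boldsymbol{0}$.

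The step I expect to be most delicate is the vector-case normalization that reduces $\boldsymbol{P}_{i}$ to $\boldsymbol{I}_{t}$: because the $\boldsymbol{B}_{j}$ are only known to be invertible and carry no commutativity guarantees, I must use the freedom to right-multiply each $\boldsymbol{H}^{\{i\}}$ by its own $t\times t$ matrix \emph{independently} over $i\in[p+2:2p+2]$, and then verify that this per-index rescaling is compatible with every other circuit containing that index and preserves the invertibility of every middle block. Once this normalization is in place, the scalar-like computation that pins down $p\boldsymbol{s}=\boldsymbol{0}$ carries through verbatim and isolates the characteristic.
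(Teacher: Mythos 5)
Your proposal is correct, and the necessity direction takes a genuinely different route from the paper's. The paper (Proposition \ref{prop:Fano-necessary}) works directly with the abstract coefficient matrices $\boldsymbol{M}_{j,i}$ supplied by the circuit relations and grinds through a chain of identities (\eqref{eq:Msq=i-j-1}--\eqref{eq:one-to-last-necessary-p-Fano}) to conclude $\sum_{i\in[p]}\boldsymbol{I}_t=0$; you instead spend the two available symmetries up front --- left-multiplying by $(\boldsymbol{H}^{N_0})^{-1}$ to make the basis columns block-elementary, and right-multiplying each column block $\boldsymbol{H}^{\{i\}}$, $i\in[p+2:2p+2]$, by $\boldsymbol{P}_i^{-1}$'s inverse partner to normalize the three-element circuits --- so that the entire representation is parametrized by the invertible blocks $\boldsymbol{B}_j$ of $\boldsymbol{H}^{\{n\}}$, and the characteristic condition drops out of a short nullspace computation on $\boldsymbol{H}^{N_n}$. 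Both proofs are pure matrix manipulation, but yours is shorter and makes the mechanism transparent: the nullspace of $\boldsymbol{H}^{[p+2:2p+2]}$ is forced to be $\{(\boldsymbol{s},\dots,\boldsymbol{s})\}$ with $p\boldsymbol{s}=\boldsymbol{0}$, so dependence of that set is \emph{equivalent} to characteristic $p$ --- which also hands you the non-Fano necessity (Proposition \ref{prop:non-Fano-necessary}) for free, something the paper proves by a separate argument. The normalization step you flag as delicate is indeed sound: each $\boldsymbol{H}^{\{i\}}$ is rescaled independently on the right, which preserves every column space (hence every rank in \eqref{matroid-linear-representation}), preserves the zero block at position $n-i$, and preserves invertibility of the remaining blocks. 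The only place you are terser than the paper is sufficiency: asserting that the rank conditions for $\boldsymbol{B}_j=\boldsymbol{I}_t$ ``collapse to routine computations'' skips the one non-trivial check, namely that any $p$ of the columns indexed by $[p+2:2p+2]$ remain independent (so that this set is a genuine circuit rather than merely dependent); that is exactly what the paper's Lemma \ref{lem:matrix-F-(p+1)-p} establishes by Gaussian elimination, and your writeup should include the analogous two-line elimination.
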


The proof can be concluded from Propositions \ref{prop:Fano-necessary} and \ref{prop:Fano-sufficient}, which establish the necessary and sufficient conditions, respectively.
We provide a completely independent alternative proof of Proposition \ref{prop:Fano-necessary}, which exclusively relies on matrix manipulations instead of the more involved concepts of number theory and matroid theory used in \cite{Bernt-matroid} and \cite{pena-matroid}.

\begin{prop} \label{prop:Fano-necessary}
The class $p$-Fano matroid $\mathcal{N}_{F}(p)$ is linearly representable over $\mathbb{F}_{q}$ only if field $\mathbb{F}_{q}$ does have characteristic $p$.
\end{prop}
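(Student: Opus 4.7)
The plan is to reduce the question to a single linear-algebraic constraint on the ground field by successively exploiting the basis property of $N_{0}$ and the circuit properties of the remaining sets $N_{i}$. Throughout I work with the two symmetries implicit in Definition~\ref{def:linear-rep} and Remark~\ref{rem:1}: invertible row transformations of the whole matrix $\boldsymbol{H}$, and invertible right multiplications $\boldsymbol{H}^{\{i\}}\to\boldsymbol{H}^{\{i\}}\boldsymbol{K}_{i}$ on each local matrix, neither of which affects the rank conditions in \eqref{matroid-linear-representation}.

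First, because $N_{0}=[p+1]$ is a basis set, $\boldsymbol{H}^{N_{0}}$ is a square invertible $(p+1)t\times(p+1)t$ matrix, and a row change of basis lets me assume $\boldsymbol{H}^{\{i\}}=\boldsymbol{E}_{i}$ for every $i\in[p+1]$, where $\boldsymbol{E}_{i}$ denotes the $(p+1)t\times t$ block matrix whose $i$-th $t\times t$ block equals $\boldsymbol{I}_{t}$ and whose remaining blocks are zero. For each $i\in[p+1]$ the circuit $N_{i}=([p+1]\setminus\{i\})\cup\{n-i\}$ together with \eqref{eq:circuit-set-M-invertible} then forces
\begin{equation*}
\boldsymbol{H}^{\{n-i\}}=\sum_{l\in[p+1]\setminus\{i\}}\boldsymbol{E}_{l}\,\boldsymbol{M}_{n-i,l}
\end{equation*}
with each $\boldsymbol{M}_{n-i,l}$ invertible. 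Reindexing by $k=n-i\in[p+2:2p+2]$, this says the $l$-th block of $\boldsymbol{H}^{\{k\}}$ vanishes precisely when $l=n-k$, and is an invertible $t\times t$ matrix otherwise.

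Second, for each such $k$ the circuit $N_{k}=\{k,n-k,n\}$ produces an identity $\boldsymbol{H}^{\{n\}}=\boldsymbol{H}^{\{k\}}\boldsymbol{A}_{k}+\boldsymbol{E}_{n-k}\boldsymbol{B}_{k}$ with invertible $\boldsymbol{A}_{k},\boldsymbol{B}_{k}$. I would absorb each $\boldsymbol{A}_{k}$ into $\boldsymbol{H}^{\{k\}}$ using the right-multiplication freedom (this preserves the zero block in position $n-k$), after which comparing block by block shows that the $l$-th block of $\boldsymbol{H}^{\{n\}}$ is a single invertible matrix $\boldsymbol{D}_{l}$ independent of $k$, with $[\boldsymbol{H}^{\{k\}}]_{l}=\boldsymbol{D}_{l}$ whenever $l\neq n-k$. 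A paired rescaling---left multiply row block $l$ by $\boldsymbol{D}_{l}^{-1}$ and right multiply the $l$-th column block of $\boldsymbol{H}$ by $\boldsymbol{D}_{l}$, simultaneously for every $l\in[p+1]$---then reduces everything to the canonical form
\begin{equation*}
\boldsymbol{H}^{\{n\}}=\sum_{l\in[p+1]}\boldsymbol{E}_{l},\qquad \boldsymbol{H}^{\{k\}}=\sum_{l\in[p+1]\setminus\{n-k\}}\boldsymbol{E}_{l},\ \ k\in[p+2:2p+2],
\end{equation*}
while leaving the earlier assignments $\boldsymbol{H}^{\{i\}}=\boldsymbol{E}_{i}$, $i\in[p+1]$, intact.

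Finally, I would apply the last circuit condition $N_{n}=[p+2:2p+2]$. This demands $\mathrm{rank}(\boldsymbol{H}^{N_{n}})=pt$ and therefore guarantees the existence of vectors $\boldsymbol{v}_{i}\in\mathbb{F}_{q}^{t}$, not all zero, with $\sum_{i\in[p+1]}\boldsymbol{H}^{\{n-i\}}\boldsymbol{v}_{i}=\boldsymbol{0}$. Substituting the explicit form from above and equating each row block gives $\boldsymbol{s}-\boldsymbol{v}_{l}=\boldsymbol{0}$ for every $l\in[p+1]$, where $\boldsymbol{s}:=\sum_{i\in[p+1]}\boldsymbol{v}_{i}$. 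Thus all $\boldsymbol{v}_{i}$ coincide with a common $\boldsymbol{s}$, and $\boldsymbol{s}=(p+1)\boldsymbol{s}$, i.e.\ $p\,\boldsymbol{s}=\boldsymbol{0}$. The non-triviality forces $\boldsymbol{s}\neq\boldsymbol{0}$, so $p\cdot 1_{\mathbb{F}_{q}}=0$, and since $p$ is prime this is possible only when $\mathrm{char}(\mathbb{F}_{q})=p$. The step I expect to be the main obstacle is the combined row/column normalization producing the canonical form above: one must verify that the freedoms remaining after fixing $\boldsymbol{H}^{\{i\}}=\boldsymbol{E}_{i}$ for $i\in[p+1]$ are exactly rich enough to align every $\boldsymbol{D}_{l}$ with the identity while preserving each earlier assignment and each vanishing block established in the previous step.
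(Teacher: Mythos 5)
Your proof is correct, and it reaches the paper's conclusion by the same overall route (basis $N_0$, then the circuits $([p+1]\setminus\{i\})\cup\{n-i\}$, then $\{i,n-i,n\}$, then $[p+2:2p+2]$), but the execution is genuinely different and cleaner. The paper never normalizes: it carries the raw coefficient matrices $\boldsymbol{M}_{n-i,j}$ through a chain of identities (\eqref{eq:Msq=i-j-1}--\eqref{eq:condition-1-i,l}), compares two expansions of $\boldsymbol{H}^{\{2p+2\}}$ to get \eqref{eq:last-1}, and must separately argue that the $p$ summands there are pairwise equal and invertible before concluding $\sum_{i\in[p]}\boldsymbol{I}_t=0$. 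You instead spend the symmetry group of the problem (block row operations plus per-column right multiplications, both of which preserve every rank condition in \eqref{matroid-linear-representation}) to reach the canonical form $\boldsymbol{H}^{\{i\}}=\boldsymbol{E}_i$, $\boldsymbol{H}^{\{n-i\}}=\sum_{l\neq i}\boldsymbol{E}_l$, $\boldsymbol{H}^{\{n\}}=\sum_l\boldsymbol{E}_l$; the step you flagged as the main obstacle is in fact unproblematic, since after left-multiplying row block $l$ by $\boldsymbol{D}_l^{-1}$ for every $l$ the columns $\boldsymbol{H}^{\{k\}}$ and $\boldsymbol{H}^{\{n\}}$ land in the desired form automatically and only the columns $i\in[p+1]$ need the compensating right factor $\boldsymbol{D}_i$. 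Your endgame is also slightly more economical than the paper's: you use only the rank deficiency of $\boldsymbol{H}^{[p+2:2p+2]}$ (a single nontrivial column dependency $\boldsymbol{v}_1=\dots=\boldsymbol{v}_{p+1}=\boldsymbol{s}$, $p\boldsymbol{s}=\boldsymbol{0}$) rather than the full circuit structure with invertible coefficients, which makes the argument marginally more robust while the paper's version keeps explicit track of which matrices are invertible at each stage. What the normalization buys is transparency in the final step; what the paper's version buys is that the intermediate identities such as \eqref{eq:Msq=i-j-3} are stated in a form directly reusable in the proof of Proposition \ref{prop:non-Fano-necessary}.
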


\begin{proof}
Since each $N_{i}, i\in [n]$ in \eqref{eq:Fano-circuit-N_i} is a circuit set, according to \eqref{eq:circuit-set-M-invertible}, we have
\begin{align}
    &\boldsymbol{H}^{\{n-i\}}=\sum_{j\in [p+1]\backslash \{i\}} \boldsymbol{H}^{\{j\}} \boldsymbol{M}_{n-i,j}, \ \ \ \ i\in [p+1],
    \label{eq:H-with-H-1}
    \\
    \nonumber
    \\
    &\boldsymbol{H}^{\{n\}}=\boldsymbol{H}^{\{i\}}\boldsymbol{M}_{n,i}+ \boldsymbol{H}^{\{n-i\}}\boldsymbol{M}_{n, n-i}, \ \ \ \ i\in [p+1],
    \label{eq:H-with-H-2}
    \\
    \nonumber
    \\
    &\boldsymbol{H}^{\{2p+2\}}=\sum_{i\in [p+1]\backslash \{1\}} \boldsymbol{H}^{\{n-i\}}\boldsymbol{M}_{2p+2,n-i},
    \label{eq:H-with-H-3}
\end{align}
where all the matrices $\boldsymbol{M}_{n-i,j}, j\in [p+1]\backslash \{i\}, i\in [p+1]$, $\boldsymbol{M}_{n,i}, \boldsymbol{M}_{n, n-i}, i\in [p+1]$ and $\boldsymbol{M}_{2p+2,n-i}, i\in [p+1]\backslash \{1\}$ are invertible.
Now, we replace $\boldsymbol{H}^{\{n-i\}}$ in \eqref{eq:H-with-H-2} with its equal term in \eqref{eq:H-with-H-1}, leading to
\begin{align}
    \boldsymbol{H}^{\{n\}}&=\boldsymbol{H}^{\{i\}}\boldsymbol{M}_{n,i} + \Big ( \sum_{j\in [p+1]\backslash \{i\}} \boldsymbol{H}^{\{j\}} \boldsymbol{M}_{n-i,j} \Big )\  \boldsymbol{M}_{n, n-i}
    \nonumber
    \\
    \nonumber
    \\
    &=[\boldsymbol{H}^{\{1\}},\dots,\boldsymbol{H}^{\{i-1\}},\boldsymbol{H}^{\{i\}},\boldsymbol{H}^{\{i+1\}},\dots, \boldsymbol{H}^{\{p+1\}}]
    \nonumber
    \\
    \nonumber
    \\
    & \ \ \ \ \ \ \ \ \ \ \ \ \ \ \times
    \underbrace{
    \begin{bmatrix}
    \boldsymbol{M}_{n-i,1}\ \boldsymbol{M}_{n,n-i}
    \\
    \vdots
    \\
    \boldsymbol{M}_{n-i,i-1}\ \boldsymbol{M}_{n,n-i}
    \\
    \\
    \boldsymbol{M}_{n,i}
    \\
    \\
    \boldsymbol{M}_{n-i,i+1}\ \boldsymbol{M}_{n,n-i}
    \\
    \vdots
    \\
    \boldsymbol{M}_{n-i,p+1}\ \boldsymbol{M}_{n,n-i}
    \end{bmatrix}
    }_\textrm{$\boldsymbol{M}_{i}^{\ast}$}
    \nonumber
    \\
    \nonumber
    \\
    &=\boldsymbol{H}^{[p+1]}\boldsymbol{M}_{i}^{\ast}.
    \nonumber
\end{align}
Thus, we have
\begin{align}
    \boldsymbol{H}^{\{n\}}=\boldsymbol{H}^{[p+1]}\boldsymbol{M}_{i}^{\ast}, \ \ \ \forall i\in [p+1].
\end{align}
Since $\boldsymbol{H}^{[p+1]}$ is full-rank, we must have
\begin{equation} \label{eq:Mstar-Mstar}
    \boldsymbol{M}_{1}^{\ast}=\boldsymbol{M}_{2}^{\ast}=\dots =\boldsymbol{M}_{p+1}^{\ast}.
\end{equation}
Consequently, for any $i\in [p+1]$, \eqref{eq:Mstar-Mstar} will lead to
\begin{align}
    &\boldsymbol{M}_{n,i}=\boldsymbol{M}_{n-j,i}\ \boldsymbol{M}_{n,n-j}, \ \ \ \forall i,j\in [p+1], \ i\neq j, 
    \label{eq:Msq=i-j-1}
    \\
    \nonumber
    \\
    \Rightarrow&
    \boldsymbol{M}_{n,i^{\prime}}=\boldsymbol{M}_{n-j,i^{\prime}}\ \boldsymbol{M}_{n,n-j}=\boldsymbol{M}_{n-l,i^{\prime}}\ \boldsymbol{M}_{n,n-l}, 
    \nonumber
    \\
    &\ \ \ \ \ \ \ \forall i^{\prime},j,l\in [p+1], \ i^{\prime}, j, l \ \text{ are distinct},  
    \label{eq:Msq=i-j-2}
    \\
    \nonumber
    \\
    \Rightarrow&
    \boldsymbol{M}_{n,n-l}=\boldsymbol{M}_{n-l,i^{\prime}}^{-1}\ \boldsymbol{M}_{n-j,i^{\prime}}\ \boldsymbol{M}_{n,n-j}, 
    \nonumber
    \\
    &\ \ \ \ \ \ \ \forall i^{\prime},j,l\in [p+1], \ i^{\prime}, j, l \ \text{are distinct},  
    \label{eq:Msq=i-j-3}
    \\
    \nonumber
    \\
    \Rightarrow&
    \boldsymbol{M}_{n-l,i_{1}}^{-1}\ \boldsymbol{M}_{n-j,i_{1}}\boldsymbol{M}_{n,n-j}=\boldsymbol{M}_{n-l,i_{2}}^{-1}\ \boldsymbol{M}_{n-j,i_{2}}\boldsymbol{M}_{n,n-j}, 
    \nonumber
    \\
    &\ \ \ \ \ \ \forall i_{1}, i_{2}, j, l\in [p+1], \ i_{1}, i_{2}, j, l \ \text{ are distinct},
    \label{eq:condition-1-i,j,l}
    \\
    \nonumber
    \\
    \Rightarrow&
    \boldsymbol{M}_{n-l,i_{1}}^{-1}\ \boldsymbol{M}_{n-j,i_{1}}=\boldsymbol{M}_{n-l,i_{2}}^{-1}\ \boldsymbol{M}_{n-j,i_{2}}, 
    \nonumber
    \\
    & \ \ \ \ \ \forall i_{1}, i_{2}, j, l\in [p+1], \ i_{1}, i_{2}, j, l \ \text{ are distinct}. 
    \label{eq:condition-1-i,l}
\end{align}
On the other hand, by replacing $\boldsymbol{H}^{\{n-i\}}$ in \eqref{eq:H-with-H-3} with its equal term in \eqref{eq:H-with-H-1}, we have
\begin{align}
    \boldsymbol{H}^{\{2p+2\}}&= \sum_{i\in [p+1]\backslash \{1\}} \boldsymbol{H}^{\{n-i\}} \boldsymbol{M}_{2p+2,n-i}
    \nonumber
    \\
    \nonumber
    \\
    &=\sum_{i\in [p+1]\backslash \{1\}} \Big (\sum_{j\in [p+1]\backslash \{i\}} \boldsymbol{H}^{\{j\}} \boldsymbol{M}_{n-i,j} \Big )\boldsymbol{M}_{2p+2,n-i}
    \nonumber
    \\
    \nonumber
    \\
    &= \boldsymbol{H}^{\{1\}}\ \sum_{i\in [p+1]\backslash \{1\}} \boldsymbol{M}_{n-i,1} \boldsymbol{M}_{2p+2,n-i} \ \ +
    \nonumber
    \\
    &\ \ \sum_{j\in [p+1]\backslash \{1\}} \boldsymbol{H}^{\{j\}} \sum_{i\in [p+1]\backslash \{1,j\}} \boldsymbol{M}_{n-i,j}\ \boldsymbol{M}_{2p+2,n-i}.
    \label{eq-H^[2p+2]-2}
\end{align}
Moreover, in \eqref{eq:H-with-H-1} for $i=1$, we have (note $n-1=2p+2$)
\begin{align}
    \boldsymbol{H}^{\{2p+2\}}&=\sum_{j\in [p+1]\backslash \{1\}} \boldsymbol{H}^{\{j\}} \boldsymbol{M}_{2p+2,j}
    \nonumber
    \\
    \nonumber
    \\
    &=\boldsymbol{H}^{\{1\}} \times \boldsymbol{0}
    +\sum_{j\in [p+1]\backslash \{1\}} \boldsymbol{H}^{\{j\}} \boldsymbol{M}_{2p+2,j}.
    \label{eq-H^[2p+2]-1}
\end{align}
Now, since both \eqref{eq-H^[2p+2]-1} and \eqref{eq-H^[2p+2]-2} are equal to $\boldsymbol{H}^{\{2p+2\}}$, and each $\boldsymbol{H}^{\{j\}}, \in [p+1]$ is linearly independent, the coefficients of $\boldsymbol{H}^{\{j\}}, \in [p+1]$ must be equal. Thus, 
\begin{align}
    &\sum_{i\in [p+1]\backslash \{1\}} \boldsymbol{M}_{n-i,1} \boldsymbol{M}_{2p+2,n-i}=0,
    \label{eq:last-1}
    \\
    \nonumber
    \\
    &\sum_{i\in [p+1]\backslash \{1,j\}} \boldsymbol{M}_{n-i,j}\ \boldsymbol{M}_{2p+2,n-i}=\boldsymbol{M}_{2p+2,j},
    \nonumber
    \\
    &
    \ \ \ \ \ \ \ \ \ \ \ \ \ \ \  \forall j\in [p+1]\backslash \{i\}.
    \label{eq:11}
\end{align}
From \eqref{eq:11}, for $j_{1}, j_{2}\in [p+1]\backslash \{i\}, j_{1}\neq j_{2}$, we have
\begin{align}
    \sum_{i\in [p+1]\backslash \{1,j_{1}\}} \boldsymbol{M}_{n-i,j_{1}}\ \boldsymbol{M}_{2p+2,n-i}&=\boldsymbol{M}_{2p+2,j_{1}},
    \label{eq:pr:01}
    \\
    \nonumber
    \\
    \sum_{i\in [p+1]\backslash \{1,j_{2}\}} \boldsymbol{M}_{n-i,j_{2}}\ \boldsymbol{M}_{2p+2,n-i}&=\boldsymbol{M}_{2p+2,j_{2}}.
     \label{eq:pr:02}
\end{align}
Now, for $l\in [p+1]\backslash \{1,j_{1},j_{2}\}$, multiplying \eqref{eq:pr:01} and \eqref{eq:pr:02}, respectively, by $ \boldsymbol{M}_{n-l,j_{1}}^{-1}$ and $\boldsymbol{M}_{n-l,j_{2}}^{-1}$ will lead to
\begin{align}
    &\sum_{i\in [p+1]\backslash \{1,j_{1}\}} \boldsymbol{M}_{n-l,j_{1}}^{-1}\  \boldsymbol{M}_{n-i,j_{1}}\ \boldsymbol{M}_{2p+2,n-i}=
    \nonumber
    \\
    &\boldsymbol{M}_{n-l,j_{1}}^{-1} \ \boldsymbol{M}_{2p+2,j_{1}},
    \label{eq:proof:03}
    \\
    \nonumber
    \\
    &\sum_{i\in [p+1]\backslash \{1,j_{2}\}} \boldsymbol{M}_{n-l,j_{2}}^{-1}\ \boldsymbol{M}_{n-i,j_{2}}\ \boldsymbol{M}_{2p+2,n-i}=
    \nonumber
    \\
    &\boldsymbol{M}_{n-l,j_{2}}^{-1} \ \boldsymbol{M}_{2p+2,j_{2}}.
    \label{eq:proof:04}
\end{align}
Now, according to \eqref{eq:condition-1-i,l}, for $  i_{1}=j_{1}, j=1$ and $i_{2}=j_{2}$, we have  $\boldsymbol{M}_{n-l,j_{1}}^{-1} \ \boldsymbol{M}_{2p+2,j_{1}}=\boldsymbol{M}_{n-l,j_{2}}^{-1} \ \boldsymbol{M}_{2p+2,j_{2}}$. Thus, the left hand side (LHS) of \eqref{eq:proof:03} and \eqref{eq:proof:04} are equal. Hence, after removing common terms on the LHS of \eqref{eq:proof:03} and \eqref{eq:proof:04}, we get
\begin{align}
    &\boldsymbol{M}_{n-l,j_{2}}^{-1}\ \boldsymbol{M}_{n-j_{1},j_{2}}\ \boldsymbol{M}_{2p+2,n-j_{1}}=
    \nonumber
    \\
    &\boldsymbol{M}_{n-l,j_{1}}^{-1}\  \boldsymbol{M}_{n-j_{2},j_{1}}\ \boldsymbol{M}_{2p+2,n-j_{2}}.
    \label{eq:12}
\end{align}
Now, based on \eqref{eq:condition-1-i,l}, for $i_1=j_2, j = j_1, i_2=1$, we have $\boldsymbol{M}_{n-l,j_{2}}^{-1}\ \boldsymbol{M}_{n-j_{1},j_{2}}=\boldsymbol{M}_{n-l,1}^{-1}\ \boldsymbol{M}_{n-j_{1},1}$ and for $i_1=j_1, j=j_2, i_2=1$, we have $\boldsymbol{M}_{n-l,j_{1}}^{-1}\  \boldsymbol{M}_{n-j_{2},j_{1}}=\boldsymbol{M}_{n-l,1}^{-1}\  \boldsymbol{M}_{n-j_{2},1}$. Thus, \eqref{eq:12} will lead to
\begin{align}
    &\boldsymbol{M}_{n-l,1}^{-1}\ \boldsymbol{M}_{n-j_{1},1}\ \boldsymbol{M}_{2p+2,n-j_{1}}=
    \nonumber
    \\
    &\boldsymbol{M}_{n-l,1}^{-1}\  \boldsymbol{M}_{n-j_{2},1}\ \boldsymbol{M}_{2p+2,n-j_{2}}.
    \label{eq:condition-1-leadto-1and1}
\end{align}
Therefore,
\begin{align}
    \ \boldsymbol{M}_{n-j_{1},1}\ \boldsymbol{M}_{2p+2,n-j_{1}}&=  \boldsymbol{M}_{n-j_{2},1}\ \boldsymbol{M}_{2p+2,n-j_{2}},
    \label{eq:123}
\end{align}
which means that all the terms $\boldsymbol{M}_{n-i,1}\boldsymbol{M}_{2p+2,n-i}, i\in [p+1]\backslash \{1\}$ in \eqref{eq:last-1} are equal. Let $\boldsymbol{M}_{n-i,1}\boldsymbol{M}_{2p+2,n-i}=\boldsymbol{T}, i\in [p+1]\backslash \{1\}$. Then,
\begin{align}
    \sum_{i\in [p+1]\backslash \{1\}} \boldsymbol{M}_{n-i,1}\boldsymbol{M}_{2p+2,n-i}
    =
    \Big (\sum_{i\in [p+1]\backslash \{1\}} \boldsymbol{I}_{t}\Big )\ \boldsymbol{T}=0.
    \label{eq:one-to-last-necessary-p-Fano}
\end{align}
Now, since matrix $\boldsymbol{T}$ is invertible, \eqref{eq:one-to-last-necessary-p-Fano} requires that
\begin{equation}
    \sum_{i\in [p]} \boldsymbol{I}_{t} = 0,
\end{equation}
which is possible only over fields with characteristic $p$. This completes the proof.
\end{proof}

In the following, Lemma \ref{lem:matrix-F-(p+1)-p} is provided for proving Propositions \ref{prop:Fano-sufficient} and \ref{prop:non-Fano-sufficient}.

\begin{lem} \label{lem:matrix-F-(p+1)-p}
For the following matrix, if $\mathbb{F}_{q}$ has characteristic $p$, then $[p+1]$ is a circuit set. Otherwise, $[p+1]$ is a basis set.
\begin{align}
    \begin{bmatrix}
    1       & 1      & 1      & \hdots & 1      & 1      & 0      \\
    1       & 1      & 1      & \hdots & 1      & 0      & 1      \\
    1       & 1      & 1      & \hdots & 0      & 1      & 1      \\
    \vdots  & \vdots & \vdots & \ddots & \vdots & \vdots & \vdots \\
    1       & 1      & 0      & \hdots & 1      & 1      & 1      \\
    1       & 0      & 1      & \hdots & 1      & 1      & 1      \\
    0       & 1      & 1      & \hdots & 1      & 1      & 1      
    \end{bmatrix}\in \mathbb{F}_{q}^{(p+1)\times (p+1)}.
\end{align}
\end{lem}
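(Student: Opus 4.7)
The plan is to recognize the given matrix as $\boldsymbol{M}=\boldsymbol{J}-\boldsymbol{A}$, where $\boldsymbol{J}\in\mathbb{F}_q^{(p+1)\times(p+1)}$ is the all-ones matrix and $\boldsymbol{A}$ is the anti-diagonal permutation matrix defined by $A_{i,j}=1$ iff $i+j=p+2$ (and $0$ otherwise). This matches the lemma's entries because the zeros of $\boldsymbol{M}$ occur exactly where $i+j=p+2$. Two structural facts will do all the work: every row and column of $\boldsymbol{A}$ contains a single $1$, so $\boldsymbol{A}\boldsymbol{J}=\boldsymbol{J}$ and $\boldsymbol{A}\mathbf{1}=\mathbf{1}$; moreover, $\boldsymbol{A}$ is an involution, i.e., $\boldsymbol{A}^{2}=\boldsymbol{I}$.

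First I would left-multiply by $\boldsymbol{A}$ to obtain the key identity $\boldsymbol{A}\boldsymbol{M}=\boldsymbol{A}\boldsymbol{J}-\boldsymbol{A}^{2}=\boldsymbol{J}-\boldsymbol{I}$. The spectrum of $\boldsymbol{J}$ is standard: eigenvalue $p+1$ on the all-ones vector $\mathbf{1}$ and eigenvalue $0$ with multiplicity $p$ on $\mathbf{1}^{\perp}$. Consequently $\boldsymbol{J}-\boldsymbol{I}$ has eigenvalues $p$ (simple) and $-1$ (multiplicity $p$), so $\det(\boldsymbol{J}-\boldsymbol{I})=p(-1)^{p}$. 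Because $\boldsymbol{A}$ is a permutation matrix with $\det(\boldsymbol{A})=\pm 1$, this yields $\det(\boldsymbol{M})=\pm\, p(-1)^{p}$, which vanishes in $\mathbb{F}_q$ exactly when $p\equiv 0$ in $\mathbb{F}_q$, i.e., precisely when $\mathbb{F}_q$ has characteristic $p$.

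To conclude the circuit-versus-basis dichotomy I would split into two cases. If $\mathbb{F}_q$ has characteristic different from $p$, then $\boldsymbol{M}$ is invertible, all $p+1$ columns are linearly independent, and they form a basis set of the matrix matroid on $[p+1]$. If $\mathbb{F}_q$ has characteristic $p$, then the identity $\boldsymbol{A}\boldsymbol{M}=\boldsymbol{J}-\boldsymbol{I}$ combined with the invertibility of $\boldsymbol{A}$ gives $\mathrm{rank}(\boldsymbol{M})=\mathrm{rank}(\boldsymbol{J}-\boldsymbol{I})=p$, and the direct computation $\boldsymbol{M}\mathbf{1}=(p+1)\mathbf{1}-\boldsymbol{A}\mathbf{1}=p\mathbf{1}=\mathbf{0}$ exhibits $\mathbf{1}$ as a kernel vector spanning the one-dimensional null space of $\boldsymbol{M}$.

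The only subtle point, which I expect to be the main (mild) obstacle, is confirming that $[p+1]$ is a \emph{minimal} dependent set rather than merely dependent. This reduces to observing that the unique (up to scaling) dependency among the columns is carried by $\mathbf{1}$, whose coordinates are all nonzero; hence deleting any single column destroys the dependency and leaves an independent $p$-tuple of columns. The remaining ingredients—the spectral computation for $\boldsymbol{J}$ and the involutive property of $\boldsymbol{A}$—are elementary and should not present difficulty.
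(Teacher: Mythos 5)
Your proposal is correct, and it takes a genuinely different route from the paper. The paper proves the lemma by two rounds of explicit Gaussian elimination: clearing the first column exhibits $p$ pivots among the first $p$ columns, and the final pivot entry works out to $1+(p-1)=p$, which vanishes exactly in characteristic $p$. You instead exploit the structure $\boldsymbol{M}=\boldsymbol{J}-\boldsymbol{A}$ with $\boldsymbol{A}$ the reversal permutation, derive $\boldsymbol{A}\boldsymbol{M}=\boldsymbol{J}-\boldsymbol{I}$, and reduce everything to $\det(\boldsymbol{J}-\boldsymbol{I})=(-1)^{p}p$ together with the kernel computation $\boldsymbol{M}\mathbf{1}=p\,\mathbf{1}$. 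Two remarks on the comparison. First, your approach handles the circuit (i.e., \emph{minimal} dependence) claim more cleanly than the paper does: since in characteristic $p$ the nullity is one and the kernel is spanned by $\mathbf{1}$, whose coordinates are all nonzero, every $p$-column subset is independent and the coefficients in \eqref{eq:circuit-set-M-invertible} are all nonzero (hence invertible in the scalar case); the paper's elimination only exhibits dependence of the last column on the first $p$ and leaves the symmetry argument for the other $p$-subsets implicit. Second, a small point of care: the eigenvector/eigenspace language for $\boldsymbol{J}$ degenerates when the characteristic divides $p+1$ (then $\mathbf{1}\in\mathbf{1}^{\perp}$ and $\boldsymbol{J}$ need not be diagonalizable), but the identity $\det(\boldsymbol{J}-\boldsymbol{I})=(-1)^{p}p$ is a polynomial identity over $\mathbb{Z}$ (e.g., by the matrix determinant lemma or the rank-one characteristic polynomial $\lambda^{p}(\lambda-(p+1))$), so it holds over every $\mathbb{F}_q$ and your conclusion is unaffected; just phrase that step via the determinant rather than via an eigenbasis.
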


\begin{proof}
It can easily be observed that running the Gaussian elimination technique on the first column shows that the first $p$ columns are linearly independent, as it results in
\begin{align}
    \begin{bmatrix}
    \textcolor{red}{{\large \textcircled{\small 1}}}       & 1                                                      & 1                                                      & \hdots & 1      & 1      & 0      
    \\
    0                                                      & 0                                                      & 0                                                      & \hdots & 0      & \textcolor{red}{{\large \textcircled{\small -1}}}     & 1      
    \\
    0                                                      & 0                                                      & 0                                                      & \hdots & \textcolor{red}{{\large \textcircled{\small -1}}}     & 0      & 1      
    \\
    \vdots                                                 & \vdots                                                 & \vdots & \ddots & \vdots & \vdots & \vdots 
    \\
    0                                                      & 0                                                      & \textcolor{red}{{\large \textcircled{\small -1}}}     & \hdots & 0      & 0      & 1      
    \\
    0                                                      & \textcolor{red}{{\large \textcircled{\small -1}}}      & 0                                                      & \hdots & 0      & 0      & 1      
    \\
    0                                                      & 1                                                      & 1                                                      & \hdots & 1      & 1      & 1      
    \end{bmatrix}.
\end{align}
Running the Gaussian elimination on the last row will lead to
\begin{align}
    \begin{bmatrix}
    \textcolor{red}{{\large \textcircled{\small 1}}}       & 1                                                      & 1                                                      & \hdots & 1      & 1      & 0      
    \\
    0                                                      & 0                                                      & 0                                                      & \hdots & 0      & \textcolor{red}{{\large \textcircled{\small -1}}}     & 1      
    \\
    0                                                      & 0                                                      & 0                                                      & \hdots & \textcolor{red}{{\large \textcircled{\small -1}}}     & 0      & 1      
    \\
    \vdots                                                 & \vdots                                                 & \vdots & \ddots & \vdots & \vdots & \vdots 
    \\
    0                                                      & 0                                                      & \textcolor{red}{{\large \textcircled{\small -1}}}     & \hdots & 0      & 0      & 1      
    \\
    0                                                      & \textcolor{red}{{\large \textcircled{\small -1}}}      & 0                                                      & \hdots & 0      & 0      & 1      
    \\
    0                                                      & 0                                                      & 0                                                      & \hdots & 0      & 0      &   \textcolor{red}{1+(p-1)}    
    \end{bmatrix}.
\end{align}
Now, it can be seen that if the characteristic of field $\mathbb{F}_{q}$ is $p$, then $1+(p-1)=p=0$, which means that the last column is linearly dependent on the first $p$ columns, and thus, set $[p+1]$ is a circuit set. However, if the characteristic of $\mathbb{F}_{q}$ is not $p$, then $1+(p-1)=p\neq 0$, which means that the last column is also a pivot column, and thus, $[p+1]$ is a basis set.
\end{proof}

\begin{prop} \label{prop:Fano-sufficient}
The class $p$-Fano matroid $\mathcal{N}_{F}(p)$ has scalar ($t=1$) linear representation over fields with characteristic $p$.
\end{prop}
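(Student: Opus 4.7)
The plan is to exhibit an explicit $(p+1) \times n$ matrix $\boldsymbol{H} = [\boldsymbol{h}_1 \mid \cdots \mid \boldsymbol{h}_n]$ with $n = 2p+3$ over $\mathbb{F}_q$ and then verify that the prescribed basis $N_0$ and every prescribed circuit $N_i$, $i \in [n]$, is realized in $\boldsymbol{H}$. The natural construction, driven by symmetry, is as follows: take the first $p+1$ columns to be the standard basis, $\boldsymbol{h}_i = \boldsymbol{e}_i$ for $i \in [p+1]$, which immediately gives $N_0 = [p+1] \in \mathcal{B}$. For each $i \in [p+1]$, the circuit requirement on $N_i = ([p+1]\setminus\{i\}) \cup \{n-i\}$ forces $\boldsymbol{h}_{n-i}$ to lie in the span of $\{\boldsymbol{e}_j : j \in [p+1] \setminus \{i\}\}$ with every coefficient nonzero; the most symmetric choice is $\boldsymbol{h}_{n-i} = \boldsymbol{1} - \boldsymbol{e}_i$, where $\boldsymbol{1}$ is the all-ones vector. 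As $i$ sweeps through $[p+1]$, the index $n-i$ sweeps through $[p+2, 2p+2]$, so this defines all remaining ``middle'' columns. Finally, I set $\boldsymbol{h}_n = \boldsymbol{1}$.

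With this choice, the first two blocks of circuit conditions are immediate. For $i \in [p+1]$, the dependency $\boldsymbol{h}_{n-i} = \sum_{j \in [p+1]\setminus\{i\}} \boldsymbol{h}_j$ is manifest, and any proper subset of $N_i$ reduces to either a subset of the standard basis or a subset of the standard basis together with $\boldsymbol{1} - \boldsymbol{e}_i$, both of which are linearly independent. For each triple $N_i = \{i, n-i, n\}$ with $i \in [p+2, 2p+2]$, writing $i' := n - i \in [p+1]$ reduces the dependency to $\boldsymbol{1} = (\boldsymbol{1} - \boldsymbol{e}_{i'}) + \boldsymbol{e}_{i'}$, i.e.\ $\boldsymbol{h}_n = \boldsymbol{h}_i + \boldsymbol{h}_{n-i}$, and any pair drawn from $\{\boldsymbol{1},\, \boldsymbol{1} - \boldsymbol{e}_{i'},\, \boldsymbol{e}_{i'}\}$ is clearly linearly independent (since $p \ge 1$, so $p+1 \ge 2$).

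The key step — and the only one where the characteristic hypothesis is required — is the last circuit $N_n = [p+2, 2p+2]$. By the construction, the corresponding $(p+1) \times (p+1)$ submatrix $\boldsymbol{H}^{N_n}$ has columns $\{\boldsymbol{1} - \boldsymbol{e}_k : k \in [p+1]\}$, which, up to a reordering of columns, is precisely the matrix appearing in Lemma~\ref{lem:matrix-F-(p+1)-p}. Applying that lemma, in characteristic $p$ the columns of this submatrix form a circuit, matching exactly the requirement on $N_n$; in any other characteristic the same lemma says these columns form a basis, which is inconsistent with the matroid structure. This is where fields of characteristic $p$ become essential, and it is also the main technical step — though the heavy lifting is already packaged inside the lemma, making the present proposition primarily an assembly of verifications. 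Putting the four checks together shows that $\boldsymbol{H}$ is a scalar ($t=1$) linear representation of $\mathcal{N}_F(p)$ over $\mathbb{F}_q$, completing the proof.
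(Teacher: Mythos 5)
Your construction is exactly the paper's: the same matrix $\boldsymbol{H}_p$ (standard basis columns, then $\boldsymbol{1}-\boldsymbol{e}_i$ for the middle block, then $\boldsymbol{1}$), verified in the same four steps and invoking the same Lemma~\ref{lem:matrix-F-(p+1)-p} for the one place where characteristic $p$ matters. The argument is correct and matches the paper's proof of Proposition~\ref{prop:Fano-sufficient}.
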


\begin{proof}
We show that matrix $\boldsymbol{H}_{p}\in \mathbb{F}_{q}^{(p+1)\times n}$, shown in Figure \ref{fig:H-class-fano-nonfano}, is a scalar linear representation of $\mathcal{N}_{F}(p)$ if field $\mathbb{F}_{q}$ does have characteristic $p$.
\begin{figure*}
\centering
\subfloat
{
$
\label{eq:p-Fano-N-p-H}
\boldsymbol{H}_{p}^{\{i\}}=
  \left \{
    \begin{array}{ccc}
     &\boldsymbol{I}_{p+1}^{\{i\}},\ \ \ \ \ \ \ \ \ \ \ \ \ \ \ \ \ \ \ \ \ \ \ \ \ \ \ \ \ i\in [p+1], \ \ \ \ \ \ \ \ \ \ \ 
     \\
     \\
     &\sum_{j\in [p+1]\backslash \{n-i\}} \boldsymbol{I}_{p+1}^{\{j\}}, \ \ \ \ \ \ \ \ \ \ i\in [p+2:2p+2],
     \\
     \\
     &\sum_{j\in [p+1]} \boldsymbol{I}_{p+1}^{\{j\}}, \ \ \ \ \ \ \ \ \ \ \ \ \ \ \ \ \ \ \ i=n=2p+3. \ \ \ \
    \end{array}
  \right.
$
}
\\
\vspace{2ex}
\subfloat
{
$\boldsymbol{H}_{p}=
\begin{blockarray}{cccccccccccc}
              & 
              \textcolor{blue}{{\tiny 1}}        & 
              \textcolor{blue}{{\tiny 2}}        & 
              \textcolor{blue}{{\tiny \hdots}}   & 
              \textcolor{blue}{{\tiny p}}        &
              \textcolor{blue}{{\tiny p+1}}      &
              \textcolor{blue}{{\tiny p+2}}      & 
              \textcolor{blue}{{\tiny p+3}}      & 
              \textcolor{blue}{{\tiny \hdots}}   & 
              \textcolor{blue}{{\tiny 2p+1}}     &
              \textcolor{blue}{{\tiny 2p+2}}     &
              \textcolor{blue}{{\tiny 2p+3}}  \\ 
\begin{block}{c[ccccc|ccccc|c]}
              \textcolor{blue}{{\tiny 1}} & 
              1 & 0 & \hdots & 0 & 0 &
              1 & 1 & \hdots & 1 & 0 &
              1
              \\
              \textcolor{blue}{{\tiny 2}} & 
              0 & 1 & \hdots & 0 & 0 &
              1 & 1 & \hdots & 0 & 1 &
              1
              \\
              \textcolor{blue}{{\tiny \vdots}} & 
              \vdots & \vdots & \ddots & \vdots & \vdots & 
              \vdots & \vdots & \ddots & \vdots & \vdots &
              \vdots &
              \\
              \textcolor{blue}{{\tiny p}} & 
              0 & 0 & \hdots & 1 & 0 &
              1 & 0 & \hdots & 1 & 1 &
              1 &
              \\
              \textcolor{blue}{{\tiny p+1}} & 
              0 & 0 & \hdots & 0 & 1 & 
              0 & 1 & \hdots & 1 & 1 & 
              1 
              \\
             \end{block}
\end{blockarray}
$
}
    \caption{$\boldsymbol{H}_{p}\in \mathbb{F}_{q}^{(p+1)\times n}$: If $\mathbb{F}_{q}$ has characteristic $p$, then $\boldsymbol{H}_{p}$ is a scalar linear representation of the class $p$-Fano matroid $\mathcal{N}_{F}(p)$, and if $\mathbb{F}_{q}$ has any characteristic other than characteristic $p$, then $\boldsymbol{H}_{p}$  is a scalar linear representation of the class $p$-non-Fano matroid $\mathcal{N}_{nF}(p)$.}
    \label{fig:H-class-fano-nonfano}
\end{figure*}

\begin{itemize}[leftmargin=*]
    \item It can be seen that set $N_{0}=[p+1]$ is a basis set of $\boldsymbol{H}_{p}$ as
\begin{equation}
    \mathrm{rank}(\boldsymbol{H}_{p}^{[p+1]})=\mathrm{rank}(\boldsymbol{I}_{p+1})=p+1.
\end{equation}
    \item Now, we show that each set $N_{i}, i\in n$ in \eqref{eq:Fano-circuit-N_i} is a circuit set of $\boldsymbol{H}_{p}$. 
    \\
    First, we begin with sets $N_{i}=([p+1]\backslash\{i\})\cup \{n-i\}, i\in [p+2:2p+2]$. Since set $[p+1]$ is a basis set of $\boldsymbol{H}_{p}$, set $[p+1]\backslash \{i\}$ forms an independent set of $\boldsymbol{H}_{p}$. Moreover, we have 
    \begin{equation}
    \boldsymbol{H}_{p}^{\{n-i\}}= \sum_{j\in ([p+1]\backslash\{i\})} \boldsymbol{I}_{p+1}^{\{j\}}=\sum_{j\in ([p+1]\backslash\{i\})} \boldsymbol{H}_{p}^{\{j\}}.
    \end{equation}
    Thus, based on \eqref{eq:circuit-set-M-invertible}, each set $N_{i}=([p+1]\backslash\{i\})\cup \{n-i\}, i\in [p+2:2p+2]$ is a circuit set.
    
    \item For sets $N_{i}=\{i, n-i, n\}, i\in [p+2:2p+2]$, from Figure \ref{fig:H-class-fano-nonfano}, it can be observed that set $\{i, n-i\}$ is an independent set of $\boldsymbol{H}_{p}$. Furthermore, we have
    \begin{align}
    \boldsymbol{H}_{p}^{\{n\}}
    &=\sum_{j\in [p+1]} \boldsymbol{I}_{p+1}^{\{j\}}
    \nonumber
    \\
    &=\boldsymbol{I}_{p+1}^{\{i\}} + \sum_{j\in [p+1]\backslash \{i\}} \boldsymbol{I}_{p+1}^{\{j\}}
    \nonumber
    \\
    &=\boldsymbol{H}_{p}^{\{i\}}+\boldsymbol{H}_{p}^{\{n-i\}}.
    \end{align}
    Thus, based on equation \eqref{eq:circuit-set-M-invertible}, each set $\{i, n-i, n\}$ is a circuit set of $\boldsymbol{H}_{p}$.
    
    \item Finally, based on Lemma \ref{lem:matrix-F-(p+1)-p}, set $[p+2:2p+2]$ is a circuit set of $\boldsymbol{H}_{p}$ over the fields with characteristic $p$. This completes the proof.
\end{itemize}
\end{proof}

\subsection{The Class $p$-non-Fano Matroid Instances $\mathcal{N}_{nF}(p)$} \label{sub:p-non-Fano}

\begin{defn}[Class $p$-non-Fano Matroid Instances $\mathcal{N}_{nF}(p)$  \cite{Bernt-matroid}]
Matroid $\mathcal{N}_{nF}(p)$ is referred to as the class $p$-non-Fano instance if (i) $n=2p+3$, $f(\mathcal{N}_{nF}(p))=p+1$, (ii) $N_{0}=[p+1]\in \mathcal{B}$, $N_{n}=[p+2:2p+2]\in \mathcal{B}$ and (iii) $N_{i}\in \mathcal{C}, \forall i\in [n-1]$ where
\begin{equation} \label{eq:nonFano-circuit-N_i}
N_{i}=
  \left \{
    \begin{array}{cc}
     ([p+1]\backslash \{i\})\cup \{n-i\}, \ i\in [p+1], \ \ \ \ \ \ \ \ \ \ \ &
     \\
     \\
     \{i, n-i, n\},\ \ \ \ \ \ \ \ \ \ \ \ \ \ i\in [p+2:2p+2].&
    \end{array}
  \right.
\end{equation}
\end{defn}

It is worth noting that the class $p$-Fano and the class $p$-non-Fano matroid instances are almost exactly the same, only differing in the role of set $N_{n}=[p+2:2p+2]$. While set $N_{n}$ is a circuit set for the class $p$-Fano matroid, it is a basis set for the class $p$-non-Fano matroid.

\begin{exmp}[$\mathcal{N}_{nF}(2)$]
It can be verified that $\mathcal{N}_{nF}(2)=\mathcal{N}_{nF}$. In other words, the non-Fano matroid instance is a special case of the class $p$-non-Fano matroid for $p=2$. 
\end{exmp}

\begin{exmp}[$\mathcal{N}_{nF}(3)$ \cite{Arman-4-3}]
Consider the class $p$-non-Fano matroid for $p=3$. It can be seen that for matroid $\mathcal{N}_{nF}(3)$, $n=9$, $f(\mathcal{N}_{nF}(3))=4$, sets $N_{0}=[4]$ and $N_{9}=\{5,6,7,8\}$ are basis sets, and sets $N_{i}, i\in [8]$ in \eqref{eq:-N1-matroid} are all circuit sets.
\end{exmp}

\begin{thm}[\hspace{-0.05 ex}\cite{Bernt-matroid, pena-matroid}] \label{thm:p-non-Fano-matroid}
The class $p$-non-Fano matroid $\mathcal{N}_{nF}(p)$ is linearly representable over $\mathbb{F}_{q}$ if and only if field $\mathbb{F}_{q}$ does have any characteristic other than characteristic $p$. 
\end{thm}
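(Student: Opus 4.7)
The plan is to mirror the two-pronged structure of Theorem \ref{thm:p-Fano-matroid}, splitting the argument into a necessity direction and a sufficiency direction, and to exploit the observation that $\mathcal{N}_{nF}(p)$ and $\mathcal{N}_F(p)$ share all circuit sets $N_i$ for $i\in[n-1]$, differing only in the role of $[p+2:2p+2]$ --- a basis set for $\mathcal{N}_{nF}(p)$ but a circuit set for $\mathcal{N}_F(p)$.

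For the sufficiency direction, I plan to reuse the matrix $\boldsymbol{H}_p$ in Figure \ref{fig:H-class-fano-nonfano}. The verifications in the proof of Proposition \ref{prop:Fano-sufficient} that $N_0=[p+1]$ is a basis set and that each $N_i$, $i\in[n-1]$, is a circuit set of $\boldsymbol{H}_p$ are entirely characteristic-free and therefore carry over verbatim. What remains is to show that $N_n=[p+2:2p+2]$ is a basis set of $\boldsymbol{H}_p$ whenever $\mathbb{F}_q$ has characteristic different from $p$; inspection of $\boldsymbol{H}_p^{[p+2:2p+2]}$ reveals that it is exactly the matrix to which Lemma \ref{lem:matrix-F-(p+1)-p} applies, so the lemma immediately delivers the required full rank $p+1$.

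For the necessity direction, I plan to follow the chain of matrix identities in the proof of Proposition \ref{prop:Fano-necessary} up to the derivation of $\boldsymbol{M}_{n-j,i}=\boldsymbol{M}_{n,i}\,\boldsymbol{M}_{n,n-j}^{-1}$ for all distinct $i,j\in[p+1]$; these relations rely only on the circuit sets $N_i$, $i\in[n-1]$, which coincide with those of $\mathcal{N}_F(p)$. Substituting back into \eqref{eq:H-with-H-1} yields
\begin{equation*}
\boldsymbol{H}^{\{k\}}=\Big(\sum_{j\in[p+1]\setminus\{n-k\}}\boldsymbol{H}^{\{j\}}\boldsymbol{M}_{n,j}\Big)\boldsymbol{M}_{n,k}^{-1},\quad k\in[p+2:2p+2].
\end{equation*}
Stacking these columns side by side gives the factorization $\boldsymbol{H}^{[p+2:2p+2]}=\boldsymbol{H}^{[p+1]}\,\boldsymbol{A}\,\boldsymbol{C}\,\boldsymbol{B}$, where $\boldsymbol{A}=\mathrm{diag}(\boldsymbol{M}_{n,1},\ldots,\boldsymbol{M}_{n,p+1})$ and $\boldsymbol{B}=\mathrm{diag}(\boldsymbol{M}_{n,p+2}^{-1},\ldots,\boldsymbol{M}_{n,2p+2}^{-1})$ are block-diagonal invertible matrices, and $\boldsymbol{C}$ is the $(p+1)\times(p+1)$ block matrix whose $(j,k)$-block equals $\boldsymbol{I}_t$ if $j\neq n-k$ and $\boldsymbol{0}_t$ otherwise. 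Since $\boldsymbol{H}^{[p+1]}\boldsymbol{A}$ and $\boldsymbol{B}$ are both invertible, $[p+2:2p+2]$ is an independent set of $\boldsymbol{H}$ iff $\boldsymbol{C}$ has full rank $(p+1)t$. The scalar pattern of $\boldsymbol{C}$ is precisely the all-ones-minus-anti-diagonal matrix of Lemma \ref{lem:matrix-F-(p+1)-p}, so $\boldsymbol{C}$ is full-rank exactly when $\mathbb{F}_q$ has characteristic different from $p$; since $[p+2:2p+2]$ must be a basis of $\mathcal{N}_{nF}(p)$, this forces $\mathrm{char}(\mathbb{F}_q)\neq p$.

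The main obstacle I anticipate is isolating the factorization $\boldsymbol{H}^{[p+2:2p+2]}=\boldsymbol{H}^{[p+1]}\boldsymbol{A}\boldsymbol{C}\boldsymbol{B}$ transparently from the web of relations among the invertible $\boldsymbol{M}_{\cdot,\cdot}$ matrices, since careful bookkeeping between the block indices $j$ and $k$ is required to see that the residual block $\boldsymbol{C}$ matches the pattern in Lemma \ref{lem:matrix-F-(p+1)-p} exactly. Once that block structure is in place, the reduction is immediate and no number-theoretic input is invoked, in keeping with the paper's aim of providing purely linear-algebraic proofs.
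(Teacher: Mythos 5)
Your proof is correct. The sufficiency half is essentially identical to the paper's Proposition \ref{prop:non-Fano-sufficient} (same matrix $\boldsymbol{H}_p$ of Figure \ref{fig:H-class-fano-nonfano}, same characteristic-free verification of $N_0$ and $N_i$, $i\in[n-1]$, same appeal to Lemma \ref{lem:matrix-F-(p+1)-p} for $[p+2:2p+2]$), but your necessity half takes a genuinely different route from the paper's Proposition \ref{prop:non-Fano-necessary}. The paper argues by contradiction: it first establishes invertibility of each $\boldsymbol{M}_{n,i}$ via a rank argument, then assumes $\mathrm{char}(\mathbb{F}_q)=p$ so that the $p$ equal terms $\boldsymbol{M}_{n-l,i}\boldsymbol{M}_{n,n-l}$ from \eqref{eq:Msq=i-j-2} sum to zero, and manipulates this into \eqref{eq:nonFano-nec-last-4-1}, which exhibits $[p+2:2p+2]$ as a circuit set of $\boldsymbol{H}$ and contradicts $N_n\in\mathcal{B}$. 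You instead solve \eqref{eq:Msq=i-j-1} for $\boldsymbol{M}_{n-j,i}=\boldsymbol{M}_{n,i}\boldsymbol{M}_{n,n-j}^{-1}$, substitute into \eqref{eq:H-with-H-1}, and read off the factorization $\boldsymbol{H}^{[p+2:2p+2]}=\boldsymbol{H}^{[p+1]}\boldsymbol{A}\boldsymbol{C}\boldsymbol{B}$ with $\boldsymbol{A},\boldsymbol{B}$ block-diagonal invertible and $\boldsymbol{C}$ the block ($\otimes\,\boldsymbol{I}_t$) version of the anti-diagonal-zero matrix of Lemma \ref{lem:matrix-F-(p+1)-p}; since $[p+1]$ is a basis, $\mathrm{rank}(\boldsymbol{H}^{[p+2:2p+2]})=\mathrm{rank}(\boldsymbol{C})$, and the lemma says this is $(p+1)t$ exactly when $\mathrm{char}(\mathbb{F}_q)\neq p$, which is what $N_n\in\mathcal{B}$ demands. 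Both arguments rest on the same relations \eqref{eq:Msq=i-j-1}--\eqref{eq:Msq=i-j-3}, which depend only on the circuit sets shared with $\mathcal{N}_F(p)$, so the transfer is legitimate; what your version buys is a direct (non-contradiction) proof in which Lemma \ref{lem:matrix-F-(p+1)-p} plays a symmetric role in both directions and the characteristic condition emerges as an explicit rank computation rather than from a cancellation of $p$ equal terms. The one bookkeeping point to state explicitly in a write-up is that the invertibility of the blocks of $\boldsymbol{A}$ and $\boldsymbol{B}$, i.e.\ of $\boldsymbol{M}_{n,i}$ and $\boldsymbol{M}_{n,n-i}$ for $i\in[p+1]$, already follows from the circuit-set definition \eqref{eq:circuit-set-M-invertible} applied to the sets $\{i,n-i,n\}$, so the paper's separate rank argument for the $\boldsymbol{M}_{n,i}$ is not needed on your route.
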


The proof can be concluded from Propositions \ref{prop:non-Fano-necessary} and \ref{prop:non-Fano-sufficient},  which establish the necessary and sufficient conditions, respectively.
We provide a completely independent alternative proof of Proposition \ref{prop:non-Fano-necessary}, which exclusively relies on matrix manipulations instead of the more involved concepts of number theory and matroid theory used in \cite{Bernt-matroid} and \cite{pena-matroid}.

\begin{prop} \label{prop:non-Fano-necessary}
The class $p$-non-Fano matroid $\mathcal{N}_{nF}(p)$ is linearly representable over $\mathbb{F}_{q}$ only if field $\mathbb{F}_{q}$ does have any characteristic other than characteristic $p$.
\end{prop}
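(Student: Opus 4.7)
The plan is to mirror the structure of the proof of Proposition \ref{prop:Fano-necessary} up to the point where the circuit condition on $N_n$ was invoked, and then replace that step with the basis condition on $N_n$. Note that the $p$-Fano and $p$-non-Fano instances share exactly the same circuit structure on $N_1,\ldots,N_{n-1}$ and the same basis $N_0=[p+1]$. In particular, equations \eqref{eq:H-with-H-1} (from circuits $N_i$, $i\in[p+1]$) and \eqref{eq:H-with-H-2} (from circuits $N_i$, $i\in[p+2:2p+2]$) remain valid here, and so does the ensuing argument that exploits $\boldsymbol{H}^{[p+1]}$ being full rank to conclude \eqref{eq:Mstar-Mstar}. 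This yields the key identity
\begin{equation}
\boldsymbol{M}_{n,i}=\boldsymbol{M}_{n-j,i}\,\boldsymbol{M}_{n,n-j},\qquad \forall i,j\in[p+1],\ i\neq j,
\nonumber
\end{equation}
equivalently $\boldsymbol{M}_{n-j,i}=\boldsymbol{M}_{n,i}\,\boldsymbol{M}_{n,n-j}^{-1}$.

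Next I would exploit the basis condition on $N_n=[p+2:2p+2]$, which requires $\mathrm{rank}(\boldsymbol{H}^{[p+2:2p+2]})=(p+1)t$. Using \eqref{eq:H-with-H-1}, I would collect the $p+1$ expansions of $\boldsymbol{H}^{\{n-i\}}$ in terms of the basis $\boldsymbol{H}^{[p+1]}$ to write
\begin{equation}
\boldsymbol{H}^{[p+2:2p+2]}=\boldsymbol{H}^{[p+1]}\,\boldsymbol{G},
\nonumber
\end{equation}
where $\boldsymbol{G}$ is a $(p+1)t\times(p+1)t$ block matrix whose $(j,i)$ block is $\boldsymbol{M}_{n-i,j}$ for $j\neq i$ and $\boldsymbol{0}_{t}$ when $j=i$. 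Substituting the identity above for $\boldsymbol{M}_{n-i,j}$, the matrix $\boldsymbol{G}$ factors as
\begin{equation}
\boldsymbol{G}=\boldsymbol{D}_{1}\cdot\bigl((\boldsymbol{J}_{p+1}-\boldsymbol{I}_{p+1})\otimes \boldsymbol{I}_{t}\bigr)\cdot\boldsymbol{D}_{2},
\nonumber
\end{equation}
where $\boldsymbol{D}_1=\mathrm{diag}(\boldsymbol{M}_{n,1},\ldots,\boldsymbol{M}_{n,p+1})$, $\boldsymbol{D}_2=\mathrm{diag}(\boldsymbol{M}_{n,n-1}^{-1},\ldots,\boldsymbol{M}_{n,n-(p+1)}^{-1})$, and $\boldsymbol{J}_{p+1}$ is the $(p+1)\times(p+1)$ all-ones matrix. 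Since all $\boldsymbol{M}$ blocks are invertible, $\boldsymbol{D}_1$ and $\boldsymbol{D}_2$ are invertible, so $\boldsymbol{G}$ is invertible if and only if $\boldsymbol{J}_{p+1}-\boldsymbol{I}_{p+1}$ is invertible over $\mathbb{F}_{q}$.

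Finally, I would finish by invoking Lemma \ref{lem:matrix-F-(p+1)-p} — whose matrix differs from $\boldsymbol{J}_{p+1}-\boldsymbol{I}_{p+1}$ only by reversing the column order and is therefore singular precisely when $\mathbb{F}_{q}$ has characteristic $p$ — or equivalently by noting directly that $\det(\boldsymbol{J}_{p+1}-\boldsymbol{I}_{p+1})=(-1)^{p}\,p$. Requiring $N_n$ to be a basis set forces this determinant to be nonzero, which is possible only when $\mathbb{F}_{q}$ has characteristic different from $p$, yielding the claim.

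I expect the main obstacle to be bookkeeping-flavored rather than conceptual: carefully pinning down the block indexing of $\boldsymbol{G}$ (the map $i\mapsto n-i$ on the columns of $\boldsymbol{H}^{[p+2:2p+2]}$) and ensuring that the factorization uses the $\boldsymbol{M}$-identity from \eqref{eq:Msq=i-j-1} cleanly, so that the remaining scalar matrix is unambiguously $\boldsymbol{J}_{p+1}-\boldsymbol{I}_{p+1}$. Once that is written out, reduction to Lemma \ref{lem:matrix-F-(p+1)-p} closes the argument in one line.
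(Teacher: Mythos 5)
Your proposal is correct, and while its first half coincides with the paper's (both import the identity $\boldsymbol{M}_{n,i}=\boldsymbol{M}_{n-j,i}\,\boldsymbol{M}_{n,n-j}$ of \eqref{eq:Msq=i-j-1} from the circuits that $\mathcal{N}_{F}(p)$ and $\mathcal{N}_{nF}(p)$ share), the second half takes a genuinely different and more direct route. The paper argues by contradiction: assuming characteristic $p$, the $p$ equal terms $\boldsymbol{M}_{n-l,i}\,\boldsymbol{M}_{n,n-l}$ sum to zero, which is converted via \eqref{eq:proof:last} into the explicit dependency \eqref{eq:nonFano-nec-last-4-1} among the columns $\boldsymbol{H}^{\{n-i\}}$, contradicting the basis requirement on $N_{n}$. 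You instead compute the rank of $\boldsymbol{H}^{N_n}$ outright: the factorization $\boldsymbol{H}^{[p+2:2p+2]}=\boldsymbol{H}^{[p+1]}\boldsymbol{D}_{1}\bigl((\boldsymbol{J}_{p+1}-\boldsymbol{I}_{p+1})\otimes\boldsymbol{I}_{t}\bigr)\boldsymbol{D}_{2}$, with $\boldsymbol{D}_{1},\boldsymbol{D}_{2}$ invertible because every block arising in a circuit expansion is invertible by \eqref{eq:circuit-set-M-invertible} (as the paper itself records after \eqref{eq:H-with-H-3}), gives $\mathrm{rank}(\boldsymbol{H}^{[p+2:2p+2]})=t\cdot\mathrm{rank}(\boldsymbol{J}_{p+1}-\boldsymbol{I}_{p+1})$, and $\det(\boldsymbol{J}_{p+1}-\boldsymbol{I}_{p+1})=(-1)^{p}p$ closes the argument. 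What this buys: you obtain the exact rank of $\boldsymbol{H}^{N_n}$ rather than only a contradiction under the characteristic-$p$ hypothesis, the ``iff'' structure of Theorem \ref{thm:p-non-Fano-matroid} becomes visible because necessity and sufficiency now both hinge on the same singular matrix of Lemma \ref{lem:matrix-F-(p+1)-p}, and you avoid the paper's separate column-space argument for the invertibility of the $\boldsymbol{M}_{n,i}$. The indexing issue you flagged (the map $i\mapsto n-i$ on the columns of $\boldsymbol{H}^{[p+2:2p+2]}$) is indeed harmless, since a column permutation changes neither the rank nor the vanishing of the determinant.
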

\begin{proof}
Since all sets $N_{i}, i\in \{0\}\cup [n-1]$ in $\mathcal{N}_{nF}(p)$ are exactly the same as the sets $N_{i}, i\in \{0\}\cup [n-1]$ in $\mathcal{N}_{F}(p)$, \eqref{eq:Msq=i-j-3} will also hold for $\mathcal{N}_{nF}(p)$.
\\
Now, since each $\boldsymbol{M}_{n-j,i}$ is invertible, from \eqref{eq:Msq=i-j-3}, the column space of all $\boldsymbol{M}_{n,i}$'s for $i\in [p+1]$ will be equal. Thus, we have
\begin{equation}
    \mathrm{rank}\ \Big [\boldsymbol{M}_{n,1}\ | \ \boldsymbol{M}_{n,2}\ | \ \hdots \ | \  \boldsymbol{M}_{n,p+1} \Big ]= \mathrm{rank} \ \Big [\boldsymbol{M}_{n, i} \Big ], 
\end{equation}
which requires each matrix $\boldsymbol{M}_{n,i}, i\in [p+1]$ to be invertible, since otherwise it causes $\boldsymbol{H}^{\{n\}}$ to be non-invertible (which leads to a contradiction based on Remark \ref{rem:1}).
\\
Now, according to \eqref{eq:Msq=i-j-2}, we have
\begin{align}
    &(p-1)\boldsymbol{M}_{n,i}=(p-1)\boldsymbol{M}_{n-j,i}\ \boldsymbol{M}_{n,n-j}=
    \nonumber
    \\
    &\sum_{l\in [p+1]\backslash \{i,j\}} \boldsymbol{M}_{n-l,i}\ \boldsymbol{M}_{n,n-l}, \ \ \forall i,j\in [p+1], \ i\neq j.
    \label{eq:nonFano-nec-last-1-1}
\end{align}
Moreover, based on \eqref{eq:Msq=i-j-2}, since the terms $ \boldsymbol{M}_{n-l,i}\ \boldsymbol{M}_{n,n-l}, l\in [p+1]\backslash \{i\}$ are all equal, then assuming that the field has characteristic $p$ will result in
\begin{align}
    \boldsymbol{0}_{t}=\sum_{l\in [p+1]\backslash \{i\}} \boldsymbol{M}_{n-l,i}\ \boldsymbol{M}_{n,n-l}.
    \label{eq:nonFano-nec-last-1-2}
\end{align}

Now, combining \eqref{eq:nonFano-nec-last-1-1} and \eqref{eq:nonFano-nec-last-1-2} will lead to

\begin{align}
    &(p-1)
    \begin{bmatrix}
    \boldsymbol{M}_{n-(p+1),1}
    \\
    \boldsymbol{M}_{n-(p+1),2}
    \\
    \vdots
    \\
    \boldsymbol{M}_{n-(p+1),p}
    \\
    \boldsymbol{0}_{t}
    \end{bmatrix}
    \boldsymbol{M}_{n, n-(p+1)}
    =
    \nonumber
    \\
    &\begin{bmatrix}
    \boldsymbol{0}_{t}
    \\
    \boldsymbol{M}_{n-1,2}
    \\
    \vdots
    \\
    \boldsymbol{M}_{n-1,p}
    \\
    \boldsymbol{M}_{n-1,(p+1)}
    \end{bmatrix}
    \boldsymbol{M}_{n, n-1}
    +
    \dots
    +
    \begin{bmatrix}
    \boldsymbol{M}_{n-p,1}
    \\
    \boldsymbol{M}_{n-1p,2}
    \\
    \vdots
    \\
    \boldsymbol{0}_{t}
    \\
    \boldsymbol{M}_{n-p,(p+1)}
    \end{bmatrix}
    \boldsymbol{M}_{n, n-p}.
    \label{eq:proof:last}
\end{align}
Then, from left, we multiply all the terms in \eqref{eq:proof:last} by the vector $[\boldsymbol{H}^{\{1\}},\dots, \boldsymbol{H}^{\{n\}}]$, which according to \eqref{eq:H-with-H-1}, it will lead to
\begin{align}
    &(p-1)\boldsymbol{H}^{\{n-(p+1)\}}\boldsymbol{M}_{n, n-(p+1)}=
    \nonumber
    \\
    &\boldsymbol{H}^{\{n-1\}}\boldsymbol{M}_{n,n-1}+\dots + \boldsymbol{H}^{\{n-p\}}\boldsymbol{M}_{n, n-p}.
    \label{eq:nonFano-nec-last-4-1}
\end{align}
Since each $\boldsymbol{M}_{n, n-i}, i\in [p+1]$ is invertible, based on \eqref{eq:circuit-set-M-invertible} and \eqref{eq:nonFano-nec-last-4-1}, it implies that set $[p+2:2p+2]$ forms a circuit set, which contradicts the fact that set $N_{n}=[p+2:2p+2]$ is a basis set of $\mathcal{N}_{nF}(p)$. This completes the proof.
\end{proof}

\begin{prop} \label{prop:non-Fano-sufficient}
The class $p$-non-Fano matroid $\mathcal{N}_{nF}(p)$ has scalar ($t=1$) linear representation over fields with any characteristic other than characteristic $p$.
\end{prop}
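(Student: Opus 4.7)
The plan is to reuse exactly the same candidate matrix $\boldsymbol{H}_{p}\in\mathbb{F}_{q}^{(p+1)\times n}$ displayed in Figure \ref{fig:H-class-fano-nonfano} that worked in the proof of Proposition \ref{prop:Fano-sufficient}, but now interpreted over a field $\mathbb{F}_{q}$ whose characteristic is \emph{not} $p$. The caption of that figure already asserts that this is the right object, so the task reduces to verifying the defining properties of $\mathcal{N}_{nF}(p)$ for $\boldsymbol{H}_{p}$ under the new assumption on the characteristic.

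I would proceed by checking the three types of sets required by the definition of $\mathcal{N}_{nF}(p)$. First, $N_{0}=[p+1]$ is a basis set of $\boldsymbol{H}_{p}$ because $\boldsymbol{H}_{p}^{[p+1]}=\boldsymbol{I}_{p+1}$ has rank $p+1$; this argument is identical to the $p$-Fano case and does not use the characteristic at all. Second, for each $i\in[p+2:2p+2]$, the set $N_{i}=([p+1]\setminus\{i\})\cup\{n-i\}$ is a circuit: $[p+1]\setminus\{i\}$ is independent (sub-basis), and by the very definition of the local encoding column, $\boldsymbol{H}_{p}^{\{n-i\}}=\sum_{j\in[p+1]\setminus\{i\}}\boldsymbol{H}_{p}^{\{j\}}$, so \eqref{eq:circuit-set-M-invertible} is satisfied with invertible scalar coefficients ($t=1$). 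Third, for each $i\in[p+2:2p+2]$, the set $N_{i}=\{i,n-i,n\}$ is a circuit because $\{i,n-i\}$ is visibly independent and $\boldsymbol{H}_{p}^{\{n\}}=\boldsymbol{H}_{p}^{\{i\}}+\boldsymbol{H}_{p}^{\{n-i\}}$ by the same column-sum identity used in the $p$-Fano proof. All three steps are characteristic-independent and transfer verbatim from the proof of Proposition \ref{prop:Fano-sufficient}.

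The only place where the characteristic matters, and hence the only genuinely new step, is showing that $N_{n}=[p+2:2p+2]$ is a \emph{basis} set of $\boldsymbol{H}_{p}$ rather than a circuit. The submatrix $\boldsymbol{H}_{p}^{[p+2:2p+2]}$ is exactly the $(p+1)\times(p+1)$ all-ones-minus-anti-identity matrix of Lemma \ref{lem:matrix-F-(p+1)-p}, after possibly reordering its columns. That lemma tells us that this matrix has rank $p+1$ precisely when $\mathbb{F}_{q}$ has characteristic different from $p$, which is exactly our standing hypothesis; in that case $[p+2:2p+2]$ is an independent set of $\boldsymbol{H}_{p}$ of cardinality $p+1=f(\mathcal{N}_{nF}(p))$, hence a basis set.

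The hard (or at least subtlest) part is this last verification, but since Lemma \ref{lem:matrix-F-(p+1)-p} has already been established and the column $\boldsymbol{H}_{p}^{\{2p+3\}}$ is never involved in $N_{n}$, the argument reduces to a direct citation of the lemma. Putting the four verifications together shows that $\boldsymbol{H}_{p}$ satisfies \eqref{matroid-linear-representation} for every relevant subset of $[n]$, so it is a scalar linear representation of $\mathcal{N}_{nF}(p)$ over every $\mathbb{F}_{q}$ of characteristic distinct from $p$, completing the proof.
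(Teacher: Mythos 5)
Your proposal is correct and follows exactly the paper's own route: reuse the matrix $\boldsymbol{H}_{p}$ of Figure \ref{fig:H-class-fano-nonfano}, note that the basis property of $N_{0}$ and the circuit property of $N_{i}$, $i\in[n-1]$, carry over verbatim from the proof of Proposition \ref{prop:Fano-sufficient} since they are characteristic-independent, and invoke Lemma \ref{lem:matrix-F-(p+1)-p} to conclude that $N_{n}=[p+2:2p+2]$ is a basis set precisely when the characteristic is not $p$. (The only blemish is the index range ``$i\in[p+2:2p+2]$'' where you mean $i\in[p+1]$ for the circuits $([p+1]\backslash\{i\})\cup\{n-i\}$, a slip the paper itself also makes.)
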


\begin{proof}
With the same arguments in the proof of Proposition \ref{prop:Fano-sufficient}, it can be shown that for matrix $\boldsymbol{H}_{p}\in \mathbb{F}_{q}^{(p+1)\times n}$, shown in Figure \ref{fig:H-class-fano-nonfano}, set $N_{0}$ is a basis set, and each set $N_{i}, i\in [n-1]$ is a circuit set of $\boldsymbol{H}_{p}$. Moreover, based on Lemma \ref{lem:matrix-F-(p+1)-p}, set $N_{n}=[p+2:2p+2]$ is a basis set over fields with any characteristic other than $p$. This completes the proof.
\end{proof}

\section{The Class $p$-Fano and $p$-non-Fano Index Coding Instance} \label{sec: index-coding-both}

\subsection{On the Reduction Process from Index Coding to Matroid} \label{sub:index-coding-reduction-matroid}
In this subsection, Lemmas \ref{lem:MAIS1}-\ref{lem:col(9)} establish reduction techniques to map specific constraints on the column space of the encoder matrix of an index coding instance to the constraints on the column space of the matrix, which is a linear representation of a matroid instance.

In this subsection, we assume that $M\subseteq [m]$, $i,l\in M$, and $j\in [m]\backslash M$.

\begin{lem}[\cite{Arman-4-3}]  \label{lem:MAIS1}
Assume $M$ is an acyclic set of $\mathcal{I}$. Then, the condition in \eqref{eq:dec-cond} for all $i\in M$ requires $\mathrm{rank} (\boldsymbol{H}^M)=|M|t$, implying that $M$ must be an independent set of $\boldsymbol{H}$.
\end{lem}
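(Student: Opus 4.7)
The plan is to exploit acyclicity of $M$ to obtain a linear ordering of its elements that is compatible with the side-information structure, and then apply the decoding condition from Proposition \ref{prop-lineardecoding-condition} one element at a time to build up full rank.

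First, I would recast acyclicity in graph-theoretic terms: regarding the side-information relation as a directed graph with an edge $i\to j$ whenever $j\in A_{i}$, Definition \ref{def:acyclic-set} is equivalent to saying the subgraph induced on $M$ has no directed cycle. (For the nontrivial direction, any shortest cycle in the induced subgraph is chord-free and hence satisfies \eqref{eq:def-minimal-cyclic-set}.) I would then pick a reverse topological ordering $M=\{i_{1},\dots,i_{|M|}\}$ so that for all $k<l$ there is no edge $i_{k}\to i_{l}$, which gives $i_{l}\notin A_{i_{k}}$ and therefore $i_{l}\in B_{i_{k}}$. In particular, $\{i_{k+1},\dots,i_{|M|}\}\subseteq B_{i_{k}}$ for every $k\in[|M|]$.

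Next, I would run a reverse induction on $k$ from $|M|$ down to $1$ with the statement $\mathrm{rank}(\boldsymbol{H}^{\{i_{k},\dots,i_{|M|}\}}) = (|M|-k+1)\,t$. The base case $k=|M|$ follows directly from \eqref{eq:dec-cond} at user $i_{|M|}$, which forces the $t$ columns of $\boldsymbol{H}^{\{i_{|M|}\}}$ to contribute $t$ fresh dimensions beyond $\mathrm{col}(\boldsymbol{H}^{B_{i_{|M|}}})$ and so to have rank exactly $t$. For the inductive step, \eqref{eq:dec-cond} at user $i_{k}$ tells us that $\mathrm{col}(\boldsymbol{H}^{\{i_{k}\}})$ meets $\mathrm{col}(\boldsymbol{H}^{B_{i_{k}}})$ trivially; since $\{i_{k+1},\dots,i_{|M|}\}\subseteq B_{i_{k}}$, this trivial intersection persists with $\mathrm{col}(\boldsymbol{H}^{\{i_{k+1},\dots,i_{|M|}\}})$, so adjoining $\boldsymbol{H}^{\{i_{k}\}}$ raises the rank by exactly $t$ and the inductive claim is preserved. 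Taking $k=1$ yields $\mathrm{rank}(\boldsymbol{H}^{M})=|M|t$, which by Definition \ref{def:Basis-Circuit-Matrix} means $M$ is an independent set of $\boldsymbol{H}$.

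The main subtlety is entirely in the first step, namely translating Definition \ref{def:acyclic-set} into the graph-theoretic acyclicity that guarantees a topological ordering; verifying that a minimal cyclic set as specified by \eqref{eq:def-minimal-cyclic-set} is exactly a chord-free directed cycle in the induced side-information digraph is a short but necessary check. Once this translation is in hand, the remainder of the proof is a routine rank-monotonicity induction driven by \eqref{eq:dec-cond} and requires no further ideas.
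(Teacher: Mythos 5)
Your proof is correct and complete. The paper itself does not prove Lemma~\ref{lem:MAIS1} (it is imported from \cite{Arman-4-3}), but your argument is the standard one for this fact: the translation of Definition~\ref{def:acyclic-set} into acyclicity of the side-information digraph (via the observation that a shortest directed cycle is chord-free, hence its vertex set satisfies \eqref{eq:def-minimal-cyclic-set}), followed by a reverse topological ordering and an iterated application of \eqref{eq:dec-cond}, each step adding exactly $t$ to the rank because $\{i_{k+1},\dots,i_{|M|}\}\subseteq B_{i_k}$ forces $\mathrm{col}(\boldsymbol{H}^{\{i_k\}})$ to meet $\mathrm{col}(\boldsymbol{H}^{\{i_{k+1},\dots,i_{|M|}\}})$ trivially. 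No gaps.
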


\begin{lem}[\cite{Arman-4-3}]  \label{lem:min-cyc}
Let $M$ be a minimal cyclic set of $\mathcal{I}$. To have $\mathrm{rank}(\boldsymbol{H}^{M})=(|M|-1)t$, $M$ must be a circuit set of $\boldsymbol{H}$.
\end{lem}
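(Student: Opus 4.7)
The plan is to verify both conditions of Definition \ref{def:Basis-Circuit-Matrix} for $M$ to be a circuit set of $\boldsymbol{H}$. Dependence of $M$ is immediate from the hypothesis, since $(|M|-1)t<|M|t$. The substance of the proof is the equal-rank condition: showing $\mathrm{rank}(\boldsymbol{H}^{M\setminus\{j\}})=(|M|-1)t$ for every $j\in M$. I plan to establish this by applying the decoder-rank condition of Proposition \ref{prop-lineardecoding-condition} to a carefully chosen user in $M$, and then invoking Lemma \ref{lem:MAIS1} on the residual set.

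Index the elements of $M$ cyclically as $\{i_1,\dots,i_{|M|}\}$ according to Definition \ref{def:minimal-cyclic-set}, and fix an arbitrary $i_r\in M$ to be removed. Consider the user $i_{r-1}$ (indices taken modulo $|M|$), whose only ``blocked'' neighbour inside $M$ is $i_r$, i.e.\ $B_{i_{r-1}}\cap M = M\setminus\{i_{r-1},i_r\}$. Applying \eqref{eq:dec-cond} to this user yields
\begin{equation*}
    \mathrm{rank}(\boldsymbol{H}^{\{i_{r-1}\}\cup B_{i_{r-1}}}) = \mathrm{rank}(\boldsymbol{H}^{B_{i_{r-1}}}) + t,
\end{equation*}
which forces the $t$ columns of $\boldsymbol{H}^{\{i_{r-1}\}}$ to be linearly independent of the entire column span of $\boldsymbol{H}^{B_{i_{r-1}}}$, and therefore of any of its sub-blocks. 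In particular, they are linearly independent of $\boldsymbol{H}^{M\setminus\{i_{r-1},i_r\}}$, giving
\begin{equation*}
    \mathrm{rank}(\boldsymbol{H}^{M\setminus\{i_r\}}) = \mathrm{rank}(\boldsymbol{H}^{M\setminus\{i_{r-1},i_r\}}) + t.
\end{equation*}

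To finish, I would argue that $M\setminus\{i_{r-1},i_r\}$ is an acyclic set of $\mathcal{I}$; Lemma \ref{lem:MAIS1} then gives $\mathrm{rank}(\boldsymbol{H}^{M\setminus\{i_{r-1},i_r\}})=(|M|-2)t$, and consequently $\mathrm{rank}(\boldsymbol{H}^{M\setminus\{i_r\}})=(|M|-1)t$, as required for the circuit condition \eqref{eq: def:circuit-set-H}. Since $i_r\in M$ was arbitrary, this closes the proof. I expect the acyclicity claim to be the only non-mechanical step, because Definition \ref{def:minimal-cyclic-set} prescribes the cyclic pattern on $M$ itself but does not directly preclude smaller cyclic sub-patterns inside a subset. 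I would handle it by observing that removing the two consecutive elements $i_{r-1}$ and $i_r$ collapses the induced side-information constraints on $M'=M\setminus\{i_{r-1},i_r\}$ into a directed chain $i_{r+1}\to i_{r+2}\to\cdots\to i_{r-2}$, whose endpoint $i_{r-2}$ has its only blocked neighbour $i_{r-1}$ already excised from $M'$; hence no subset of $M'$ can realise the cyclic pattern of Definition \ref{def:minimal-cyclic-set}, making $M'$ acyclic in the sense of Definition \ref{def:acyclic-set}.
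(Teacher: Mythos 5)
Your proof is correct. Note that the paper itself does not prove this lemma --- it is imported from \cite{Arman-4-3} --- so there is no in-text argument to compare against; judged on its own terms, your route is the natural one and uses exactly the toolkit the paper sets up: the decoding condition \eqref{eq:dec-cond} applied to the user $i_{r-1}$ whose in-$M$ interference is $M\setminus\{i_{r-1},i_r\}$, so that $\mathrm{rank}(\boldsymbol{H}^{M\setminus\{i_r\}})=\mathrm{rank}(\boldsymbol{H}^{M\setminus\{i_{r-1},i_r\}})+t$, followed by Lemma~\ref{lem:MAIS1} on the residual set to pin that rank at $(|M|-2)t$; combined with the hypothesis $\mathrm{rank}(\boldsymbol{H}^{M})=(|M|-1)t$ this gives \eqref{eq: def:circuit-set-H} for every deleted element. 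The only step I would tighten is the acyclicity of $M'=M\setminus\{i_{r-1},i_r\}$: your observation about the endpoint $i_{r-2}$ rules out only cyclic subsets containing $i_{r-2}$, and you need to propagate it backward along the chain. The clean statement is that if $M''\subseteq M$ is a minimal cyclic set and $i_j\in M''$, then $B_{i_j}\cap M''=M''\setminus\{i_j,i_{j+1}\}$ forces the $M$-successor $i_{j+1}$ to lie in $M''$ (otherwise the left side has $|M''|-1$ elements and the required right side only $|M''|-2$); hence $M''$ is closed under $M$-successors and must equal all of $M$, so no proper subset of $M$ --- in particular no subset of $M'$ --- can be minimally cyclic. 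With that substitution the argument is complete.
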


\begin{lem}[\cite{Arman-4-3}]   \label{lem:MAIS2}
Assume $M$ is an independent set of $\mathcal{I}$, and $j\in B_i,\forall i\in M\backslash \{l\}$ for some $l\in M$. Then, if $\mathrm{col}(\boldsymbol{H}^{\{j\}})\subseteq \mathrm{col}(\boldsymbol{H}^{M})$, we must have $\mathrm{col}(\boldsymbol{H}^{\{j\}})=\mathrm{col}(\boldsymbol{H}^{\{l\}})$.
\end{lem}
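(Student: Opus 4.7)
The plan is to show that whenever $\boldsymbol{H}^{\{j\}}$ is written as $\sum_{i\in M}\boldsymbol{H}^{\{i\}}\boldsymbol{A}_i$ for some $t\times t$ coefficient matrices $\boldsymbol{A}_i$ (such an expansion exists by the hypothesis $\mathrm{col}(\boldsymbol{H}^{\{j\}})\subseteq\mathrm{col}(\boldsymbol{H}^M)$), every $\boldsymbol{A}_{i'}$ with $i'\in M\setminus\{l\}$ must vanish, leaving only the single term $\boldsymbol{H}^{\{l\}}\boldsymbol{A}_l$. The lever for killing these coefficients is the decoding condition applied to each user $u_{i'}$ with $i'\in M\setminus\{l\}$.

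First, I would recast Proposition~\ref{prop-lineardecoding-condition} geometrically: for $i'\in M\setminus\{l\}$, equation \eqref{eq:dec-cond} together with the full-column-rank assumption on $\boldsymbol{H}^{\{i'\}}$ (Remark~\ref{rem:1}) is equivalent to the trivial-intersection statement $\mathrm{col}(\boldsymbol{H}^{\{i'\}})\cap\mathrm{col}(\boldsymbol{H}^{B_{i'}})=\{0\}$. This is the reformulation that will do the work.

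Second, I would note that $\{j\}\cup(M\setminus\{i'\})\subseteq B_{i'}$ for every $i'\in M\setminus\{l\}$: the containment $M\setminus\{i'\}\subseteq B_{i'}$ is immediate from $M$ being an independent set of $\mathcal{I}$, and $j\in B_{i'}$ is exactly the hypothesis. Rearranging the expansion as
\[
\boldsymbol{H}^{\{i'\}}\boldsymbol{A}_{i'}=\boldsymbol{H}^{\{j\}}-\sum_{i\in M\setminus\{i'\}}\boldsymbol{H}^{\{i\}}\boldsymbol{A}_i,
\]
the right-hand side lies in $\mathrm{col}(\boldsymbol{H}^{\{j\}\cup(M\setminus\{i'\})})\subseteq\mathrm{col}(\boldsymbol{H}^{B_{i'}})$, while the left-hand side lies in $\mathrm{col}(\boldsymbol{H}^{\{i'\}})$. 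The trivial-intersection property forces both sides to be zero, and full column rank of $\boldsymbol{H}^{\{i'\}}$ then yields $\boldsymbol{A}_{i'}=\boldsymbol{0}$.

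Applying this for every $i'\in M\setminus\{l\}$ collapses the expansion to $\boldsymbol{H}^{\{j\}}=\boldsymbol{H}^{\{l\}}\boldsymbol{A}_l$, giving $\mathrm{col}(\boldsymbol{H}^{\{j\}})\subseteq\mathrm{col}(\boldsymbol{H}^{\{l\}})$; since both column spaces are $t$-dimensional by Remark~\ref{rem:1}, the inclusion must be equality. The only real obstacle I expect is the first step---spotting that Proposition~\ref{prop-lineardecoding-condition} is best read as a trivial-intersection property, and then noticing that the hypothesis $j\in B_{i'}$ is precisely what allows $\boldsymbol{H}^{\{j\}}$ to be absorbed into $\mathrm{col}(\boldsymbol{H}^{B_{i'}})$ together with the $M\setminus\{i'\}$ columns. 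Once these observations are in place, the coefficient-killing argument and the final dimension check are routine.
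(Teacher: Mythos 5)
Your argument is correct. Note that the paper itself gives no proof of this lemma (it is imported from \cite{Arman-4-3}), so there is nothing to diverge from; your chain of reasoning --- expanding $\boldsymbol{H}^{\{j\}}$ over $\mathrm{col}(\boldsymbol{H}^{M})$, reading the decoding condition \eqref{eq:dec-cond} as $\mathrm{col}(\boldsymbol{H}^{\{i'\}})\cap\mathrm{col}(\boldsymbol{H}^{B_{i'}})=\{0\}$, using $\{j\}\cup(M\setminus\{i'\})\subseteq B_{i'}$ to kill each $\boldsymbol{A}_{i'}$, and closing with the dimension count --- is exactly the standard argument one would expect for this interference-alignment-type statement, and every step checks out.
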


\begin{lem}[\cite{Arman-4-3}]   \label{lem:independent-cycle}
Let $M\subseteq[m]$ and $j\in [m]\backslash M$. Assume that
\begin{enumerate}[label=(\roman*)]
  \item $M$ is an independent set of $\boldsymbol{H}$,
  \item $\mathrm{col}(\boldsymbol{H}^{\{j\}})\subseteq \mathrm{col}(\boldsymbol{H}^{M})$,
  \item $M$ forms a minimal cyclic set of $\mathcal{I}$,
  \item $j\in B_{i}, \forall i\in M$.
\end{enumerate}
Now, the condition in \eqref{eq:dec-cond} for all $i\in [m]$ requires set $\{j\}\cup M$ to be a circuit set of $\boldsymbol{H}$.
\end{lem}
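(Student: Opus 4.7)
The plan is to verify the circuit characterization in \eqref{eq:circuit-set-M-invertible}. Writing $M=\{i_1,\ldots,i_k\}$ in the cyclic order supplied by condition (iii), I would first use conditions (i) and (ii) to obtain the unique decomposition
$\boldsymbol{H}^{\{j\}}=\sum_{\ell=1}^{k}\boldsymbol{H}^{\{i_\ell\}}\boldsymbol{T}_\ell$,
where uniqueness of the coefficient blocks $\boldsymbol{T}_\ell$ follows from the independence of $M$ in $\boldsymbol{H}$. Observe that $\mathrm{rank}(\boldsymbol{H}^{\{j\}\cup M})=|M|t$ is automatic from (i) and (ii), and that $\mathrm{rank}(\boldsymbol{H}^{\{j\}\cup(M\setminus\{i_\ell\})})=|M|t$ is equivalent to $\boldsymbol{T}_\ell$ being invertible (since $\boldsymbol{H}^{\{i_\ell\}}\boldsymbol{T}_\ell=\boldsymbol{H}^{\{j\}}-\sum_{m\ne\ell}\boldsymbol{H}^{\{i_m\}}\boldsymbol{T}_m$). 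Hence the whole proof reduces to showing that every $\boldsymbol{T}_\ell$ is invertible, after which \eqref{eq:circuit-set-M-invertible} makes $\{j\}\cup M$ a circuit set of $\boldsymbol{H}$.

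To establish invertibility of a given $\boldsymbol{T}_\ell$, I would argue by contradiction. Let $v\neq 0$ satisfy $\boldsymbol{T}_\ell v=0$, and target user $i_{\ell-1}$ (indices taken cyclically modulo $k$). Condition (iii) gives $B_{i_{\ell-1}}\cap M=M\setminus\{i_{\ell-1},i_\ell\}$, and condition (iv) gives $j\in B_{i_{\ell-1}}$, so $B_{i_{\ell-1}}\supseteq\{j\}\cup(M\setminus\{i_{\ell-1},i_\ell\})$. The decoding condition \eqref{eq:dec-cond} at $i_{\ell-1}$ then forces $\mathrm{col}(\boldsymbol{H}^{\{i_{\ell-1}\}})\cap W_\ell=\{0\}$, where $W_\ell:=\mathrm{col}(\boldsymbol{H}^{\{j\}\cup(M\setminus\{i_{\ell-1},i_\ell\})})$. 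Multiplying the decomposition by $v$ and isolating the $\boldsymbol{H}^{\{i_{\ell-1}\}}$ term yields
$\boldsymbol{H}^{\{i_{\ell-1}\}}\boldsymbol{T}_{\ell-1}v=\boldsymbol{H}^{\{j\}}v-\sum_{m\ne\ell-1,\ell}\boldsymbol{H}^{\{i_m\}}\boldsymbol{T}_m v\in W_\ell$,
so this element lies simultaneously in $\mathrm{col}(\boldsymbol{H}^{\{i_{\ell-1}\}})$ and $W_\ell$, and thus equals zero. Since $\boldsymbol{H}^{\{i_{\ell-1}\}}$ has full column rank $t$ by Remark~\ref{rem:1}, I would then conclude $\boldsymbol{T}_{\ell-1}v=0$.

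Iterating this implication around the cycle propagates $v$ through every kernel: $v\in\ker\boldsymbol{T}_\ell\Rightarrow v\in\ker\boldsymbol{T}_{\ell-1}\Rightarrow\cdots\Rightarrow v\in\ker\boldsymbol{T}_m$ for all $m\in[k]$. Summing gives $\boldsymbol{H}^{\{j\}}v=\sum_m\boldsymbol{H}^{\{i_m\}}\boldsymbol{T}_m v=0$, which contradicts the full column rank of $\boldsymbol{H}^{\{j\}}$ guaranteed by Remark~\ref{rem:1}. The main obstacle I anticipate is the bookkeeping around minimality of the cycle: the clean step $\boldsymbol{H}^{\{i_{\ell-1}\}}\boldsymbol{T}_{\ell-1}v\in W_\ell$ works precisely because condition (iii) excludes exactly the pair $(i_{\ell-1},i_\ell)$ from $B_{i_{\ell-1}}\cap M$, so that every other cycle element sits inside $W_\ell$ and only the desired pair of $\boldsymbol{H}^{\{i_m\}}$-terms survives on the right-hand side; without minimality, additional stray terms would obstruct the conclusion $\boldsymbol{T}_{\ell-1}v=0$ and break the cyclic kernel propagation.
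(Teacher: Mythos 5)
Your argument is correct and complete: the reduction of the circuit condition to the invertibility of each coefficient block $\boldsymbol{T}_\ell$, and the cyclic kernel-propagation $v\in\ker\boldsymbol{T}_\ell\Rightarrow v\in\ker\boldsymbol{T}_{\ell-1}$ driven by the decoding condition \eqref{eq:dec-cond} at user $i_{\ell-1}$ together with the minimal-cyclic structure of $B_{i_{\ell-1}}\cap M$, are both sound, and the final contradiction with the full column rank of $\boldsymbol{H}^{\{j\}}$ closes the loop. The paper itself states this lemma without proof (it is imported from \cite{Arman-4-3}), so there is nothing internal to compare against; your proof is a valid, self-contained justification consistent with the paper's definitions of circuit sets and minimal cyclic sets.
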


Now, we provide Lemmas \ref{lem:col(9)} and \ref{lem:i-n-i-n}, which will both be used in the proof of Proposition \ref{prop:I-fano-necessary}. Lemma \ref{lem:col(9)} generalizes Lemma 5 in \cite{Arman-4-3}.

\begin{lem}  \label{lem:col(9)}
Assume for matrix $\boldsymbol{H}\in \mathbb{F}_{q}^{(p+1)t\times nt}$, 
\begin{enumerate}[label=(\roman*)]
  \item set $[p+1]$ is a basis set,
  \item each set $[p+1]\backslash \{i\} \cup \{n-i\}, i\in [p+1]$ is a circuit set,
  \item $\mathrm{col}(\boldsymbol{H}^{\{n\}})\subseteq \mathrm{col}(\boldsymbol{H}^{\{i,n-i\}}), \forall i\in [p+1].$
        
\end{enumerate}
Then, each set $\{i, n-i, n\}, i\in [p+1]$ will be a circuit set.
\end{lem}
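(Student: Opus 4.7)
The plan is to fix an arbitrary $i \in [p+1]$ and verify directly, from Definition \ref{def:Basis-Circuit-Matrix}, all rank conditions required for $\{i, n-i, n\}$ to be a circuit set of $\boldsymbol{H}$: namely $\mathrm{rank}(\boldsymbol{H}^{\{i,n-i,n\}}) = 2t$ and every two-element subset of rank $2t$.

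The first step disposes of the easy cases. From (ii), removing $n-i$ from the circuit $[p+1]\setminus\{i\} \cup \{n-i\}$ does not drop rank, so $\mathrm{col}(\boldsymbol{H}^{\{n-i\}}) \subseteq \mathrm{col}(\boldsymbol{H}^{[p+1]\setminus\{i\}})$. By (i), this span meets $\mathrm{col}(\boldsymbol{H}^{\{i\}})$ trivially, hence $\mathrm{rank}(\boldsymbol{H}^{\{i,n-i\}}) = 2t$, and then (iii) gives $\mathrm{rank}(\boldsymbol{H}^{\{i,n-i,n\}}) = \mathrm{rank}(\boldsymbol{H}^{\{i,n-i\}}) = 2t$, so the triple is dependent and the two subsets $\{i,n-i\}$ and $\{i,n-i,n\}$ already satisfy the circuit rank conditions.

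The main obstacle is to exclude the degenerate possibilities $\mathrm{col}(\boldsymbol{H}^{\{n\}}) \subseteq \mathrm{col}(\boldsymbol{H}^{\{i\}})$ and $\mathrm{col}(\boldsymbol{H}^{\{n\}}) \subseteq \mathrm{col}(\boldsymbol{H}^{\{n-i\}})$. For this I change basis so that $\boldsymbol{H}^{[p+1]} = \boldsymbol{I}_{(p+1)t}$. Then (ii) together with the invertibility statement \eqref{eq:circuit-set-M-invertible} forces every $\boldsymbol{H}^{\{n-i'\}}$ (for $i' \in [p+1]$) to have its $i'$-th block equal to $\boldsymbol{0}_t$ and every other block equal to an invertible matrix $\boldsymbol{M}_{n-i',j}$. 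Rewriting (iii) in these coordinates then yields, for every column $w$ of $\boldsymbol{H}^{\{n\}}$ and every $i' \in [p+1]$, the coupling $\boldsymbol{M}_{n-i',j_1}^{-1} w_{j_1} = \boldsymbol{M}_{n-i',j_2}^{-1} w_{j_2}$ for all $j_1, j_2 \in [p+1]\setminus\{i'\}$. Assuming $\mathrm{col}(\boldsymbol{H}^{\{n\}}) \subseteq \mathrm{col}(\boldsymbol{H}^{\{i\}})$ forces $w_j = 0$ for every $j \neq i$; picking any $i' \neq i$ and any third index $j \in [p+1]\setminus\{i,i'\}$ (which exists since $p \geq 2$), the coupling with indices $i$ and $j$ then forces $w_i = 0$, so $w = 0$, contradicting the full column rank of $\boldsymbol{H}^{\{n\}}$ guaranteed by Remark \ref{rem:1}. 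The alternative $\mathrm{col}(\boldsymbol{H}^{\{n\}}) \subseteq \mathrm{col}(\boldsymbol{H}^{\{n-i\}})$ is ruled out symmetrically: such a $w$ already has $w_i = 0$, and propagating zero through the couplings for two different $i', i'' \in [p+1]\setminus\{i\}$ zeroes out every remaining block of $w$.

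Combining the three steps, $\{i,n-i\}$, $\{i,n\}$, $\{n-i,n\}$ are all of rank $2t$ while $\{i,n-i,n\}$ itself has rank $2t$, so the triple is a circuit set of $\boldsymbol{H}$ by Definition \ref{def:Basis-Circuit-Matrix}. Since $i \in [p+1]$ was arbitrary, the claim follows.
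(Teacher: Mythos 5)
Your argument is essentially sound but takes a genuinely different route from the paper's. The paper never passes to coordinates: it substitutes the circuit relation for $([p+1]\backslash\{i\})\cup\{n-i\}$ into hypothesis (iii) to expand $\boldsymbol{H}^{\{n\}}$ over the basis $[p+1]$, does the same through a second pair $\{l,n-l\}$, and compares the two expansions to force the coefficient matrices $\boldsymbol{M}_{n,i}$ and $\boldsymbol{M}_{n,n-i}$ in $\boldsymbol{H}^{\{n\}}=\boldsymbol{H}^{\{i\}}\boldsymbol{M}_{n,i}+\boldsymbol{H}^{\{n-i\}}\boldsymbol{M}_{n,n-i}$ to be invertible, which is the form \eqref{eq:circuit-set-M-invertible} of the circuit condition. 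You instead normalize $\boldsymbol{H}^{[p+1]}=\boldsymbol{I}_{(p+1)t}$ and extract the block-wise coupling identities $\boldsymbol{M}_{n-i',j_1}^{-1}w_{j_1}=\boldsymbol{M}_{n-i',j_2}^{-1}w_{j_2}$, then verify the rank conditions of Definition \ref{def:Basis-Circuit-Matrix} directly by a contradiction argument. Both proofs consume the same raw material (the basis $[p+1]$, the $p+1$ circuit relations, and hypothesis (iii) for at least two distinct indices, which is where $p\geq 2$ enters); your coordinate form makes the propagation-of-zeros mechanism more transparent, while the paper's version stays basis-free and feeds directly into the $\boldsymbol{M}$-notation reused in Propositions \ref{prop:Fano-necessary} and \ref{prop:non-Fano-necessary}.

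One step of your write-up needs tightening for $t>1$. The circuit condition requires $\mathrm{rank}(\boldsymbol{H}^{\{i,n\}})=\mathrm{rank}(\boldsymbol{H}^{\{n-i,n\}})=2t$, i.e., that $\mathrm{col}(\boldsymbol{H}^{\{n\}})$ meets $\mathrm{col}(\boldsymbol{H}^{\{i\}})$ (respectively $\mathrm{col}(\boldsymbol{H}^{\{n-i\}})$) only in $\{0\}$. Excluding the containments $\mathrm{col}(\boldsymbol{H}^{\{n\}})\subseteq\mathrm{col}(\boldsymbol{H}^{\{i\}})$ and $\mathrm{col}(\boldsymbol{H}^{\{n\}})\subseteq\mathrm{col}(\boldsymbol{H}^{\{n-i\}})$ only rules out the pair having rank $t$; it does not rule out the intermediate ranks strictly between $t$ and $2t$ that occur when the two $t$-dimensional column spaces intersect partially. (For $t=1$ containment and nontrivial intersection coincide, but the lemma is stated for general $t$.) Fortunately your own computation already delivers the stronger statement: the coupling identities are linear in $w$ and therefore hold for every $w\in\mathrm{col}(\boldsymbol{H}^{\{n\}})$, not just for the columns of $\boldsymbol{H}^{\{n\}}$. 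So apply them to an arbitrary nonzero $w$ in the intersection $\mathrm{col}(\boldsymbol{H}^{\{n\}})\cap\mathrm{col}(\boldsymbol{H}^{\{i\}})$, which satisfies $w_j=0$ for all $j\neq i$; the propagation then gives $w=0$ directly, with no appeal to Remark \ref{rem:1}, and the same rephrasing handles the pair $\{n-i,n\}$. With that adjustment the proof is complete.
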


\begin{proof}
    Let $i\in [p+1]$. Since $\mathrm{col}(\boldsymbol{H}^{\{n\}})\subseteq \mathrm{col}(\boldsymbol{H}^{\{i,j-i\}})$, we have 
    \begin{align}
        \boldsymbol{H}^{\{n\}}=\boldsymbol{H}^{\{i\}}\boldsymbol{M}_{n,i}+\boldsymbol{H}^{\{n-i\}}\boldsymbol{M}_{n,n-i}.
        \label{eq:lem:last-conversion-9-1}
    \end{align}
    Besides, since set $[p+1]\backslash \{i\} \cup \{n-i\}$ is a circuit set, we must have
    \begin{align}
        \boldsymbol{H}^{\{n-i\}}=\sum_{j\in [p+1]\backslash \{i\}} \boldsymbol{H}^{\{i\}}\boldsymbol{M}_{n-i,j},
        \label{eq:lem:last-conversion-9-2}
    \end{align}
    where each matrix $\boldsymbol{M}_{n-j,j}, j\in [p+1]\backslash \{i\}$ is invertible. Thus, based on \eqref{eq:lem:last-conversion-9-1} and \eqref{eq:lem:last-conversion-9-2}, we have
    \begin{align}
        \boldsymbol{H}^{\{n\}} &= \boldsymbol{H}^{\{i\}}\boldsymbol{M}_{n,i}+(\sum_{j\in [p+1]\backslash \{i\}} \boldsymbol{H}^{\{j\}}\boldsymbol{M}_{n-i,j})\boldsymbol{M}_{n,n-i}
        \nonumber
        \\
        &=\boldsymbol{H}^{\{i\}}\boldsymbol{M}_{n,i}+ \sum_{j\in [p+1]\backslash \{i\}} \boldsymbol{H}^{\{j\}}\boldsymbol{M}_{n-i,j}^{\prime},
        \label{eq:lem3:1}
    \end{align}
    where $\boldsymbol{M}_{n-i,j}^{\prime} = \boldsymbol{M}_{n-i,j} \boldsymbol{M}_{n,n-i}, j\in [p+1]\backslash \{i\}$. 
    Now, let $l\in [p+1]\backslash \{i\} $. Since set $[p+1]\backslash \{l\}$ is a circuit set, we have 
    \begin{align}
        \boldsymbol{H}^{\{n-l\}}=\sum_{j\in [p+1]\backslash \{l\}} \boldsymbol{H}^{\{j\}}\boldsymbol{M}_{n-l,j},
        \label{eq:lem3:2}
    \end{align}
    where each $\boldsymbol{M}_{n-l,j}$ is invertible. Now, for set $\{l,n-l,l\}$, we have
    \begin{align}
        \mathrm{rank}(\boldsymbol{H}&^{\{l,n-l,n\}})
        =\mathrm{rank}([\boldsymbol{H}^{\{l\}}|\boldsymbol{H}^{\{n-l\}}|\boldsymbol{H}^{\{n\}}])
        \nonumber
        \\
        &=\mathrm{rank}([\boldsymbol{H}^{\{l\}}|\boldsymbol{H}^{\{n-l\}}|\boldsymbol{H}^{\{n\}}-\boldsymbol{H}^{\{l\}}\boldsymbol{M}_{n-i,l}^{\prime}])
        \nonumber
        \\
        &=t +\mathrm{rank}([\boldsymbol{H}^{\{n-l\}}|\boldsymbol{H}^{\{n\}}-\boldsymbol{H}^{\{l\}}\boldsymbol{M}_{n-i,l}^{\prime}]).
        \label{eq:lem3:3}
    \end{align}
    Now,
    since $\mathrm{col}(\boldsymbol{H}^{\{n\}})$ must be a subspace of $\mathrm{col}(\boldsymbol{H}^{\{l,n-l,n\}})$, we must have $\mathrm{rank}(\boldsymbol{H}^{\{l,n-l,n\}})=2t$. Thus, in \eqref{eq:lem3:3}, $\boldsymbol{H}^{\{n\}}-\boldsymbol{H}^{\{l\}}\boldsymbol{M}_{n-i,l}^{\prime}$ must be linearly dependent on $\boldsymbol{H}^{\{n-l\}}$, which based on \eqref{eq:lem3:1} and \eqref{eq:lem3:2} requires each $\boldsymbol{M}_{n,i}, \boldsymbol{M}_{n-i,j}^{\prime}, j\in [p+1]\backslash \{i,l\}$ to be invertible. 
    Moreover, each $\boldsymbol{M}_{n-i,j}^{\prime}, j\in [p+1]\backslash \{i,l\}$ is invertible only if $\boldsymbol{M}_{n,n-i}$ is invertible. Thus, since both $\boldsymbol{M}_{n,i}$ and $\boldsymbol{M}_{n,n-i}$ are invertible, set $\{i,n-i,n\}, i\in [p+1]$ forms a circuit set of $\boldsymbol{H}$.
\end{proof}

\begin{lem} \label{lem:i-n-i-n}
    Let $M_{1}\subset [m]$ and $M_{2}\subset [m]$. Assume set $M_{1}$ is an acyclic set of $\mathcal{I}$ and $M_{2}\subset B_{i}$ for all $i\in M_{1}$. Now, if $\mathrm{rank}(\boldsymbol{H}^{M_{1}\cup M_{2}})\leq kt$ where $|M_{1}| < k \leq |M_{1}|+|M_{2}|$, then we must have $\mathrm{rank}(\boldsymbol{H}^{M_{2}})\leq (|M_{1}| - k) t$.
\end{lem}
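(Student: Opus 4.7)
The plan is to use an acyclic ordering of $M_1$ together with the linear decoding condition to telescope the rank contribution of $M_1$ out of $\mathrm{rank}(\boldsymbol{H}^{M_1\cup M_2})$.

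Since $M_1$ is an acyclic set of $\mathcal{I}$ (Definition~\ref{def:acyclic-set}), a topological-ordering argument on the ``interferes with'' relation produces an enumeration $M_1=\{i_1,\dots,i_{|M_1|}\}$ such that $\{i_{j+1},\dots,i_{|M_1|}\}\subseteq B_{i_j}$ for every $j\in[|M_1|]$: any violation would give a cyclic sub-configuration, contradicting acyclicity. Combined with the hypothesis $M_2\subseteq B_i$ for each $i\in M_1$, this yields the inclusion
$$S_j:=\{i_{j+1},\dots,i_{|M_1|}\}\cup M_2 \;\subseteq\; B_{i_j}, \qquad j\in[|M_1|].$$

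Next I would invoke Proposition~\ref{prop-lineardecoding-condition}, which says $\mathrm{rank}(\boldsymbol{H}^{\{i_j\}\cup B_{i_j}})=\mathrm{rank}(\boldsymbol{H}^{B_{i_j}})+t$; equivalently, $\mathrm{col}(\boldsymbol{H}^{\{i_j\}})$ meets $\mathrm{col}(\boldsymbol{H}^{B_{i_j}})$ only at the origin. Since $\mathrm{col}(\boldsymbol{H}^{S_j})\subseteq \mathrm{col}(\boldsymbol{H}^{B_{i_j}})$ for any subset $S_j\subseteq B_{i_j}$, the same trivial-intersection property transfers to $\mathrm{col}(\boldsymbol{H}^{\{i_j\}})$ and $\mathrm{col}(\boldsymbol{H}^{S_j})$, giving
$$\mathrm{rank}(\boldsymbol{H}^{\{i_j\}\cup S_j})=\mathrm{rank}(\boldsymbol{H}^{S_j})+t.$$
Telescoping this identity for $j=|M_1|,|M_1|-1,\dots,1$ produces
$$\mathrm{rank}(\boldsymbol{H}^{M_1\cup M_2})=\mathrm{rank}(\boldsymbol{H}^{M_2})+|M_1|\,t.$$

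Substituting the hypothesis $\mathrm{rank}(\boldsymbol{H}^{M_1\cup M_2})\leq kt$ then yields $\mathrm{rank}(\boldsymbol{H}^{M_2})\leq (k-|M_1|)\,t$, which is the intended conclusion; the ``$(|M_1|-k)\,t$'' printed in the statement appears to be a sign typo, as that quantity would be negative under $k>|M_1|$. The only step that needs any care is the transfer of the $t$-dimensional linear independence from $B_{i_j}$ to its subset $S_j$, and since this is immediate from the trivial-intersection reformulation of the decoding condition, I do not expect a genuine obstacle. The bulk of the argument is routine acyclic-ordering plus rank telescoping.
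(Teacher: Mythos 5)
Your proof is correct and follows essentially the same route as the paper's: order $M_{1}$ so that the decoding condition \eqref{eq:dec-cond} peels off $t$ from the rank for each element in turn, yielding $\mathrm{rank}(\boldsymbol{H}^{M_{1}\cup M_{2}})=|M_{1}|t+\mathrm{rank}(\boldsymbol{H}^{M_{2}})$; your handling of the acyclic ordering and of restricting the decoding condition from $B_{i_j}$ to its subset $S_j$ is in fact more explicit than the paper's terse telescoping chain. You are also right that the stated bound $(|M_{1}|-k)t$ is a sign typo --- the paper's own proof concludes with $(k-|M_{1}|)t$.
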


\begin{proof}
    Let $M_{1}=\{i_{1}, \dots, i_{|M_{1}|}\}$. Applying the decoding condition \eqref{eq:dec-cond} to each $i_{1}, \dots, i_{|M_{1}|}$ will result in
    \begin{align}
        &\mathrm{rank}(\boldsymbol{H}^{\{i_{1}, \dots, i_{|M_{1}|}\}\cup M_{2}})
        = 
        \nonumber
        \\
        &\mathrm{rank}(\boldsymbol{H}^{\{i_{1}\}}) + 
        \mathrm{rank}(\boldsymbol{H}^{\{i_{2}, \dots, i_{|M_{1}|}\}\cup M_{2}})=
        \nonumber
        \\
        &\vdots
        \nonumber
        \\
        &\mathrm{rank}(\boldsymbol{H}^{\{i_{1}\}}) + \dots \mathrm{rank}(\boldsymbol{H}^{\{i_{|M_{1}|}\}}) + \mathrm{rank}(\boldsymbol{H}^{M_{2}}) =
        \nonumber
        \\
        & |M_{1}|t + \mathrm{rank}(\boldsymbol{H}^{M_{2}}).
    \end{align}
    Now, if $\mathrm{rank}(\boldsymbol{H}^{M_{1}\cup M_{2}})\leq kt$, then we will have $\mathrm{rank}(\boldsymbol{H}^{M_{2}})\leq (k - |M_{1}|) t$.
\end{proof}

\subsection{The Class $p$-Fano Index Coding Instance} \label{sub:index-coding-fano}
\begin{defn}[$\mathcal{I}_{F}(p)$: Class $p$-Fano Index Coding Instance]
Consider the following class of index coding instances $\mathcal{I}_{F}(p)=\{B_{i}, i\in [m]\}$,
\begin{equation}
    [m]=[n] \cup_{l\in [p+1]} Z_{l} \cup Z^{\prime} \cup_{l\in [p+1]} Z_{l}^{\prime \prime},
\end{equation}
where $n=2p+3$ and
\begin{align}
    &Z_{l}= \{z(l,j), j\in [p+1]\cup \{l\}\}, \
    \nonumber
    \\
    &z(l,j)=(p+1)(l+1)+1+j,\ \ l,j\in [p+1],
    \\
    &Z^{\prime}= \{z^{\prime}(j), j\in [p+1]\}, \ \ \ \ \ \ \ \ \ \ \ 
    \nonumber
    \\
    &z^{\prime}(j)= p^{2}+4p+4+j, \ \ j\in [p+1],
    \\
    &Z_{l}^{\prime\prime}= \{z^{\prime\prime}(l,j), j\in [p-2]\},\ p>2,
    \nonumber
    \\
    &z^{\prime\prime}(l,j)=p^{2}+(l+4)p+7-2i+j, \ \ l\in [p+1], j\in [p-2].
\end{align}
The interfering message sets $B_{i}, i\in [n]$ are as follows.
\begin{align}
    &B_{i}=([p+1]\backslash \{i\})\cup \{n-i\}\cup_{l\in [p+1]} Z_{l}\backslash\{z(l,i)\}, i\in [p+1],  
    \nonumber
    \\
    &B_{i}=[p+2:2p+2]\backslash \{i,i+1\}, \ i\in [p+2:2p],
    \nonumber
    \\
    &B_{i}=[p+2:2p+2]\backslash \{i,1\}, \ i = 2p + 1,
    \nonumber
    \\
    &B_{2p+2}=[p+2:2p+2]\backslash \{2p+2,p+2\},
    \nonumber
    \\
    &B_{n} = [p+2:2p+2];
    \label{eq:Fano-instane-b1n}
\end{align}
The interfering message sets $B_{i}, i\in Z_{l}, l\in [p+1]$ are as follows
\begin{align}
    &B_{z(l,l)}= \emptyset, \ \ \  l\in [p+1],
    \\
    &B_{z(l,j)}= (Z_{l}\backslash \{z(l,l), z(l,j), z(l, j+1)\}) \cup \{n-l\},
    \nonumber
    \\  
    &j \in [p]\backslash \{l,l-1\}, l \in [P+1],
    \\
    &B_{z(l,l-1)}=(Z_{l}\backslash \{z(l,l), z(l,l-1), z(l, l+2)\}) \cup \{n-l\},
    \nonumber
    \\
    &l\in [p]\backslash \{1\},
    \\
    &B_{z(l,l-1)}=(Z_{l}\backslash \{z(l,l), z(l,l-1), z(l, 1)\}) \cup \{n-l\},
    \nonumber
    \\
    &l=p+1,
    \\
    &B_{z(l,p+1)}=(Z_{l}\backslash \{z(l,l), z(l,p+1), z(l, 2)\}) \cup \{n-l\}, \
    \nonumber
    \\
    &l=1,
    \\
    &B_{z(l,p+1)}=(Z_{l}\backslash \{z(l,l), z(l,p+1), z(l, 1)\}) \cup \{n-l\}, 
    \nonumber
    \\
    &l\in [p+1]\backslash \{1,p+1\}.
\end{align}
The interfering message sets $B_{i}, i\in Z^{\prime}$ are as follows
\begin{align}
    B_{z^{\prime}(l)}& =\{n-l, n, z(l,l)\}\cup Z_{l}^{\prime\prime},\ \ \ l\in [p+1].
\end{align}
The interfering message sets $B_{i}, i\in Z_{l}^{\prime\prime}, l\in [p+1]$ are as follows
\begin{align}
    B_{z^{\prime\prime}(l,j)}&= \{n-l, n, z(l,l) \}\cup \{z^{\prime}(l)\}\cup \big (Z_{l}^{\prime\prime}\backslash \{z^{\prime\prime}(l,j)\}\big ),
    \\
    &l\in [p+1], j\in [p-2].
    \nonumber
\end{align}
We refer to such $\mathcal{I}_{F}(p)$ as the class $p$-Fano index coding instance. It can be seen that the total number of users is 
\begin{align}
    m&= n + \sum_{l\in [p+1]} |Z_{l}| + |Z_{l}^{\prime}| + \sum_{l\in [p+1]} |Z_{l}^{\prime\prime}|
    \nonumber
    \\
    &= (2p+3) + (p+1)(p+1) + (p+1) + (p+1)(p-2)
    \nonumber
    \\
    &=2p^{2}+4p+3.
\end{align}
\end{defn}

    


\begin{exmp}[$\mathcal{I}_{F}(2)$] \label{2-Fano index coding instance}
Consider the class $2$-Fano index coding instance $\mathcal{I}_{F}(2)=\{B_{i}, i\in [m]\}$, which is characterized as follows
\begin{align}
     &Z_{1}=\{z(1,1)=8,z(1,2)=9,z(1,3)=10\},
     \nonumber
     \\
     &Z_{2}=\{z(2,1)=11,z(2,2)=12,z(2,3)=13\},
     \nonumber
     \\
     &Z_{3}=\{z(3,1)=14,z(3,2)=15,z(3,3)=16\},
     \nonumber
     \\
     &Z^{\prime}=\{z^{\prime}(1)=17 ,z^{\prime}(2)= 18,z^{\prime}(3)= 19\},
    \nonumber
    \\
    &m= n + |Z_{1}| +|Z_{2}| +|Z_{3}| + |Z^{\prime}|= 7 + 3 +3+3+3=19.
    \nonumber
\end{align}
\begin{align}
    &B_{1}=([3]\backslash \{1\})\cup \{6\}\cup [8:16]\backslash \{8,11,14\},
    \nonumber
    \\
    &B_{2}=([3]\backslash \{2\})\cup \{5\}\cup [8:19]\backslash \{9,12,15\},
    \nonumber
    \\
    &B_{3}=([3]\backslash \{3\})\cup \{4\}\cup [8:19]\backslash \{10,13,16\},
    \nonumber
    \\
    &B_{4}=\{6\},\ \ B_{5}=\{4\},\ \ B_{2p+2=6}=\{5\},
    \nonumber
    \\
    &B_{n=7}=\{4,5,6\},
    \nonumber
    \\
    &B_{z(1,1)=8}=\emptyset,\ \ B_{z(1,2)=9}=\{6\}, \ \ B_{z(1,3)=10}=\{6\},
    \nonumber
    \\
    &B_{z(2,1)=11}=\{5\},\ \ B_{z(2,2)=12}=\emptyset, \ \ B_{z(2,3)=13}=\{5\},
    \nonumber
    \\
    &B_{z(3,1)=14}=\{4\},\ \ B_{z(3,2)=15}=\{4\}, \ \ B_{z(3,3)=16}=\emptyset
    \nonumber
    \\
    &B_{z^{\prime}(1)=17}=\{6,7,8\},\ \ B_{z^{\prime}(2)=18}=\{5,7,12\}, 
    \nonumber
    \\
    &B_{z^{\prime}(3)=19}=\{4,7,16\}.
    \nonumber
\end{align}
\end{exmp}

It is worth mentioning that applying the mapping methods discussed in \cite{Effros2015} and \cite{Maleki2014} for the 2-Fano matroid results in an index coding instance consisting of more than 200 users, while the 2-Fano index coding instance in Example \ref{2-Fano index coding instance} is of a size 19.

\begin{exmp}[$\mathcal{I}_{F}(3)$] \label{3-Fano index coding instance}
Consider the class $3$-Fano index coding instance $\mathcal{I}_{F}(3)=\{B_{i}, i\in [m]\}$, which is characterized as follows
\begin{align}
     &Z_{1}=\{z(1,1)=10,z(1,2)=11,z(1,3)=12, z(1,4)=13\},
     \nonumber
     \\
     &Z_{2}=\{z(2,1)=14,z(2,2)=15,z(2,3)=16, z(2,4)=17\},
     \nonumber
     \\
     &Z_{3}=\{z(3,1)=18,z(3,2)=19,z(3,3)=20, z(3,4)=21\},
     \nonumber
     \\
     &Z_{4}=\{z(4,1)=22,z(4,2)=23,z(4,3)=24, z(4,4)=25\},
     \nonumber
     \\
     &Z^{\prime}=\{z^{\prime}(1)=26 ,z^{\prime}(2)= 27,z^{\prime}(3)= 28,z^{\prime}(4)= 29\},
    \nonumber
    \\
    &Z_{1}^{\prime\prime}=\{z^{\prime\prime}(1,1)=30\},
    \nonumber
    \\
    &Z_{2}^{\prime\prime}=\{z^{\prime\prime}(2,1)=31\},
    \nonumber
    \\
    &Z_{3}^{\prime\prime}=\{z^{\prime\prime}(3,1)=32\},
    \nonumber
    \\
    &Z_{4}^{\prime\prime}=\{z^{\prime\prime}(4,1)=33\},
    \nonumber
    \\
    &m= n + |Z_{1}| +|Z_{2}| +|Z_{3}| + |Z_{4}| + |Z^{\prime}| + |Z_{1}^{\prime\prime}| +|Z_{2}^{\prime\prime}| +
    \nonumber
    \\
    &|Z_{3}^{\prime\prime}| + |Z_{4}^{\prime\prime}| 
    \nonumber
    \\
    &= 9 + 4 + 4 + 4 + 4 + 4 + 1 + 1 + 1 + 1=33.
    \nonumber
\end{align}
\begin{align}
    &B_{1}=([4]\backslash \{1\})\cup \{8\}\cup [10:25]\backslash \{10,14,18,22\},
    \nonumber
    \\
    &B_{2}=([4]\backslash \{2\})\cup \{7\}\cup [10:25]\backslash \{11,15,19,23\},
    \nonumber
    \\
    &B_{3}=([4]\backslash \{3\})\cup \{6\}\cup [10:25]\backslash \{12,16,20,24\},
    \nonumber
    \\
    &B_{4}=([4]\backslash \{4\})\cup \{5\}\cup [10:25]\backslash \{13,17,21,25\},
    \nonumber
    \\
    &B_{5}=\{7,8\}, B_{6}=\{5,8\}, B_{7}=\{5,6\}, B_{2p+2=8}=\{6,7\},
    \nonumber
    \\
    &B_{n=9}=\{5,6,7,8\},
    \nonumber
    \\
    &B_{z(1,1)=10}=\emptyset,\ \ B_{z(1,2)=11}=\{13\}\cup \{8\}, 
    \nonumber
    \\
    &B_{z(1,3)=12}=\{11\}\cup \{8\}, \ \ B_{z(1,4)=13}=\{12\}\cup \{8\}
    \nonumber
    \\
    &B_{z(2,1)=14}=\{17\}\cup \{7\},\ \ B_{z(2,2)=15}=\emptyset, \
    \nonumber
    \\
    &B_{z(2,3)=16}=\{14\}\cup \{7\}, \ \ B_{z(2,4)=17}=\{16\}\cup \{7\}
    \nonumber
    \\
    &B_{z(3,1)=18}=\{21\}\cup \{6\},\ \ B_{z(3,2)=19}=\{18\}\cup \{6\}, \
    \nonumber
    \\
    &B_{z(3,3)=20}=\emptyset, \ \ B_{z(3,4)=21}=\{19\}\cup \{6\}
    \nonumber
    \\
    &B_{z(4,1)=22}=\{24\}\cup \{5\},\ \ B_{z(4,2)=23}=\{22\}\cup \{5\}, 
    \nonumber
    \\
    &B_{z(4,3)=24}=\{23\}\cup \{5\}, \ \ B_{z(4,4)=25}=\emptyset
    \nonumber
    \\
    &B_{z^{\prime}(1)=26}=\{8,9,10,30\}, B_{z^{\prime}(2)=27}=\{7,9,15,31\},  
    \nonumber
    \\
    &B_{z^{\prime}(3)=28}=\{6,9,20,32\},  
    B_{z^{\prime}(4)=29}=\{5,9,25,33\}.
    \nonumber
    \\
    &B_{z^{\prime\prime}(1,1)=30}=\{8,9,10,26\}, B_{z^{\prime\prime}(2,1)=31}=\{7,9,15,27\}, 
    \nonumber
    \\
    &B_{z^{\prime\prime}(3,1)=32}=\{6,9,20,28\},  
    B_{z^{\prime\prime}(4,1)=33}=\{5,9,25,29\}.
    \nonumber
\end{align}
\end{exmp}

It is worth mentioning that applying the mapping methods discussed in \cite{Effros2015} and \cite{Maleki2014} for the 3-Fano matroid results in an index coding instance consisting of more than 1000 users, while the 3-Fano index coding instance in Example \ref{3-Fano index coding instance} is of a size of 33. 
Moreover, in \cite{Arman-4-3} an index coding instance based on 3-Fano matroid instance was built with the size of 29 users. That index coding instance is closely related to the 3-Fano index coding instance in Example \ref{3-Fano index coding instance}.

\begin{thm}
$\lambda_{q}(\mathcal{I}_{F}(p))=\beta (\mathcal{I}_{F}(p))=p+1$ iff the field $\mathbb{F}_{q}$ does have characteristic $p$. In other words, the necessary and sufficient condition for linear index coding to be optimal for $\mathcal{I}_{F}(p)$ is that the field $\mathbb{F}_{q}$ has characteristic $p$.
\end{thm}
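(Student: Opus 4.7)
The plan is to sandwich $\lambda_q(\mathcal{I}_F(p))$ between $p+1$ and $p+1$ using the MAIS bound from below and an explicit construction from above, and then to read off the required field characteristic by applying Theorem \ref{thm:p-Fano-matroid} to the restriction of the encoder to the first $n = 2p+3$ coordinates. The easy universal lower bound comes from observing that $B_i \cap [p+1] = [p+1]\setminus\{i\}$ for every $i\in[p+1]$, so $[p+1]$ is an independent (hence acyclic) set of $\mathcal{I}_F(p)$; consequently $\beta_{\mathrm{MAIS}}(\mathcal{I}_F(p))\ge p+1$ and therefore $\beta(\mathcal{I}_F(p))\le \lambda_q(\mathcal{I}_F(p))$ is at least $p+1$ for every field. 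The theorem thus reduces to showing that the rate $p+1$ is achievable by a scalar linear code exactly when $\mathrm{char}(\mathbb{F}_q)=p$.

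For the sufficiency direction (fields of characteristic $p$) I would exhibit an explicit $(p+1)\times m$ scalar encoding matrix $\boldsymbol{H}_\ast$. On the first $n$ coordinates I set $\boldsymbol{H}_\ast^{[n]}=\boldsymbol{H}_p$ from Figure~\ref{fig:H-class-fano-nonfano}, which by Proposition \ref{prop:Fano-sufficient} linearly represents $\mathcal{N}_F(p)$ over $\mathbb{F}_q$. Each auxiliary column for a user in $Z_l$, $Z'$ or $Z_l''$ is then assigned as the unique linear combination of previously defined columns that is visible from its (small) side-information set, so that Proposition \ref{prop-lineardecoding-condition} holds for that user. Since the combinations used live inside an already constructed $(p+1)$-dimensional column space and the $[p+1]$ and $[p+2\!:\!2p+2]$ circuit structure is exactly what the characteristic-$p$ hypothesis provides, a user-by-user verification of \eqref{eq:dec-cond} gives $\lambda_q(\mathcal{I}_F(p))\le p+1$.

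The necessity direction is where the work lies. Assume there is a $(t,(p+1)t)$ linear code with encoder $\boldsymbol{H}\in\mathbb{F}_q^{(p+1)t\times mt}$. The goal is to certify that $\boldsymbol{H}^{[n]}$ satisfies all the basis and circuit relations of $\mathcal{N}_F(p)$, after which Theorem~\ref{thm:p-Fano-matroid} forces $\mathrm{char}(\mathbb{F}_q)=p$. I would assemble this certificate in four steps using Lemmas \ref{lem:MAIS1}--\ref{lem:i-n-i-n}: (i) apply Lemma~\ref{lem:MAIS1} to the acyclic set $[p+1]$ to make $[p+1]$ a basis of $\boldsymbol{H}$; (ii) apply Lemma~\ref{lem:min-cyc} to the minimal cyclic set $[p+2\!:\!2p+2]$ coming from the interfering sets $B_{p+2},\dots,B_{2p+2}$ to turn it into a circuit of $\boldsymbol{H}$; (iii) for each $i\in[p+1]$, combine the $Z_l$-users (whose $B$-sets create a minimal cycle of $Z_l\setminus\{z(l,l)\}$ attached to $n-l$) with Lemmas \ref{lem:MAIS2} and \ref{lem:independent-cycle} to promote $([p+1]\setminus\{i\})\cup\{n-i\}$ to a circuit of $\boldsymbol{H}$; and (iv) use the users in $Z'$ and $Z_l''$, together with Lemma~\ref{lem:i-n-i-n}, to force $\mathrm{col}(\boldsymbol{H}^{\{n\}})\subseteq\mathrm{col}(\boldsymbol{H}^{\{i,n-i\}})$ for every $i\in[p+1]$, so that Lemma~\ref{lem:col(9)} upgrades each $\{i,n-i,n\}$ to a circuit. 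Combined with (i)--(iii) this is exactly the circuit/basis list of $\mathcal{N}_F(p)$.

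The principal obstacle is part (iv). The $Z'$ and $Z_l''$ users are engineered precisely so that their tiny interference sets $\{n-l,n,z(l,l)\}$ (possibly extended by $z'(l)$ or $Z_l''$) enforce the column-space containment $\mathrm{col}(\boldsymbol{H}^{\{n\}})\subseteq\mathrm{col}(\boldsymbol{H}^{\{i,n-i\}})$; pinning down why no larger acyclic super-set can ``steal'' rank from $\boldsymbol{H}^{\{n\}}$, and applying Lemma~\ref{lem:i-n-i-n} with the correct choice of $(M_1,M_2)$ so that the rank bound collapses to a true containment rather than a weaker partial-rank statement, is the most delicate bookkeeping of the argument. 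The remainder is a mechanical but lengthy verification that matches each auxiliary user against the reduction lemma it activates.
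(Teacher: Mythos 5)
Your proposal is correct and follows essentially the same route as the paper: the MAIS lower bound from the independent set $[p+1]$, achievability via the explicit matrix $\boldsymbol{H}_p$ of Figure~\ref{fig:H-class-fano-nonfano} extended to the auxiliary users (Proposition~\ref{prop:Fano-necessary-index}), and necessity by using Lemmas~\ref{lem:MAIS1}--\ref{lem:i-n-i-n} to show any rate-$(p+1)$ encoder restricted to $[n]$ linearly represents $\mathcal{N}_F(p)$ (Proposition~\ref{prop:I-fano-necessary}), whence Theorem~\ref{thm:p-Fano-matroid} forces characteristic $p$. You have also correctly located the delicate step, namely forcing $\mathrm{rank}(\boldsymbol{H}^{\{l,n-l,n\}})\leq 2t$ via Lemma~\ref{lem:i-n-i-n} applied to $M_1=Z'\cup Z''$ and $M_2=\{z(l,l),n-l,n\}$, which is exactly how the paper handles it.
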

\begin{proof}
The proof can be concluded from Propositions  \ref{prop:Fano-necessary-index} and \ref{prop:I-fano-necessary}.
\end{proof}
We first provide Remark \ref{rem: decoding-user-n} which will be used in the proof of Proposition \ref{prop:Fano-necessary-index}.
\begin{rem} \label{rem: decoding-user-n}
    Consider the following matrix:
    \begin{align}
    \begin{bmatrix}
    1       & 1      & 1      & \hdots & 1      & 1      & 0      & 1\\
    1       & 1      & 1      & \hdots & 1      & 0      & 1      & 1 \\
    1       & 1      & 1      & \hdots & 0      & 1      & 1      & 1 \\
    \vdots  & \vdots & \vdots & \ddots & \vdots & \vdots & \vdots & 1 \\
    1       & 1      & 0      & \hdots & 1      & 1      & 1      & 1\\
    1       & 0      & 1      & \hdots & 1      & 1      & 1      & 1\\
    0       & 1      & 1      & \hdots & 1      & 1      & 1      
    \end{bmatrix}\in \mathbb{F}_{q}^{(p+1)\times (p+2)}.
\end{align}
By running the Guassian elimination technique, we get
\begin{align}
    \begin{bmatrix}
    \textcolor{red}{{\large \textcircled{\small 1}}}       & 1                                                      & 1                                                      & \hdots & 1                                                    & 1                                                     & 0                              &  1
    \\
    0                                                      & 0                                                      & 0                                                      & \hdots & 0                                                    & \textcolor{red}{{\large \textcircled{\small -1}}}     & 1                              &  0      
    \\
    0                                                      & 0                                                      & 0                                                     & \hdots & \textcolor{red}{{\large \textcircled{\small -1}}}     & 0                                                     & 1                              &  0      
    \\
    \vdots                                                 & \vdots                                                 & \vdots                                                & \ddots & \vdots                                                & \vdots                                                & \vdots                         &  0
    \\
    0                                                      & 0                                                      & \textcolor{red}{{\large \textcircled{\small -1}}}     & \hdots & 0                                                     & 0                                                     & 1                              &  0      
    \\
    0                                                      & \textcolor{red}{{\large \textcircled{\small -1}}}      & 0                                                     & \hdots & 0                                                     & 0                                                     & 1                              &  0      
    \\
    0                                                      & 0                                                      & 0                                                     & \hdots & 0                                                     & 0                                                     &   \textcolor{red}{1+(p-1)}     &  \textcolor{red}{{\large \textcircled{\small 1}}}    
    \end{bmatrix}.
\end{align}
\end{rem}
Since $1+(p-1)=0$ over fields $\mathbb{F}_{q}$ with characteristic $p$, then the last column is linearly independent of the column space of the other columns.
\begin{prop} \label{prop:Fano-necessary-index}
For the class $p$-Fano index coding instance $\mathcal{I}_{F}(q)$, there exists an optimal scalar linear code $(t=1)$ over the field $F_{q}$ with characteristic $p$.
\end{prop}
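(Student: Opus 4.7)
The plan is to exhibit a scalar ($t=1$) linear index code achieving rate $p+1$ over any field $\mathbb{F}_q$ of characteristic $p$ and to combine it with a matching lower bound. By Proposition~\ref{prop:Fano-sufficient}, the matrix $\boldsymbol{H}_p \in \mathbb{F}_q^{(p+1) \times n}$ of Figure~\ref{fig:H-class-fano-nonfano} is a scalar linear representation of $\mathcal{N}_F(p)$ over characteristic $p$, and I would use its columns as the building blocks of the encoding matrix $\boldsymbol{H}_\ast \in \mathbb{F}_q^{(p+1) \times m}$ for $\mathcal{I}_F(p)$.

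First I would establish the lower bound. From \eqref{eq:Fano-instane-b1n}, for each $i \in [p+1]$ we have $B_i \cap [p+1] = [p+1] \setminus \{i\}$, so $[p+1]$ is an independent (hence acyclic) set of $\mathcal{I}_F(p)$. Therefore $\beta_{\text{MAIS}}(\mathcal{I}_F(p)) \geq p+1$, and by \eqref{eq:MAIS-linear-coding} every linear code satisfies $\lambda_q(\mathcal{I}_F(p)) \geq p+1$, so achieving rate $p+1$ suffices for optimality.

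Next I would define $\boldsymbol{H}_\ast$ by setting $\boldsymbol{H}_\ast^{\{i\}} = \boldsymbol{H}_p^{\{i\}}$ for $i \in [n]$, and extending to the auxiliary users by reusing columns of $\boldsymbol{H}_p$ in a pattern dictated by the side-information structure. For each $l \in [p+1]$, the ``reference'' user $z(l,l) \in Z_l$ (with $B = \emptyset$) would receive a column linearly independent of $\boldsymbol{H}_p^{\{n-l\}}$, and each remaining $z(l,j)$ would receive the same reused column as $\boldsymbol{H}_p^{\{l\}}$, so the relevant rank tests collapse to rank comparisons inside $\boldsymbol{H}_p$ governed by the matroid circuit $([p+1] \setminus \{l\}) \cup \{n-l\}$. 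For users in $Z'$ and $Z_l''$, whose $B$-sets always contain $\{n-l,\,n,\,z(l,l)\}$, the columns would be chosen from the span of the matroid columns indexed by $\{n-l, n, l\}$ so that Lemma~\ref{lem:col(9)} can be invoked.

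With $\boldsymbol{H}_\ast$ in hand, the decoding condition \eqref{eq:dec-cond} of Proposition~\ref{prop-lineardecoding-condition} must be verified for every $i \in [m]$. Users $i \in [p+2:n]$ reduce to the basis/circuit rank identities of Proposition~\ref{prop:Fano-sufficient}, where the circuit $[p+2:2p+2]$ is the one place that uses characteristic $p$ through Lemma~\ref{lem:matrix-F-(p+1)-p}. Users in $Z_l$, $Z'$, and $Z_l''$ are verified by combining the explicit column assignment with Lemmas~\ref{lem:col(9)} and~\ref{lem:i-n-i-n}. I expect the main obstacle to be the verification for the initial users $i \in [p+1]$, whose $B_i$ picks up nearly all of $\bigcup_l Z_l$; the key observation is that the reused columns contribute no new rank, so $\mathrm{rank}(\boldsymbol{H}_\ast^{B_i})$ equals $\mathrm{rank}(\boldsymbol{H}_p^{([p+1] \setminus \{i\}) \cup \{n-i\}}) = p$ by the circuit property, while $\mathrm{rank}(\boldsymbol{H}_\ast^{\{i\} \cup B_i}) = \mathrm{rank}(\boldsymbol{H}_p^{[p+1]}) = p+1$ by the basis property of $[p+1]$ in $\boldsymbol{H}_p$. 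This yields $\lambda_q(\mathcal{I}_F(p)) \leq p+1$, which together with the MAIS lower bound shows $\boldsymbol{H}_\ast$ is an optimal scalar linear code.
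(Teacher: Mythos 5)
Your overall skeleton matches the paper's: the MAIS lower bound $\lambda_{q}(\mathcal{I}_{F}(p))\ge p+1$ from the independent set $[p+1]$, followed by an explicit scalar encoding matrix whose columns are drawn from the matroid representation $\boldsymbol{H}_{p}$ of Figure~\ref{fig:H-class-fano-nonfano}, with the characteristic-$p$ hypothesis entering only through Lemma~\ref{lem:matrix-F-(p+1)-p} (for users $i\in[p+2:2p+2]$ and, via Remark~\ref{rem: decoding-user-n}, user $n$). However, the content of this proposition lives entirely in the explicit column assignment for the $2p^{2}+2p$ auxiliary users and the per-user verification of \eqref{eq:dec-cond}, and it is exactly there that your proposal goes wrong in two concrete places.

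First, assigning every $z(l,j)$ with $j\ne l$ the column $\boldsymbol{H}_{p}^{\{l\}}=\boldsymbol{I}_{p+1}^{\{l\}}$ is fatal: user $i\in[p+1]$ has $Z_{i}\backslash\{z(i,i)\}\subseteq B_{i}$, so under your assignment $\boldsymbol{I}_{p+1}^{\{i\}}$ appears among the interfering columns of user $i$ and \eqref{eq:dec-cond} fails for $i$ (and, for $p\ge 3$, it also fails for the users $z(l,j)$ themselves, whose interfering sets contain other members of $Z_{l}\backslash\{z(l,l)\}$ carrying the identical column). The assignment that works is $\boldsymbol{H}_{p}^{\{z(l,j)\}}=\boldsymbol{I}_{p+1}^{\{j\}}$, indexed by the \emph{second} argument, so that each $Z_{l}$ replicates the basis $[p+1]$ and $\bigcup_{l}\big(Z_{l}\backslash\{z(l,i)\}\big)$ contributes nothing beyond $\mathrm{col}(\boldsymbol{H}^{[p+1]\backslash\{i\}})$. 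Second, for the users in $Z^{\prime}$ and $Z_{l}^{\prime\prime}$ you propose columns ``from the span of the matroid columns indexed by $\{n-l,n,l\}$''; this is backwards, since $B_{z^{\prime}(l)}$ and $B_{z^{\prime\prime}(l,j)}$ contain $\{n-l,n,z(l,l)\}$ and $\boldsymbol{H}^{\{z(l,l)\}}$ realizes column $l$, so any column in that span lies inside $\mathrm{col}(\boldsymbol{H}^{B_{z^{\prime}(l)}})$ and \eqref{eq:dec-cond} cannot hold. The paper instead uses cyclically shifted unit columns ($\boldsymbol{H}_{p}^{\{z^{\prime}(l)\}}=\boldsymbol{I}_{p+1}^{\{l+1\}}$, etc.) chosen deliberately \emph{outside} that span. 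Relatedly, Lemmas~\ref{lem:col(9)} and~\ref{lem:i-n-i-n} are converse tools used in Proposition~\ref{prop:I-fano-necessary} to extract circuit constraints that any valid encoder must satisfy; they do not help in verifying that a candidate matrix satisfies the decoding conditions, which must be checked directly, user by user, as the paper does.
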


\begin{proof}
First, it can be observed that set $[p+1]$ is an independent set of $\mathcal{I}_{F}(p)$. Thus, $\beta (\mathcal{I}_{F}(p))\geq p+1$.
\\
Now, consider the scalar encoding matrix $\boldsymbol{H}_{p}\in \mathbb{F}_{p}^{(p+1)\times m}$, characterized by its columns as follows (shown in Figure \ref{fig:H-class-fano-nonfano})
\begin{itemize}

    \item Users $u_{i}, i\in [n]$:
    \begin{align}
        \boldsymbol{H}_{p}^{\{i\}}=
    \left \{
    \begin{array}{ccc}
     &\boldsymbol{I}_{p+1}^{\{i\}},\ \ \ \ \ \ \ \ \ \ \ \ \ \ \ \ \  \ i\in [p+1], \ \ \ \ \ \ \ \ \ \ \ 
     \\
     \\
     &\sum_{j\in [p+1]\backslash \{n-i\}} \boldsymbol{I}_{p+1}^{\{j\}}, \ i\in [p+2:2p+2],
     \\
     \\
     &\sum_{j\in [p+1]} \boldsymbol{I}_{p+1}^{\{j\}}, \ \ \ \ \ \ i=n=2p+3, \ \ \ \ \ \ 
    \end{array}
    \right.
    \end{align} 
    
    \item $\boldsymbol{H}_{p}^{\{z(l,j)\}}=\boldsymbol{I}_{p+1}^{\{j\}}, \ l,j\in [p+1]$,

    \item
    \begin{align}
    \boldsymbol{H}_{p}^{\{z^{\prime}(l)\}}=
    \left \{
    \begin{array}{cc}
     &\boldsymbol{I}_{p+1}^{\{l+1\}},\ \ \ \  l\in [p], \ \ \ \ \ \ \ \ \ \ \
     \\
     \\
     &\boldsymbol{I}_{p+1}^{\{1\}},\ \ \ \ \ l = p + 1, \ \ \ \ \ \ 
    \end{array}
  \right.
    \end{align} 
    \item 

    \begin{align}
    \boldsymbol{H}_{p}^{\{z^{\prime\prime}(l,j)\}}=
    \left \{
    \begin{array}{cc}
     &\boldsymbol{I}_{p+1}^{\{l+j+1\}},\ \ l+j\leq p, \ \ \ \ \ \ \ \ \ \ \ \ \ \ \ \ \ \ \
     \\
     \\
     &\boldsymbol{I}_{p+1}^{\{l+j-p\}},\ \ p+1\leq l+j\leq 2p-1,\ \
    \end{array}
    \right.
    \end{align} 
\end{itemize}

\begin{figure*}
\centering
\subfloat[$\boldsymbol{H}_{p}^{[n=2p+3]}$]
{
$
\begin{blockarray}{cccccccccccc}
              & 
              \textcolor{blue}{{\tiny 1}}        & 
              \textcolor{blue}{{\tiny 2}}        & 
              \textcolor{blue}{{\tiny \hdots}}   & 
              \textcolor{blue}{{\tiny p}}        &
              \textcolor{blue}{{\tiny p+1}}      &
              \textcolor{blue}{{\tiny p+2}}      & 
              \textcolor{blue}{{\tiny p+3}}      & 
              \textcolor{blue}{{\tiny \hdots}}   & 
              \textcolor{blue}{{\tiny 2p+1}}     &
              \textcolor{blue}{{\tiny 2p+2}}     &
              \textcolor{blue}{{\tiny 2p+3}}  \\ 
\begin{block}{c[ccccc|ccccc|c]}
              \textcolor{blue}{{\tiny 1}} & 
              1 & 0 & \hdots & 0 & 0 &
              1 & 1 & \hdots & 1 & 0 &
              1
              \\
              \textcolor{blue}{{\tiny 2}} & 
              0 & 1 & \hdots & 0 & 0 &
              1 & 1 & \hdots & 0 & 1 &
              1
              \\
              \textcolor{blue}{{\tiny \vdots}} & 
              \vdots & \vdots & \ddots & \vdots & \vdots & 
              \vdots & \vdots & \ddots & \vdots & \vdots &
              \vdots &
              \\
              \textcolor{blue}{{\tiny p}} & 
              0 & 0 & \hdots & 1 & 0 &
              1 & 0 & \hdots & 1 & 1 &
              1 &
              \\
              \textcolor{blue}{{\tiny p+1}} & 
              0 & 0 & \hdots & 0 & 1 & 
              0 & 1 & \hdots & 1 & 1 & 
              1 
              \\
                \end{block}
\end{blockarray} \label{fig:H-[n=2p+3]}
$
}
\\
\subfloat[$\boldsymbol{H}_{p}^{Z_{l}}=\boldsymbol{I}_{p+1}$]
{
$
\begin{blockarray}{cccccc}
              & 
              \textcolor{blue}{{\tiny z(l,1)}}        & 
              \textcolor{blue}{{\tiny z(l,2)}}        & 
              \textcolor{blue}{{\tiny \hdots}}   & 
              \textcolor{blue}{{\tiny z(l,p)}}        &
              \textcolor{blue}{{\tiny z(l,p+1)}}     \\ 
\begin{block}{c[ccccc]}
              \textcolor{blue}{{\tiny 1}} & 
              1 & 0 & \hdots & 0 & 0
              \\
              \textcolor{blue}{{\tiny 2}} & 
              0 & 1 & \hdots & 0 & 0 
              \\
              \textcolor{blue}{{\tiny \vdots}} & 
              \vdots & \vdots & \ddots & \vdots & \vdots 
              \\
              \textcolor{blue}{{\tiny p}} & 
              0 & 0 & \hdots & 1 & 0 
              \\
              \textcolor{blue}{{\tiny p+1}} & 
              0 & 0 & \hdots & 0 & 1 
              \\
            \end{block}
\end{blockarray} \label{fig:H-Z_l}
$
}
\\
\subfloat[$\boldsymbol{H}_{p}^{Z^{\prime}}$]
{
$
\begin{blockarray}{ccccccc}
              & 
              \textcolor{blue}{{\tiny z^{\prime}(1)}}        & 
              \textcolor{blue}{{\tiny z^{\prime}(2)}}        & 
              \textcolor{blue}{{\tiny \hdots}}   & 
              \textcolor{blue}{{\tiny z^{\prime}(p-1)}}        &
              \textcolor{blue}{{\tiny z^{\prime}(p)}}        &
              \textcolor{blue}{{\tiny z^{\prime}(p+1)}}     \\ 
\begin{block}{c[cccccc]}
              \textcolor{blue}{{\tiny 1}} & 
              0 & 0 & \hdots & 0 & 0 &1
              \\
              \textcolor{blue}{{\tiny 2}} & 
              1 & 0 & \hdots & 0 & 0 & 0 
              \\
              \textcolor{blue}{{\tiny 3}} & 
              0 & 1 & \hdots & 0 & 0 & 0 
              \\
              \textcolor{blue}{{\tiny \vdots}} & 
              \vdots & \vdots & \ddots & \vdots & \vdots & \vdots 
              \\
              \textcolor{blue}{{\tiny p}} & 
              0 & 0 & \hdots & 1 & 0 & 0 
              \\
              \textcolor{blue}{{\tiny p+1}} & 
              0 & 0 & \hdots & 0 & 1 & 0 
              \\
             \end{block}
\end{blockarray} \label{fig:H-Z_prime}
$
}
\\
\subfloat[$\boldsymbol{H}_{p}^{Z_{1}^{\prime\prime}}$]
{
$
\begin{blockarray}{cccccc}
              & 
              \textcolor{blue}{{\tiny z^{\prime\prime}(1,1)}} 
              & 
              \textcolor{blue}{{\tiny z^{\prime\prime}(1,2)}} 
              & 
              \textcolor{blue}{{\tiny \hdots}}   
              & 
              \textcolor{blue}{{\tiny z^{\prime\prime}(1,p-2)}}
              &
              \textcolor{blue}{{\tiny z^{\prime\prime}(1,p-1)}}  \\ 
\begin{block}{c[ccccc]}
              \textcolor{blue}{{\tiny 1}} & 
              0 & 0 & \hdots & 0 & 0 
              \\
              \textcolor{blue}{{\tiny 2}} & 
              0 & 0 & \hdots & 0 & 0  
              \\
              \textcolor{blue}{{\tiny 3}} & 
              1 & 0 & \hdots & 0 & 0 
              \\
              \textcolor{blue}{{\tiny 4}} & 
              0 & 1 & \hdots & 0 & 0  
              \\
              \textcolor{blue}{{\tiny \vdots}} & 
              \vdots & \vdots & \ddots & \vdots & \vdots
              \\
              \textcolor{blue}{{\tiny p}} & 
              0 & 0 & \hdots & 1 & 0
              \\
              \textcolor{blue}{{\tiny p+1}} & 
              0 & 0 & \hdots & 0 & 1
              \\
             \end{block}
\end{blockarray} \label{fig:H-Z_l-prime-prime}
$
} 
\\
\subfloat
{
\vdots
}
\\
\subfloat[$\boldsymbol{H}_{p}^{Z_{p+1}^{\prime\prime}}$]
{
$
\begin{blockarray}{cccccc}
              & 
              \textcolor{blue}{{\tiny z^{\prime\prime}(p+1,1)}}        
              & 
              \textcolor{blue}{{\tiny z^{\prime\prime}(p+1,2)}}        
              & 
              \textcolor{blue}{{\tiny \hdots}}   
              &
              \textcolor{blue}{{\tiny z^{\prime\prime}(p+1,p-2)}} 
              &
              \textcolor{blue}{{\tiny z^{\prime\prime}(p+1,p-1)}}     \\ 
\begin{block}{c[ccccc]}
              \textcolor{blue}{{\tiny 1}} & 
              0 & 0 & \hdots & 0 & 0
              \\
              \textcolor{blue}{{\tiny 2}} & 
              1 & 0 & \hdots & 0 & 0
              \\
              \textcolor{blue}{{\tiny 3}} & 
              0 & 1 & \hdots & 0 & 0
              \\
              \textcolor{blue}{{\tiny \vdots}} & 
              \vdots & \vdots & \ddots & \vdots & \vdots
              \\
              \textcolor{blue}{{\tiny p-1}} & 
              0 & 0 & \hdots & 1 & 0
              \\
              \textcolor{blue}{{\tiny p}} & 
              0 & 0 & \hdots & 0 & 1
              \\
              \textcolor{blue}{{\tiny p+1}} & 
              0 & 0 & \hdots & 0 & 0
              \\
             \end{block}
\end{blockarray} \label{fig:H-Z_(p+1)-prime-prime}
$
}
    \caption{$\boldsymbol{H}_{p}\in \mathbb{F}_{q}^{(p+1)\times m}$: If $\mathbb{F}_{q}$ does have characteristic $p$, then $\boldsymbol{H}_{p}$ is an encoding matrix for the class $p$-Fano index coding instance $\mathcal{I}_{F}(p)$, and if $\mathbb{F}_{q}$ does have any characteristic other than characteristic $p$, then $\boldsymbol{H}_{p}$  is an encoding matrix for the class $p$-non-Fano index coding instance $\mathcal{I}_{nF}(p)$.}
    \label{fig:H-class-fano-nonfano}
\end{figure*}

Now, we prove that this encoding matrix can satisfy all the users.
\begin{itemize}
    \item For users $u_{i}, i\in [p+1]$ with the interfering message set $B_{i}=([p+1]\backslash \{i\})\cup \{n-i\}\cup_{j\in [p+1]} Z_{l}\backslash\{z(l,i)\}$, we have
    \begin{itemize}
        \item $\boldsymbol{H}_{p}^{[p+1]\backslash \{i\}}=\boldsymbol{I}_{p+1}^{[p+1]\backslash \{i\}}$.
        \item $\boldsymbol{H}_{p}^{\{n-i\}}=\sum_{j\in [p+1]\backslash \{i\}} \boldsymbol{I}_{p+1}^{\{i\}}$.
        \item $\boldsymbol{H}_{p}^{Z_{l}\backslash\{z(l,i)\}}=\boldsymbol{I}_{p+1}^{[p+1]\backslash \{i\}}$.
    \end{itemize}
    It can be checked that  
    \begin{align}
        \boldsymbol{H}_{p}^{B_{i}}\stackrel{\mathrm{rref}}{\equiv}\boldsymbol{I}_{p+1}^{[p+1]\backslash \{i\}}.
        \nonumber
    \end{align}
    Since $\boldsymbol{H}_{p}^{\{i\}}=\boldsymbol{I}_{p+1}^{\{i\}}$, the decoding condition \eqref{eq:dec-cond} is met, and user $u_{i}, i\in [p+1]$ can decode its requested message from the $i$-th transmission $y_{i}$.

    \item According to Lemma \ref{lem:matrix-F-(p+1)-p}, set $[p+2:2p+2]$ is a circuit set of $\boldsymbol{H}_{p}$, which means that each set $[p+2:2p+2]\backslash \{j\}, j\in [p+2:2p+2]$ is an independent set. Now, Since $B_{i}=[p+2:2p+2]\backslash \{i, i+1\}, i\in [p+2:2p+1]$ and $B_{2p+2}=\{p+2:2p+2\}\backslash \{1, 2p+2\}$, the decoding condition in \eqref{eq:dec-cond} will be met for all users $u_{i}, i\in [p+2:2p+2]$.
    \item For user $u_{n}$, according to Remark \ref{rem: decoding-user-n}, column $\boldsymbol{H}_{p}^{\{n\}}$ is linearly independent of the column space of its interfering messages set $\boldsymbol{H}_{p}^{B_{n}=[p+2:2p+2]}$, which satisfies the decoding condition \eqref{eq:dec-cond} for $i=n$.
    \item Since the interfering message set of users $u_{z(l,l)}, l\in [p+1]$ is empty and $\boldsymbol{H}_{p}^{z(l,l)} = \boldsymbol{I}_{p+1}^{\{l\}}$, each user $u_{z(l,l)}$ can decode its demanded message from the $l$-th transmission.
    \item For users $u_{z(l,j)}, l\in [p+1], j\in [p]\backslash \{l-1, l\}$ with the interfering message set $B_{z(l,j)}= (Z_{l}\backslash \{z(l,l), z(l,j), z(l, j + 1)\}) \cup \{n-l\}$, we have
    \begin{itemize}
        \item $\boldsymbol{H}_{p}^{Z_{l}\backslash \{z(l,l), z(l,j), z(l, j+ 1)\})}=\boldsymbol{I}_{p+1}^{[p+1]\backslash \{l, j, j+1\}}$,
        
        \item $\boldsymbol{H}_{p}^{\{n-l\}}=\sum_{i\in [p+1]\backslash \{l\}} \boldsymbol{I}_{p+1}^{\{i\}}$.
    \end{itemize}
    It can be verified that
    \begin{align}
        \boldsymbol{H}_{p}^{B_{z(l,j)}} &=
        \left [\begin{array}{c|c} 
         \boldsymbol{I}_{p+1}^{[p+1]\backslash \{l, j, j+1\}} & \sum_{i\in [p+1]\backslash \{l\}} \boldsymbol{I}_{p+1}^{\{i\}}
         \end{array}
         \right ].
         \nonumber
         \\
        &\stackrel{\mathrm{rref}}{\equiv}
        \left [\begin{array}{c|c} 
         \boldsymbol{I}_{p+1}^{[p+1]\backslash \{l, j, j+1\}} & \boldsymbol{I}_{p+1}^{\{j\}} + \boldsymbol{I}_{p+1}^{\{j+1\}}
         \end{array}
         \right ].
        \nonumber
    \end{align}

    \item For users $u_{z(l,l-1)}, l\in [p]\backslash \{1\}$ with the interfering message set $B_{z(l,j)}= (Z_{l}\backslash \{z(l,l), z(l,l-1), z(l, l + 2)\}) \cup \{n-l\}$, we have
    \begin{itemize}
        \item $\boldsymbol{H}_{p}^{Z_{l}\backslash \{z(l,l), z(l,l-1), z(l, l+ 2)\})}=\boldsymbol{I}_{p+1}^{[p+1]\backslash \{l, l-1, l+2\}}$,
        
        \item $\boldsymbol{H}_{p}^{\{n-l\}}=\sum_{i\in [p+1]\backslash \{l\}} \boldsymbol{I}_{p+1}^{\{i\}}$.
    \end{itemize}
    It can be verified that
    \begin{align}
        \boldsymbol{H}_{p}^{B_{z(l,l-1)}} &=
        \left [\begin{array}{c|c} 
         \boldsymbol{I}_{p+1}^{[p+1]\backslash \{l, l-1, l+2\}} & \sum_{i\in [p+1]\backslash \{l-1\}} \boldsymbol{I}_{p+1}^{\{i\}}
         \end{array}
         \right ].
         \nonumber
         \\
        &\stackrel{\mathrm{rref}}{\equiv}
        \left [\begin{array}{c|c} 
         \boldsymbol{I}_{p+1}^{[p+1]\backslash \{l, l-1, l+2\}} & \boldsymbol{I}_{p+1}^{\{l\}} + \boldsymbol{I}_{p+1}^{\{l+2\}}
         \end{array}
         \right ].
        \nonumber
    \end{align}
     Now, it can be seen that column $\boldsymbol{H}_{p}^{\{z(l,l-1)\}}=\boldsymbol{I}_{p+1}^{\{l-1\}}$ is linearly independent of the column space of $\boldsymbol{H}_{p}^{ B_{z(l,l-1)}}$. Thus, the decoding condition \eqref{eq:dec-cond} will be satisfied for users $u_{z(l,l-1)}, l\in [p]\backslash \{1\}$.

    \item For users $u_{z(l,l-1)}, l=p+1$ with the interfering message set $B_{z(l,l-1)}= (Z_{l}\backslash \{z(l,l), z(l,l-1), z(l, 1)\}) \cup \{n-l\}$, we have 
    \begin{itemize}
        \item $\boldsymbol{H}_{p}^{Z_{l}\backslash \{z(l,l), z(l,l-1), z(l, 1)\})}=\boldsymbol{I}_{p+1}^{[p+1]\backslash \{l, l-1, 1\}}$,
        
        \item $\boldsymbol{H}_{p}^{\{n-l\}}=\sum_{i\in [p+1]\backslash \{l\}} \boldsymbol{I}_{p+1}^{\{i\}}$.
    \end{itemize}
    It can be verified that
    \begin{align}
        \boldsymbol{H}_{p}^{B_{z(l,l-1)}} &=
        \left [\begin{array}{c|c} 
         \boldsymbol{I}_{p+1}^{[p+1]\backslash \{l, l-1, 1\}} & \sum_{i\in [p+1]\backslash \{l\}} \boldsymbol{I}_{p+1}^{\{i\}}
         \end{array}
         \right ]
         \nonumber
         \\
        &\stackrel{\mathrm{rref}}{\equiv}
        \left [\begin{array}{c|c} 
         \boldsymbol{I}_{p+1}^{[p+1]\backslash \{l, l-1, 1\}} & \boldsymbol{I}_{p+1}^{\{l-1\}} + \boldsymbol{I}_{p+1}^{\{1\}}
         \end{array}
         \right ].
        \nonumber
    \end{align}
     Now, it can be seen that column $\boldsymbol{H}_{p}^{\{z(l,l-1)\}}=\boldsymbol{I}_{p+1}^{\{l-1\}}$ is linearly independent of the column space of $\boldsymbol{H}_{p}^{ B_{z(l,l-1)}}$. Thus, the decoding condition \eqref{eq:dec-cond} will be satisfied for users $u_{z(l,l-1)}, l = p+1$.

     \item For users $u_{z(l,p+1)}, l=1$ with the interfering message set $B_{z(l,p+1)}= (Z_{l}\backslash \{z(l,l), z(l,p+1), z(l, 2)\}) \cup \{n-l\}$, we have 
    \begin{itemize}
        \item $\boldsymbol{H}_{p}^{Z_{l}\backslash \{z(l,l), z(l,p+1), z(l, 2)\})}=\boldsymbol{I}_{p+1}^{[p+1]\backslash \{l, p+1, 2\}}$,
        
        \item $\boldsymbol{H}_{p}^{\{n-l\}}=\sum_{i\in [p+1]\backslash \{l\}} \boldsymbol{I}_{p+1}^{\{i\}}$.
    \end{itemize}
    It can be verified that
    \begin{align}
        \boldsymbol{H}_{p}^{B_{z(l,p+1)}} &=
        \left [\begin{array}{c|c} 
         \boldsymbol{I}_{p+1}^{[p+1]\backslash \{l, p+1, 2\}} & \sum_{i\in [p+1]\backslash \{l\}} \boldsymbol{I}_{p+1}^{\{i\}}
         \end{array}
         \right ].
         \nonumber
         \\
        &\stackrel{\mathrm{rref}}{\equiv}
        \left [\begin{array}{c|c} 
         \boldsymbol{I}_{p+1}^{[p+1]\backslash \{l, l-1, 1\}} & \boldsymbol{I}_{p+1}^{\{p+1\}} + \boldsymbol{I}_{p+1}^{\{2\}}
         \end{array}
         \right ].
        \nonumber
    \end{align}
     Now, it can be seen that column $\boldsymbol{H}_{p}^{\{z(l,p+1)\}}=\boldsymbol{I}_{p+1}^{\{p+1\}}$ is linearly independent of the column space of $\boldsymbol{H}_{p}^{ B_{z(l,p+1)}}$. Thus, the decoding condition \eqref{eq:dec-cond} will be satisfied for users $u_{z(l,p+1)}, l = 1$.

     \item For users $u_{z(l,p+1)}, l=[p]\backslash \{1\}$ with the interfering message set $B_{z(l,p+1)}= (Z_{l}\backslash \{z(l,l), z(l,p+1), z(l, 1)\}) \cup \{n-l\}$, we have 
    \begin{itemize}
        \item $\boldsymbol{H}_{p}^{Z_{l}\backslash \{z(l,l), z(l,p+1), z(l, 1)\})}=\boldsymbol{I}_{p+1}^{[p+1]\backslash \{l, p+1, 1\}}$,
        
        \item $\boldsymbol{H}_{p}^{\{n-l\}}=\sum_{i\in [p+1]\backslash \{l\}} \boldsymbol{I}_{p+1}^{\{i\}}$.
    \end{itemize}
    It can be verified that
    \begin{align}
        \boldsymbol{H}_{p}^{B_{z(l,p+1)}} &=
        \left [\begin{array}{c|c} 
         \boldsymbol{I}_{p+1}^{[p+1]\backslash \{l, p+1, 1\}} & \sum_{i\in [p+1]\backslash \{l\}} \boldsymbol{I}_{p+1}^{\{i\}}
         \end{array}
         \right ].
         \nonumber
         \\
        &\stackrel{\mathrm{rref}}{\equiv}
        \left [\begin{array}{c|c} 
         \boldsymbol{I}_{p+1}^{[p+1]\backslash \{l, l-1, 1\}} & \boldsymbol{I}_{p+1}^{\{p+1\}} + \boldsymbol{I}_{p+1}^{\{1\}}
         \end{array}
         \right ].
        \nonumber
    \end{align}
     Now, it can be seen that column $\boldsymbol{H}_{p}^{\{z(l,p+1)\}}=\boldsymbol{I}_{p+1}^{\{p+1\}}$ is linearly independent of the column space of $\boldsymbol{H}_{p}^{ B_{z(l,p+1)}}$. Thus, the decoding condition \eqref{eq:dec-cond} will be satisfied for users $u_{z(l,p+1)}, l \in [p]\backslash \{1\}$.

     \item For users $u_{z^{\prime}(l)}, l\in [p+1]$ with the interfering message set $B_{z^{\prime}(l)}=\{z(l,l), n-l, n\}\cup Z_{l}^{\prime\prime}$, we have
     \begin{itemize}
         \item It can be verified that
         \begin{align}
         &\boldsymbol{H}_{p}^{\{z(l,l), n-l, n\}}
         \nonumber
         \\
         &=\left [\begin{array}{c|c|c} 
         \boldsymbol{I}_{p+1}^{\{l\}} & \sum_{j\in [p+1]\backslash \{l\}} \boldsymbol{I}_{p+1}^{\{j\}}  & \sum_{j\in [p+1]} \boldsymbol{I}_{p+1}^{\{j\}}
         \end{array}
         \right ]
         \nonumber
         \\
         &\stackrel{\mathrm{rref}}{\equiv}
         \left [\begin{array}{c|c} 
         \boldsymbol{I}_{p+1}^{\{l\}} & \sum_{j\in [p+1]\backslash \{l\}} \boldsymbol{I}_{p+1}^{\{j\}}
         \end{array}
         \right ],
         \end{align}
         \item It can be observed that 
         \begin{align}
        \boldsymbol{H}_{p}^{Z_{l}^{\prime\prime}}=
        \left \{
        \begin{array}{cc}
         &\boldsymbol{I}_{p+1}^{[p+1]\backslash \{l, l+1\}},\ \ \ \ \ \  l\in [p], \ \ \ \ \ \ \ \ \ \ \ \
         \\
         \\
         &\boldsymbol{I}_{p+1}^{[p+1]\backslash \{l, 1\}},\ \ \ \ \ \ \ l = p + 1, \ \ \ \ \ \ \ \
        \end{array}
      \right.
        \end{align} 
     \end{itemize}
     It can be seen that column 
    \begin{align}
    \boldsymbol{H}_{p}^{\{z^{\prime}(l)\}}=
    \left \{
    \begin{array}{cc}
     &\boldsymbol{I}_{p+1}^{\{l+1\}},\ \ \ \  l\in [p],\ \ \ \ \ \ \ \ \
     \\
     \\
     &\boldsymbol{I}_{p+1}^{\{1\}},\ \ \ \ \ \ \ l = p + 1, \ \ \ \ \ \ 
    \end{array}
    \right.
    \end{align} 
     will be linearly independent of the column space of $\boldsymbol{H}_{p}^{B_{z^{\prime}(l)}}$. Thus, the decoding condition \eqref{eq:dec-cond} will be satisfied for users $u_{z^{\prime}(l)}, l\in [p+1]$.
     
     \item For users $u_{z^{\prime\prime}(l,j)}, l\in [p+1], j\in [p-2]$ with the interfering message set $B_{z^{\prime\prime}(l,j)}= \{z(l,l), n-l, n\}\cup \big (Z_{l}^{\prime\prime}\backslash \{z^{\prime\prime}(l,j)\}\big )$, we have
     \begin{itemize}
          \item It can be verified that
         \begin{align}
         &\boldsymbol{H}_{p}^{\{z(l,l), n-l, n\}}
         \nonumber
         \\
         &=\left [\begin{array}{c|c|c} 
         \boldsymbol{I}_{p+1}^{\{l\}} & \sum_{j\in [p+1]\backslash \{l\}} \boldsymbol{I}_{p+1}^{\{j\}}  & \sum_{j\in [p+1]} \boldsymbol{I}_{p+1}^{\{j\}}
         \end{array}
         \right ]
         \nonumber
         \\
         &\stackrel{\mathrm{rref}}{\equiv}
         \left [\begin{array}{c|c} 
         \boldsymbol{I}_{p+1}^{\{l\}} & \sum_{j\in [p+1]\backslash \{l\}} \boldsymbol{I}_{p+1}^{\{j\}}
         \end{array}
         \right ],
         \end{align}
         
         \item 
         
         $\boldsymbol{H}_{p}^{Z_{l}^{\prime\prime}\backslash \{z^{\prime\prime}(l,j)\}}=\boldsymbol{I}_{p+1}^{[p+1]\backslash \{l, l+1, l+j+1\}}$.
     \end{itemize}
      It can be seen that column

               \begin{align}
            \boldsymbol{H}_{p}^{\{z^{\prime\prime}(l,j)\}}=
            \left \{
            \begin{array}{cc}
             &\boldsymbol{I}_{p+1}^{\{l+j+1\}},\ \ l+j\leq p, \ \ \ \ \ \ \ \ \ \ \ \ \ \ \ \ \ \
             \\
             \\
             &\boldsymbol{I}_{p+1}^{\{l+j-p\}},\ \ p+1\leq l+j\leq 2p-1,\ \
            \end{array}
            \right.
        \end{align} 
      
      will be linearly independent of the column space of $\boldsymbol{H}_{p}^{B_{z^{\prime\prime}(l,j)}}$. Thus, the decoding condition \eqref{eq:dec-cond} will be satisfied for users $u_{z^{\prime\prime}(l,j)}, l\in [p+1], j\in [p-2]$.
\end{itemize}
\end{proof}

\begin{exmp}
Figure \ref{fig:H-p-2} depicts the encoding matrix $\boldsymbol{H}_{p=2}\in \mathbb{F}_{q}^{3\times 19}$ which is optimal for the class $2$-Fano index coding instance $\mathcal{I}_{nF}(2)$ if field $\mathbb{F}_{q}$ does have characteristic two.
    \begin{figure*}
\centering
\begin{blockarray}{cccccccccccccccccccc}
              & 
              \textcolor{blue}{{\tiny 1}}\hspace{-1ex} & 
              \textcolor{blue}{{\tiny 2}}\hspace{-1ex} & 
              \textcolor{blue}{{\tiny 3}}\hspace{-1ex} & 
              \textcolor{blue}{{\tiny 4}}\hspace{-1ex} & 
              \textcolor{blue}{{\tiny 5}}\hspace{-1ex} & 
              \textcolor{blue}{{\tiny 6}}\hspace{-1ex} &
              \textcolor{blue}{{\tiny 7}}\hspace{-1ex} &
              \textcolor{blue}{{\tiny 8}}\hspace{-1ex} & 
              \textcolor{blue}{{\tiny 9}}\hspace{-1ex} & 
              \textcolor{blue}{{\tiny 10}}\hspace{-1ex} &
              \textcolor{blue}{{\tiny 11}}\hspace{-1ex} &
              \textcolor{blue}{{\tiny 12}}\hspace{-1ex} &
              \textcolor{blue}{{\tiny 13}}\hspace{-1ex} & 
              \textcolor{blue}{{\tiny 14}}\hspace{-1ex} &
              \textcolor{blue}{{\tiny 15}}\hspace{-1ex} &
              \textcolor{blue}{{\tiny 16}}\hspace{-1ex} &
              \textcolor{blue}{{\tiny 17}}\hspace{-1ex} & 
              \textcolor{blue}{{\tiny 18}}\hspace{-1ex} &
              \textcolor{blue}{{\tiny 19}}
            \\
\begin{block}{c[ccc|ccc|c|ccc|ccc|ccc|ccc]}
               \textcolor{blue}{{\tiny 1}} & 
               1\hspace{-1ex} & 0\hspace{-1ex} & 0 & 
               1\hspace{-1ex} & 1\hspace{-1ex} & 0 & 
               1 & 
               1\hspace{-1ex} & 0\hspace{-1ex} & 0 & 
               1\hspace{-1ex} & 0\hspace{-1ex} & 0 &
               1\hspace{-1ex} & 0\hspace{-1ex} & 0 &
               0\hspace{-1ex} & 0\hspace{-1ex} & 1
               \\
               \textcolor{blue}{{\tiny 2}} & 
               0\hspace{-1ex} & 1\hspace{-1ex} & 0 & 
               1\hspace{-1ex} & 0\hspace{-1ex} & 1 &
               1 &
               0\hspace{-1ex} & 1\hspace{-1ex} & 0 &
               0\hspace{-1ex} & 1\hspace{-1ex} & 0 &
               0\hspace{-1ex} & 1\hspace{-1ex} & 0 &
               1\hspace{-1ex} & 0\hspace{-1ex} & 0 
               \\
               \textcolor{blue}{{\tiny 3}} & 
               0\hspace{-1ex} & 0\hspace{-1ex} & 1 & 
               0\hspace{-1ex} & 1\hspace{-1ex} & 1 &
               1 &
               0\hspace{-1ex} & 0\hspace{-1ex} & 1  &
               0\hspace{-1ex} & 0\hspace{-1ex} & 1  &
               0\hspace{-1ex} & 0\hspace{-1ex} & 1  &
               0\hspace{-1ex} & 1\hspace{-1ex} & 0 
               \\
             \end{block}
\end{blockarray}

\caption{$\boldsymbol{H}_{p=2}\in \mathbb{F}_{q}^{3\times 19}$: If $\mathbb{F}_{q}$ does have characteristic two (such as $GF(2)$), then $\boldsymbol{H}_{2}$ is an encoding matrix for the index coding instance $\mathcal{I}_{F}(2)$, and if $\mathbb{F}_{q}$ does have odd characteristic any characteristic other than characteristic two such as $GF(3)$), then $\boldsymbol{H}_{2}$ is an encoding matrix for the index coding instance $\mathcal{I}_{nF}(2)$.}
\label{fig:H-p-2}
\end{figure*}

\end{exmp}

\begin{exmp}
Figure \ref{fig:H-p-3} depicts the encoding matrix $\boldsymbol{H}_{p=3}\in \mathbb{F}_{q}^{4\times 33}$ which is optimal for the class $3$-Fano index coding instance $\mathcal{I}_{nF}(3)$ if field $\mathbb{F}_{q}$ does have characteristic three.
    \begin{figure*}
\centering
\begin{blockarray}{cccccccccccccccccccccccccccccccccc}
              & 
              \textcolor{blue}{{\tiny 1}}\hspace{-1ex} & 
              \textcolor{blue}{{\tiny 2}}\hspace{-1ex} & 
              \textcolor{blue}{{\tiny 3}}\hspace{-1ex} & 
              \textcolor{blue}{{\tiny 4}}\hspace{-1ex} & 
              \textcolor{blue}{{\tiny 5}}\hspace{-1ex} & 
              \textcolor{blue}{{\tiny 6}}\hspace{-1ex} &
              \textcolor{blue}{{\tiny 7}}\hspace{-1ex} &
              \textcolor{blue}{{\tiny 8}}\hspace{-1ex} & 
              \textcolor{blue}{{\tiny 9}}\hspace{-1ex} & 
              \textcolor{blue}{{\tiny 10}}\hspace{-1ex} &
              \textcolor{blue}{{\tiny 11}}\hspace{-1ex} &
              \textcolor{blue}{{\tiny 12}}\hspace{-1ex} &
              \textcolor{blue}{{\tiny 13}}\hspace{-1ex} & 
              \textcolor{blue}{{\tiny 14}}\hspace{-1ex} &
              \textcolor{blue}{{\tiny 15}}\hspace{-1ex} &
              \textcolor{blue}{{\tiny 16}}\hspace{-1ex} &
              \textcolor{blue}{{\tiny 17}}\hspace{-1ex} & 
              \textcolor{blue}{{\tiny 18}}\hspace{-1ex} &
              \textcolor{blue}{{\tiny 19}}\hspace{-1ex} &
              \textcolor{blue}{{\tiny 20}}\hspace{-1ex} &
              \textcolor{blue}{{\tiny 21}}\hspace{-1ex} & 
              \textcolor{blue}{{\tiny 22}}\hspace{-1ex} &
              \textcolor{blue}{{\tiny 23}}\hspace{-1ex} &
              \textcolor{blue}{{\tiny 24}}\hspace{-1ex} &
              \textcolor{blue}{{\tiny 25}}\hspace{-1ex} & 
              \textcolor{blue}{{\tiny 26}}\hspace{-1ex} &
              \textcolor{blue}{{\tiny 27}}\hspace{-1ex} &
              \textcolor{blue}{{\tiny 28}}\hspace{-1ex} &
              \textcolor{blue}{{\tiny 29}}\hspace{-1ex} &  
              \textcolor{blue}{{\tiny 30}}\hspace{-1ex} &
              \textcolor{blue}{{\tiny 31}}\hspace{-1ex} &
              \textcolor{blue}{{\tiny 32}}\hspace{-1ex} &
              \textcolor{blue}{{\tiny 33}} 
            \\
\begin{block}{c[cccc|cccc|c|cccc|cccc|cccc|cccc|cccc|cccc]}
               \textcolor{blue}{{\tiny 1}} & 
               1\hspace{-1ex} & 0\hspace{-1ex} & 0\hspace{-1ex} & 0 & 
               1\hspace{-1ex} & 1\hspace{-1ex} & 1\hspace{-1ex} & 0 & 
               1 & 
               1\hspace{-1ex} & 0\hspace{-1ex} & 0\hspace{-1ex} & 0 & 
               1\hspace{-1ex} & 0\hspace{-1ex} & 0\hspace{-1ex} & 0 &
               1\hspace{-1ex} & 0\hspace{-1ex} & 0\hspace{-1ex} & 0 &
               1\hspace{-1ex} & 0\hspace{-1ex} & 0\hspace{-1ex} & 0 &
               0\hspace{-1ex} & 0\hspace{-1ex} & 0\hspace{-1ex} & 1 &
               0\hspace{-1ex} & 0\hspace{-1ex} & 1\hspace{-1ex} & 0 
               \\
               \textcolor{blue}{{\tiny 2}} & 
               0\hspace{-1ex} & 1\hspace{-1ex} & 0\hspace{-1ex} & 0 & 
               1\hspace{-1ex} & 1\hspace{-1ex} & 0\hspace{-1ex} & 1 &
               1 &
               0\hspace{-1ex} & 1\hspace{-1ex} & 0\hspace{-1ex} & 0 &
               0\hspace{-1ex} & 1\hspace{-1ex} & 0\hspace{-1ex} & 0 &
               0\hspace{-1ex} & 1\hspace{-1ex} & 0\hspace{-1ex} & 0 &
               0\hspace{-1ex} & 1\hspace{-1ex} & 0\hspace{-1ex} & 0 &
               1\hspace{-1ex} & 0\hspace{-1ex} & 0\hspace{-1ex} & 0 &
               0\hspace{-1ex} & 0\hspace{-1ex} & 0\hspace{-1ex} & 1 
               \\
               \textcolor{blue}{{\tiny 3}} & 
               0\hspace{-1ex} & 0\hspace{-1ex} & 1\hspace{-1ex} & 0 & 
               1\hspace{-1ex} & 0\hspace{-1ex} & 1\hspace{-1ex} & 1 &
               1 &
               0\hspace{-1ex} & 0\hspace{-1ex} & 1\hspace{-1ex} & 0  &
               0\hspace{-1ex} & 0\hspace{-1ex} & 1\hspace{-1ex} & 0  &
               0\hspace{-1ex} & 0\hspace{-1ex} & 1\hspace{-1ex} & 0 &
               0\hspace{-1ex} & 0\hspace{-1ex} & 1\hspace{-1ex} & 0 &
               0\hspace{-1ex} & 1\hspace{-1ex} & 0\hspace{-1ex} & 0 &
               1\hspace{-1ex} & 0\hspace{-1ex} & 0\hspace{-1ex} & 0 
               \\
               \textcolor{blue}{{\tiny 4}} & 
               0\hspace{-1ex} & 0\hspace{-1ex} & 0\hspace{-1ex} & 1 & 
               0\hspace{-1ex} & 1\hspace{-1ex} & 1\hspace{-1ex} & 1 & 
               1 &
               0\hspace{-1ex} & 0\hspace{-1ex} & 0\hspace{-1ex} & 1 &
               0\hspace{-1ex} & 0\hspace{-1ex} & 0\hspace{-1ex} & 1 &
               0\hspace{-1ex} & 0\hspace{-1ex} & 0\hspace{-1ex} & 1 &
               0\hspace{-1ex} & 0\hspace{-1ex} & 0\hspace{-1ex} & 1 &
               0\hspace{-1ex} & 0\hspace{-1ex} & 1\hspace{-1ex} & 0 &
               0\hspace{-1ex} & 1\hspace{-1ex} & 0\hspace{-1ex} & 0 
               \\
             \end{block}
\end{blockarray}

\caption{$\boldsymbol{H}_{p=3}\in \mathbb{F}_{q}^{4\times 33}$: If $\mathbb{F}_{q}$ does have characteristic three (such as $GF(3)$), then $\boldsymbol{H}_{3}$ is an encoding matrix for the index coding instance $\mathcal{I}_{F}(3)$, and if $\mathbb{F}_{q}$ does have any characteristic other than characteristic three (such as $GF(2)$), then $\boldsymbol{H}_{3}$ is an encoding matrix for the index coding instance $\mathcal{I}_{nF}(3)$.}
\label{fig:H-p-3}
\end{figure*}

\end{exmp}

\begin{prop} \label{prop:I-fano-necessary}
Matrix $\boldsymbol{H}\in \mathbb{F}_{q}^{(p+1)t\times mt}$ is an encoding matrix for index coding instance $\mathcal{I}_{F}(p)$ only if its submatrix $\boldsymbol{H}^{[p+1]}$ is a linear representation of matroid instance $\mathcal{N}_{F}(p)$.
\end{prop}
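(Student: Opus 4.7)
The plan is to verify, one ingredient at a time, the defining data of a linear representation of $\mathcal{N}_{F}(p)$ by translating the decoding conditions of $\mathcal{I}_{F}(p)$ through the reduction Lemmas \ref{lem:MAIS1}--\ref{lem:i-n-i-n}. The three auxiliary user families $Z_{l}$, $Z^{\prime}$ and $Z_{l}^{\prime\prime}$ serve purely as gadgets whose role is to impose specific column-space relations on the first $n$ columns of $\boldsymbol{H}$.

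First I would dispose of the basis condition. Because $[p+1]\setminus\{i\}\subseteq B_{i}$ for every $i\in[p+1]$, the set $[p+1]$ is an independent set of $\mathcal{I}_{F}(p)$, so Lemma \ref{lem:MAIS1} yields $\mathrm{rank}(\boldsymbol{H}^{[p+1]})=(p+1)t$; since $\boldsymbol{H}$ has exactly $(p+1)t$ rows, $[p+1]$ is a basis of $\boldsymbol{H}$ and the whole row space is spanned. For the circuits $N_{l}=([p+1]\setminus\{l\})\cup\{n-l\}$ with $l\in[p+1]$, I would read the decoding constraint of $u_{l}$: since $B_{l}\supseteq([p+1]\setminus\{l\})\cup\{n-l\}$ and $\mathrm{rank}(\boldsymbol{H}^{\{l\}\cup B_{l}})=\mathrm{rank}(\boldsymbol{H}^{B_{l}})+t\leq(p+1)t$, one obtains $\mathrm{rank}(\boldsymbol{H}^{B_{l}})=pt$. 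Combined with $\mathrm{rank}(\boldsymbol{H}^{[p+1]\setminus\{l\}})=pt$ from the basis step, this forces $\mathrm{col}(\boldsymbol{H}^{B_{l}})=\mathrm{col}(\boldsymbol{H}^{[p+1]\setminus\{l\}})$, hence $\mathrm{col}(\boldsymbol{H}^{\{n-l\}})\subseteq\mathrm{col}(\boldsymbol{H}^{[p+1]\setminus\{l\}})$, and $N_{l}$ becomes a dependent set of rank $pt$ in $\boldsymbol{H}$.

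To handle the remaining circuits I would first apply Lemma \ref{lem:MAIS2} with $M=[p+1]$ and $j=z(l,l)$ (noting that $z(l,l)\in B_{i}$ for all $i\in[p+1]\setminus\{l\}$) to obtain $\mathrm{col}(\boldsymbol{H}^{\{z(l,l)\}})=\mathrm{col}(\boldsymbol{H}^{\{l\}})$. Next, for each $l\in[p+1]$ the set $\{z^{\prime}(l)\}\cup Z_{l}^{\prime\prime}$ is an independent (hence acyclic) set of $\mathcal{I}_{F}(p)$, while the triple $\{n-l,n,z(l,l)\}$ is contained in $B_{i}$ for every $i\in\{z^{\prime}(l)\}\cup Z_{l}^{\prime\prime}$. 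Applying Lemma \ref{lem:i-n-i-n} with $M_{1}=\{z^{\prime}(l)\}\cup Z_{l}^{\prime\prime}$ (size $p-1$) and $M_{2}=\{n-l,n,z(l,l)\}$ yields $\mathrm{rank}(\boldsymbol{H}^{\{n-l,n,z(l,l)\}})\leq 2t$. Since $\boldsymbol{H}^{\{n-l\}}$ lies in $\mathrm{col}(\boldsymbol{H}^{[p+1]\setminus\{l\}})$ and $\boldsymbol{H}^{\{z(l,l)\}}$ lies in the transverse subspace $\mathrm{col}(\boldsymbol{H}^{\{l\}})$, the pair $\{n-l,z(l,l)\}$ is already independent in $\boldsymbol{H}$, so the above rank bound forces $\mathrm{col}(\boldsymbol{H}^{\{n\}})\subseteq\mathrm{col}(\boldsymbol{H}^{\{n-l,z(l,l)\}})=\mathrm{col}(\boldsymbol{H}^{\{l,n-l\}})$. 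Lemma \ref{lem:col(9)} then promotes this to the circuit condition for each $\{i,n-i,n\}$, $i\in[p+2:2p+2]$. Finally, $[p+2:2p+2]$ is a minimal cyclic set of $\mathcal{I}_{F}(p)$ (from the explicit cycle $p+2\to\cdots\to 2p+2\to p+2$): $u_{n}$'s decoding gives $\mathrm{rank}(\boldsymbol{H}^{[p+2:2p+2]})\leq pt$, while Lemma \ref{lem:MAIS1} applied to any acyclic broken-cycle subset of size $p$ delivers the matching lower bound, so Lemma \ref{lem:min-cyc} closes the case for $N_{n}$.

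The main obstacle I expect is the verification of \emph{minimality} of each claimed circuit: establishing a dependency alone is not enough, one must show that every coefficient matrix appearing in \eqref{eq:circuit-set-M-invertible} is invertible. The natural route is a bootstrapping argument across the three families of circuits above: once $\{l,n-l,n\}$ and $\{i,n-i,n\}$ are both established as genuine circuits (with invertible coefficients on $\boldsymbol{H}^{\{l\}}$ and $\boldsymbol{H}^{\{i\}}$, respectively), matching the two resulting expansions of $\boldsymbol{H}^{\{n\}}$ against the basis $\boldsymbol{H}^{[p+1]}$ forces the off-diagonal coefficient $\boldsymbol{M}_{n-l,i}$ in the expansion of $\boldsymbol{H}^{\{n-l\}}$ to be invertible as well. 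Carrying out this coupling uniformly across all $l\in[p+1]$ and $i\in[p+1]\setminus\{l\}$, while keeping the vector ($t>1$) rank bookkeeping tight across the several interlocking gadgets $Z_{l}$, $Z^{\prime}$ and $Z_{l}^{\prime\prime}$, will be the main technical burden of the full proof.
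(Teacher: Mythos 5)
Your skeleton coincides with the paper's on most points: the basis set $[p+1]$ via Lemma \ref{lem:MAIS1}, the column identifications $\mathrm{col}(\boldsymbol{H}^{\{z(l,j)\}})=\mathrm{col}(\boldsymbol{H}^{\{j\}})$ via Lemma \ref{lem:MAIS2}, the bound $\mathrm{rank}(\boldsymbol{H}^{\{z(l,l),n-l,n\}})\le 2t$ via Lemma \ref{lem:i-n-i-n} (your reading $M_{1}=\{z^{\prime}(l)\}\cup Z_{l}^{\prime\prime}$ of size $p-1$ is the correct one), and the treatment of $N_{n}=[p+2:2p+2]$ via Lemma \ref{lem:min-cyc}. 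The genuine gap is in your handling of the circuits $N_{l}=([p+1]\setminus\{l\})\cup\{n-l\}$, $l\in[p+1]$, and it propagates. From the decoding condition of $u_{l}$ you obtain only that $N_{l}$ is a \emph{dependent} set of rank $pt$; being a circuit additionally requires every coefficient in $\boldsymbol{H}^{\{n-l\}}=\sum_{i\ne l}\boldsymbol{H}^{\{i\}}\boldsymbol{M}_{n-l,i}$ to be invertible (equivalently, every $p$-subset of $N_{l}$ to have rank $pt$), and nothing in $B_{l}$ alone forces this. You acknowledge the issue but defer it to a bootstrapping argument that is circular as stated: you propose to obtain the circuits $\{i,n-i,n\}$ from Lemma \ref{lem:col(9)} and then use them to recover invertibility of the $\boldsymbol{M}_{n-l,i}$, yet hypothesis (ii) of Lemma \ref{lem:col(9)} is precisely that each $([p+1]\setminus\{i\})\cup\{n-i\}$ is already a circuit, and its proof genuinely uses the invertibility of those coefficients when it argues that $\boldsymbol{H}^{\{n\}}-\boldsymbol{H}^{\{l\}}\boldsymbol{M}_{n-i,l}^{\prime}$ being proportional to $\boldsymbol{H}^{\{n-l\}}$ forces the remaining coefficients to be invertible. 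So the plan cannot be executed in the order you propose.

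The missing idea is that the $Z_{l}$ families are not merely column-identification gadgets. Their interference sets $B_{z(l,j)}$ are engineered so that $Z_{l}\setminus\{z(l,l)\}$ is a \emph{minimal cyclic set} of $\mathcal{I}_{F}(p)$ with $n-l\in B_{z(l,j)}$ for every $j$. Since $\boldsymbol{H}^{Z_{l}\setminus\{z(l,l)\}}=\boldsymbol{H}^{[p+1]\setminus\{l\}}$ is an independent set of $\boldsymbol{H}$ containing $\mathrm{col}(\boldsymbol{H}^{\{n-l\}})$, all four hypotheses of Lemma \ref{lem:independent-cycle} hold, and that lemma delivers directly that $([p+1]\setminus\{l\})\cup\{n-l\}$ is a genuine circuit of $\boldsymbol{H}$, invertible coefficients included. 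With that in hand your subsequent application of Lemma \ref{lem:col(9)} becomes legitimate, the rest of your argument goes through, and the final bootstrapping paragraph becomes unnecessary.
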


\begin{proof}
We prove $N_{0}=[p+1]$ is a basis set of $\boldsymbol{H}$, and each $N_{i}, i\in [p+1]$ in \eqref{eq:Fano-circuit-N_i} is a circuit set of $\boldsymbol{H}$. The proof is described as follows.

\begin{itemize}
    \item First, since $\beta_{\text{MAIS}}(\mathcal{I}_{F}(p)) = p+1$, we must have $\mathrm{rank}(\boldsymbol{H})=(p+1)t$. Now, from $B_{i}, i\in [p+1]$ in \eqref{eq:Fano-instane-b1n}, it can be seen that set $[p+1]$ is an independent set of $\mathcal{I}_{F}(p)$, so based on Lemma \ref{lem:MAIS1}, set $[p+1]$ is an independent set of $\boldsymbol{H}$. Since $\mathrm{rank}(\boldsymbol{H})=(p+1)t$, set $N_{0}=[p+1]$ will be a basis set of $\boldsymbol{H}$. Now, in order to have $\mathrm{rank} (\boldsymbol{H})=(p+1)t$ for all $j\in [m]\backslash [p+1]$, we must have $\mathrm{col}(\boldsymbol{H}^{\{j\}})\subseteq \mathrm{col}(\boldsymbol{H}^{[p+1]})$.
    \item According to Lemma \ref{lem:MAIS2}, from $B_{i}, i\in [p+1]$, it can be seen that for each $z(l,j), l\in [p+1], j\in [p+1]$, we have
    \begin{align}
        z(l,j)\in B_{i}, \forall i\in [p+1]\backslash \{j\}\rightarrow \boldsymbol{H}^{\{z(l,j)\}}=\boldsymbol{H}^{\{j\}}. 
        \label{eq:z(l,j)=j}
    \end{align}
    Thus,
    \begin{align}
        \boldsymbol{H}^{Z_{l}\backslash \{z(l,j)\}}=\boldsymbol{H}^{[p+1]\backslash \{j\}}, \ \forall j,l\in [p+1],
        \label{eq:I-fano-to-N-fano-2}
    \end{align}
    which means that each set $Z_{l}\backslash \{z(l,j)\}$ is an independent set of $\boldsymbol{H}$.
    \item To have $\mathrm{rank}(\boldsymbol{H})=(p+1)t$, we must have $\mathrm{rank}(\boldsymbol{H}^{B_{i}})=pt, i\in [m]$. Since $[p+1]$ is a basis set, from $B_{j}, j\in [p+1]$, one must have
    \begin{align}
        \mathrm{col}(\boldsymbol{H}^{\{n-j\}})\subseteq \mathrm{col}(\boldsymbol{H}^{[p+1]\backslash \{j\}})\stackrel{\eqref{eq:I-fano-to-N-fano-2}}{=}\mathrm{col}(\boldsymbol{H}^{Z_{l}\backslash \{z(l,j)\}}),
        \label{eq:I-fano-to-N-fano-3}
    \end{align}
    
    \item From $B_{i}, i\in Z_{l}, l\in [p+1]$, it can be checked that 
    each set $Z_{l}\backslash \{z(l,l)\}$ is a minimal cyclic set of $\mathcal{I}_{F}(p)$. Moreover, for each $l\in [p+1]$, we have $n-l \in B_{z(l,j)}, \forall j\in [p+1]$.
    
    \item Now, it can be seen that all four conditions in Lemma \ref{lem:independent-cycle} are met for each set $M=Z_{l}\backslash \{z(l,l)\}$ and $j=n-l$. Thus, based on Lemma \ref{lem:independent-cycle}, each set $(Z_{l}\backslash \{z(l,l)\})\cup \{n-l\}, l\in [p+1]$ must be a circuit set of $\boldsymbol{H}$.
    
    \item Let $M_{1}= Z^{\prime}\cup Z^{\prime\prime}$ and $M_{2}= \{z(l,l), n-l, n\}$. It can be seen that set $M_{1}$ is an acyclic set of $\mathcal{I}_{F}(p)$ and $M_{2}\subset B_{j}$ for all $j\in M_{1}$.  According to Lemma \ref{lem:i-n-i-n}, we must have
    \begin{align}
        \mathrm{rank}(\boldsymbol{H}^{\{z(l,l), n-l, n\}})\leq 2t,
    \end{align}
    which according to \eqref{eq:z(l,j)=j} will lead to
        \begin{align}
        \mathrm{rank}(\boldsymbol{H}^{\{l, n-l, n\}})\leq 2t.
    \end{align}
    Thus, each set $\{l, n-l, n\}, l\in [p+1]$ is a circuit set of $\boldsymbol{H}$.

    \item Finally, from $B_{n-j}, j\in [p+1]$, it can be observed that set $\{(n-j), j\in [p+1]\}$ is a minimal cyclic set of $\mathcal{I}_{F}(p)$. Furthermore, from $B_{n}$, one must have $\mathrm{rank}( \boldsymbol{H}^{B_{n}})=\mathrm{rank}( \boldsymbol{H}^{\{(n-j), j\in [p+1]\}})=pt$. Thus, based on Lemma \ref{lem:MAIS2}, set $\{(n-j), j\in [p+1]\}$ must be a circuit set of $\boldsymbol{H}$. This completes the proof.

\end{itemize}
\end{proof}

\subsection{The Class $p$-non-Fano Index Coding Instance} \label{sub:index-coding-non-fano}

\begin{defn}[$\mathcal{I}_{nF}(p)$: Class $p$-non-Fano Index Coding Instance] For the class of index coding instances $\mathcal{I}_{nF}(p)=\{B_{i}, i\in [m]\}$ with $m=2p^{2}+4p+3$, the interfering message sets $B_{i}, i\in [m]\backslash (\{n-j, j\in [p+1]\} \cup \{n\}), n=2p+3$ are exactly the same as the ones in $\mathcal{I}_{F}(p)$, and the interfering message sets $B_{n-j}, j\in [p+1]$ are as follows

\begin{align}
    &B_{n-j}=\{n-l, l\in [p+1]\}\backslash \{j\}, j\in [p+1],
    \nonumber
    \\
    &B_{n}=\emptyset.
    \label{eq:Bi-non-Fano-[p+2:2p+2]}
\end{align}
\end{defn}

\begin{thm}
$\lambda_{q}(\mathcal{I}_{nF}(p))=\beta (\mathcal{I}_{nF}(p))=p+1$ iff the field $\mathbb{F}_{q}$ does have any characteristic other than characteristic $p$. In other words, the necessary and sufficient condition for linear index coding to be optimal for $\mathcal{I}_{nF}(p)$ is that the field $\mathbb{F}_{q}$ has any characteristic other than characteristic $p$.
\end{thm}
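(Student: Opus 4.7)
The plan is to decompose the statement into two propositions that mirror Propositions~\ref{prop:Fano-necessary-index} and~\ref{prop:I-fano-necessary}. The \emph{sufficiency} direction will exhibit a scalar linear encoder achieving rate $p+1$ over any field $\mathbb{F}_{q}$ whose characteristic differs from $p$; the \emph{necessity} direction will show that any linear encoder $\boldsymbol{H}\in\mathbb{F}_{q}^{(p+1)t\times mt}$ satisfying every user's decoding condition~\eqref{eq:dec-cond} induces, via its restriction $\boldsymbol{H}^{[n]}$, a $(t)$-linear representation of $\mathcal{N}_{nF}(p)$, after which Theorem~\ref{thm:p-non-Fano-matroid} forces the characteristic of $\mathbb{F}_{q}$ to be distinct from $p$. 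The lower bound $\beta(\mathcal{I}_{nF}(p))\geq p+1$ is already available from the MAIS bound since $[p+1]$ is an independent set of $\mathcal{I}_{nF}(p)$.

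For sufficiency I would reuse the scalar matrix $\boldsymbol{H}_{p}\in\mathbb{F}_{q}^{(p+1)\times m}$ depicted in Figure~\ref{fig:H-class-fano-nonfano}. The decoding verifications for user groups $u_{i}, i\in[p+1]$, $u_{z(l,j)}$, $u_{z^{\prime}(l)}$, and $u_{z^{\prime\prime}(l,j)}$ are word-for-word identical to those in the proof of Proposition~\ref{prop:Fano-necessary-index}, because those users and their interfering message sets coincide in $\mathcal{I}_{F}(p)$ and $\mathcal{I}_{nF}(p)$ and none of those calculations invoke the characteristic. Two verifications change: user $u_{n}$ now has $B_{n}=\emptyset$, so decoding reduces to the trivial condition $\boldsymbol{H}_{p}^{\{n\}}\neq\boldsymbol{0}$; and each $u_{n-j}, j\in[p+1]$, has interfering set $[p+2:2p+2]\setminus\{n-j\}$, so decoding requires the $p+1$ columns $\boldsymbol{H}_{p}^{[p+2:2p+2]}$ to be linearly independent. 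Lemma~\ref{lem:matrix-F-(p+1)-p} supplies exactly this independence over fields of characteristic other than $p$, finishing sufficiency.

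For necessity I would follow the reduction of Proposition~\ref{prop:I-fano-necessary} step by step. The same arguments establish that $[p+1]$ is a basis set of $\boldsymbol{H}$, that $\mathrm{col}(\boldsymbol{H}^{\{z(l,j)\}})=\mathrm{col}(\boldsymbol{H}^{\{j\}})$ by Lemma~\ref{lem:MAIS2}, that each $([p+1]\setminus\{j\})\cup\{n-j\}$ is a circuit set of $\boldsymbol{H}$ via Lemma~\ref{lem:independent-cycle} applied to the minimal cycle $Z_{l}\setminus\{z(l,l)\}$ with absorber $n-l$, and that each triple $\{i,n-i,n\}$ is a circuit set of $\boldsymbol{H}$ via Lemma~\ref{lem:i-n-i-n} applied to $M_{1}=Z^{\prime}\cup\bigcup_{l} Z_{l}^{\prime\prime}$ and $M_{2}=\{z(l,l),n-l,n\}$; all these pieces carry over unchanged since the participating users have identical interfering sets in $\mathcal{I}_{F}(p)$ and $\mathcal{I}_{nF}(p)$. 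The genuinely new step is proving that $N_{n}=[p+2:2p+2]$ is a \emph{basis} set of $\boldsymbol{H}$ rather than a circuit set: the altered interfering sets $B_{n-j}=[p+2:2p+2]\setminus\{n-j\}$ make $[p+2:2p+2]$ an independent set of $\mathcal{I}_{nF}(p)$, so Lemma~\ref{lem:MAIS1} forces $\mathrm{rank}(\boldsymbol{H}^{[p+2:2p+2]})=(p+1)t$, which combined with $\mathrm{rank}(\boldsymbol{H})=(p+1)t$ makes it a basis set. These pieces then assemble into a $(t)$-linear representation of $\mathcal{N}_{nF}(p)$, and Theorem~\ref{thm:p-non-Fano-matroid} completes the necessity direction.

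The main obstacle I anticipate is bookkeeping rather than a fresh idea: when the role of $[p+2:2p+2]$ flips from circuit (in $\mathcal{I}_{F}(p)$) to basis (in $\mathcal{I}_{nF}(p)$), I must confirm that the previously derived $\{i,n-i,n\}$ circuit constraints remain compatible with this new basis status, and that the reductions through $u_{z^{\prime}(l)}$ and $u_{z^{\prime\prime}(l,j)}$ still apply despite the changed neighborhoods of $u_{n}$ and $u_{n-j}$. Once this consistency is checked, the proof reduces to a direct port of the $p$-Fano argument with the single substitution ``$N_{n}$ circuit'' $\mapsto$ ``$N_{n}$ basis''.
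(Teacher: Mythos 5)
Your proposal is correct and follows essentially the same route as the paper: sufficiency via the explicit scalar encoder $\boldsymbol{H}_{p}$ of Figure~\ref{fig:H-class-fano-nonfano} together with Lemma~\ref{lem:matrix-F-(p+1)-p} applied to the users indexed by $[p+2:2p+3]$ (the paper's Proposition~\ref{prop:I-non-Fano-necessary-index}), and necessity by porting the $p$-Fano reduction and replacing the circuit status of $[p+2:2p+2]$ with independence via Lemma~\ref{lem:MAIS1}, then invoking Theorem~\ref{thm:p-non-Fano-matroid} (the paper's Proposition~\ref{prop:I-non-fano-necessary}). The consistency check you flag is exactly the content of Proposition~\ref{prop:non-Fano-necessary}, so no new argument is needed beyond what the paper already supplies.
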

\begin{proof}
The proof can be concluded from Propositions  \ref{prop:I-non-Fano-necessary-index} and \ref{prop:I-non-fano-necessary}.
\end{proof}

\begin{prop} \label{prop:I-non-Fano-necessary-index}
For the class $p$-non-Fano index coding instance, there exists an optimal scalar linear code $(t=1)$ over the field with any characteristic other than characteristic $p$.
\end{prop}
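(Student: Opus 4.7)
The plan is to reuse the same scalar encoding matrix $\boldsymbol{H}_p \in \mathbb{F}_q^{(p+1)\times m}$ (Figure \ref{fig:H-class-fano-nonfano}) constructed in the proof of Proposition \ref{prop:Fano-necessary-index}, and show that it realizes the broadcast rate $p+1$ for $\mathcal{I}_{nF}(p)$ whenever $\mathbb{F}_q$ has characteristic other than $p$. Since the interfering sets $B_i$ for $i\in [p+1]$ coincide with those of $\mathcal{I}_F(p)$, the set $[p+1]$ remains an independent set of $\mathcal{I}_{nF}(p)$, giving $\lambda_q(\mathcal{I}_{nF}(p)) \geq \beta(\mathcal{I}_{nF}(p)) \geq \beta_{\text{MAIS}}(\mathcal{I}_{nF}(p)) \geq p+1$ by \eqref{eq:MAIS-linear-coding}. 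Exhibiting a valid scalar code of rate $p+1$ therefore suffices.

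The decoding verification splits naturally into two groups. The users $u_i$ for $i \in [p+1]$ and the users $u_{z(l,j)}$, $u_{z^{\prime}(l)}$, $u_{z^{\prime\prime}(l,j)}$ all have the same interfering sets $B_i$ as in $\mathcal{I}_F(p)$, and the arguments used in Proposition \ref{prop:Fano-necessary-index} for these users are purely row-reduction manipulations on the standard-basis columns of $\boldsymbol{H}_p$ which hold over any field, so I would simply invoke them. The only users whose $B_i$ actually change from the Fano to the non-Fano instance are those indexed by $[p+2{:}2p+2]\cup\{n\}$, and these are precisely where the characteristic hypothesis enters.

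For user $u_n$ with $B_n = \emptyset$, the decoding condition \eqref{eq:dec-cond} reduces to $\mathrm{rank}(\boldsymbol{H}_p^{\{n\}}) = 1$, which holds immediately because $\boldsymbol{H}_p^{\{n\}}$ is the all-ones column. For each $u_{n-j}$ with $B_{n-j} = [p+2{:}2p+2]\backslash \{n-j\}$, I would invoke Lemma \ref{lem:matrix-F-(p+1)-p}: over a field of characteristic $\neq p$ the block $\boldsymbol{H}_p^{[p+2{:}2p+2]}$ has full rank $p+1$, so any $p$ of its columns are linearly independent and the omitted column $\boldsymbol{H}_p^{\{n-j\}}$ lies outside their span, giving $\mathrm{rank}(\boldsymbol{H}_p^{\{n-j\}\cup B_{n-j}}) = \mathrm{rank}(\boldsymbol{H}_p^{B_{n-j}}) + 1$ as required. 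The main conceptual point, rather than a technical obstacle, is recognising that the single place where the Fano-case proof invoked the characteristic-$p$ hypothesis (Remark \ref{rem: decoding-user-n}, used to decode $u_n$) is exactly inverted here: in the non-Fano setting $u_n$ trivialises because its interference set is empty, while the decoding of the users $u_{n-j}$ now rides on the complementary half of Lemma \ref{lem:matrix-F-(p+1)-p}, which asserts full rank of the same block precisely when the characteristic is not $p$.
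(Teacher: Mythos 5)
Your proposal is correct and follows essentially the same route as the paper: reuse the matrix $\boldsymbol{H}_p$ of Figure \ref{fig:H-class-fano-nonfano}, observe that the users with unchanged interfering sets are handled by characteristic-independent arguments from Proposition \ref{prop:Fano-necessary-index}, dispose of $u_n$ trivially since $B_n=\emptyset$, and settle the users $u_{n-j}$, $j\in[p+1]$, by the full-rank half of Lemma \ref{lem:matrix-F-(p+1)-p} over fields of characteristic other than $p$. Your explicit restatement of the MAIS lower bound and your remark that the characteristic hypothesis enters only through that lemma are slightly more careful than the paper's wording, but the argument is the same.
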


\begin{proof}
We show that the encoder matrix shown in Figure \ref{fig:H-class-fano-nonfano} can satisfy all the users of $\mathcal{I}_{nF}(p)$. Since the interfering message sets $B_{i}, i\in [m]\backslash [p+2:2p+3]$ are the same as the ones in $\mathcal{I}_{F}(p)$, they will be satisfied by the encoder matrix shown in Figure \ref{fig:H-class-fano-nonfano}. Since for each user $u_{i}, i\in [p+2:2p+2]$,
\begin{align}
    \boldsymbol{H}^{\{i\}\cup B_{i}}=\boldsymbol{H}^{[p+2:2p+2]},
\end{align}
and according to Lemma \ref{lem:matrix-F-(p+1)-p}, the submatrix $\boldsymbol{H}^{[p+2:2p+2]}$ is full-rank and invertible over the fields with any characteristic other than characteristic $p$, the column $\boldsymbol{H}^{\{i\}}$ is linearly independent of the space spanned by the columns of $\boldsymbol{H}^{B_{i}}$. Thus, the decoding condition in \eqref{eq:dec-cond} will be satisfied for all users $u_{i}, i\in [p+2:2p+2]$. Moreover, it can be easily seen that the decoding condition will be satisfied for user $u_{2p+3}$, as $B_{2p+3}=\emptyset$. This completes the proof.
\end{proof}

\begin{prop} \label{prop:I-non-fano-necessary}
Matrix $\boldsymbol{H}\in \mathbb{F}_{q}^{(p+1)t\times mt}$ is an encoding matrix for index coding instance $\mathcal{I}_{nF}(p)$ only if its submatrix $\boldsymbol{H}^{[p+1]}$ is a linear representation of matroid instance $\mathcal{N}_{nF}(p)$.
\end{prop}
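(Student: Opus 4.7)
The approach is to mimic the proof of Proposition \ref{prop:I-fano-necessary}, exploiting the fact that the interfering message sets of $\mathcal{I}_{nF}(p)$ agree with those of $\mathcal{I}_F(p)$ for every user outside $[p+2:2p+2]\cup\{n\}$, while the matroids $\mathcal{N}_{nF}(p)$ and $\mathcal{N}_F(p)$ differ only in the role assigned to $N_n = [p+2:2p+2]$. Consequently, most of the rank constraints required on $\boldsymbol{H}^{[n]}$ can be obtained by re-using the earlier chain of arguments essentially verbatim; the only new ingredient is to show that $[p+2:2p+2]$ now becomes a basis set of $\boldsymbol{H}$ rather than a circuit set.

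First I would verify that the MAIS bound again gives $\mathrm{rank}(\boldsymbol{H}) = (p+1)t$ and, via Lemma \ref{lem:MAIS1} applied to the independent set $[p+1]$ of $\mathcal{I}_{nF}(p)$, conclude that $N_0 = [p+1]$ is a basis set of $\boldsymbol{H}$ and that $\mathrm{col}(\boldsymbol{H}^{\{j\}})\subseteq \mathrm{col}(\boldsymbol{H}^{[p+1]})$ for every $j\in[m]\backslash [p+1]$. Then, since the interfering sets $B_{i}$ for $i\in[p+1]$, $i\in Z_l$, and $i\in Z^{\prime}\cup Z_l^{\prime\prime}$ are identical to those in $\mathcal{I}_F(p)$, the remaining circuit relations listed in \eqref{eq:nonFano-circuit-N_i} follow from the same chain of deductions as in Proposition \ref{prop:I-fano-necessary}: Lemma \ref{lem:MAIS2} forces $\mathrm{col}(\boldsymbol{H}^{\{z(l,j)\}})=\mathrm{col}(\boldsymbol{H}^{\{j\}})$, Lemma \ref{lem:independent-cycle} promotes each $(Z_l\backslash\{z(l,l)\})\cup\{n-l\}$ and hence $([p+1]\backslash\{l\})\cup\{n-l\}$ into a circuit set of $\boldsymbol{H}$, and Lemma \ref{lem:i-n-i-n} (with suitable subsets of $Z^{\prime}\cup Z_l^{\prime\prime}$ as $M_1$ and $\{z(l,l), n-l, n\}$ as $M_2$) forces each $\{l, n-l, n\}$, $l\in[p+1]$, to be a circuit set of $\boldsymbol{H}$.

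The genuinely new step, and the place where $\mathcal{I}_{nF}(p)$ departs from $\mathcal{I}_F(p)$, is showing that $N_n = [p+2:2p+2]$ is a basis set of $\boldsymbol{H}$. This is immediate once one notices that the revised interfering sets $B_{n-j} = [p+2:2p+2]\backslash \{n-j\}$ for $j\in[p+1]$ make $[p+2:2p+2]$ into an independent, and therefore acyclic, set of $\mathcal{I}_{nF}(p)$; Lemma \ref{lem:MAIS1} then gives $\mathrm{rank}(\boldsymbol{H}^{[p+2:2p+2]}) = (p+1)t$, which together with the total rank bound forces $[p+2:2p+2]$ to be a basis set of $\boldsymbol{H}$. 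I expect the main obstacle, rather than this basis argument itself, to be the bookkeeping of confirming that every deduction imported from the Fano proof truly relies only on $B_i$'s that are unchanged in $\mathcal{I}_{nF}(p)$ (in particular, that the applications of Lemma \ref{lem:i-n-i-n} never invoke $B_i$ for $i\in[p+2:2p+2]\cup\{n\}$); once this audit is complete, the full rank profile of $\mathcal{N}_{nF}(p)$ on $\boldsymbol{H}^{[n]}$ has been established.
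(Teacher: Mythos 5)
Your proposal is correct and follows essentially the same route as the paper: both reuse the Proposition \ref{prop:I-fano-necessary} argument verbatim for the basis set $N_0=[p+1]$ and the circuit sets (since the relevant $B_i$ are unchanged), and both handle the single new ingredient by observing that $[p+2:2p+2]$ is an independent set of $\mathcal{I}_{nF}(p)$ and invoking Lemma \ref{lem:MAIS1} to make it an independent (hence, with $\mathrm{rank}(\boldsymbol{H})=(p+1)t$, basis) set of $\boldsymbol{H}$. Your version is in fact slightly more careful than the paper's, which states the reused portion in one sentence without the audit you flag.
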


\begin{proof}
Since the interfering message sets $B_{i}, i\in [m]\backslash (\{n-j, j\in [p+1]\}\cup \{n\}$ are the same as the ones in $\mathcal{I}_{F}(p)$, it can be proved that set $N_{0}=[p+1]$ is a basis set of $\boldsymbol{H}$ and each set $N_{i}, i\in [p+1]$ in \eqref{eq:Fano-circuit-N_i} is a circuit set of $\boldsymbol{H}$. Now, from \eqref{eq:Bi-non-Fano-[p+2:2p+2]}, it can be seen that set $\{n-j, j\in [p+1]\}$ is an independent set of $\mathcal{I}_{nF}(p)$. Thus, according to Lemma \ref{lem:MAIS1}, set $N_{n}=\{n-j, j\in [p+1]\}=[p+2:2p+2]$ must be an independent set of $\boldsymbol{H}$. This completes the proof.
\end{proof}

\section{Conclusion} \label{sec:conclusion}

Matroid theory is intrinsically linked to index coding and network coding problems. Indeed, the dependency of linear index coding and network coding rates on the characteristic of a field has been illustrated through two notable matroid instances, the Fano and non-Fano matroids. This relationship highlights the limitations of linear coding, a key theorem in both index coding and network coding domains. Specifically, the Fano matroid can only be linearly represented in fields with characteristic two, while the non-Fano matroid is linearly representable exclusively in fields with odd characteristics.
For fields with any characteristic $p$, the Fano and non-Fano matroids have been extended into new classes whose linear representations rely on fields with characteristic $p$. Despite this extension, these matroids have not received significant recognition within the fields of network coding and index coding.
In this paper, we first presented these matroids in a more accessible manner. Subsequently, we proposed an independent alternative proof, primarily utilizing matrix manipulation instead of delving into the intricate details of number theory and matroid theory. This novel proof demonstrates that while the class $p$-Fano matroid instances can only be linearly represented in fields with characteristic $p$, the class $p$-non-Fano instances are representable in fields with any characteristic except $p$.
Finally, following the properties of the class $p$-Fano and $p$-non-Fano matroid instances, we characterized two new classes of index coding instances, namely the class $p$-Fano and $p$-non-Fano index coding, each comprising $p^2 + 4p + 3$ users. We demonstrated that the broadcast rate for the class $p$-Fano index coding instances can be achieved through linear coding only in fields with characteristic $p$. In contrast, for the class $p$-non-Fano index coding instances, it was proved that the broadcast rate can be achieved through linear coding in fields with any characteristic except $p$.

\IEEEpeerreviewmaketitle
\bibliographystyle{IEEEtran}
	\bibliography{main}

\end{document}